\newtheorem{theorem}{Theorem}
 \newtheorem{claim}{Claim}
 \newtheorem{lemma}{Lemma}
 \newtheorem{corollary}{Corollary}
 \newtheorem{definition}{Definition}
\newcommand{\C}{\mathds{C}}
\newcommand{\N}{\mathds{N}}
\newcommand{\F}{\mathds{F}}
\newcommand{\R}{\mathbb{R}}
\newcommand{\eps}{\epsilon}
\newcommand{\samp}{\leftarrow} 
\newcommand{\trd}[2]{\norm{#1 - #2}_{Tr}} 
\newcommand{\supp}{\mathsf{supp}}
\newcommand{\zo}{\{0, 1\}}
\newcommand{\regi}{\mathsf{R}} 
\newcommand{\regis}[1]{\mathsf{R}_{\mathsf{#1}}} 
\newcommand{\adve}{\mathcal{A}}
\newcommand{\negl}{\mathsf{negl}}
\newcommand{\poly}{\mathsf{poly}}
\newcommand{\hyb}{\mathsf{Hyb}}
\newcommand{\io}{i\mathcal{O}}
\newcommand{\Setup}{\mathsf{Setup}}
\newcommand{\Ver}{\mathsf{Ver}}
\newcommand{\fsplit}{\mathsf{MallPunc}}
\newcommand{\extprob}{2^{-k}} 
\newcommand{\outtrue}{\mathsf{TRUE}}
\newcommand{\outfalse}{\mathsf{FALSE}}
\newcommand{\prf}{\mathsf{PRF}}
\newcommand{\keygen}{\mathsf{KeyGen}}
\newcommand{\ceval}{\mathsf{Eval}}
\newcommand{\obfd}[1]{\hat{#1}}
\newcommand{\indtag}{0^{\ell_4(\lambda)}} 
\newcommand{\enc}{\mathsf{Enc}}
\newcommand{\dec}{\mathsf{Dec}}
\newcommand{\fail}{\bot}
\newcommand{\ext}{\mathsf{Ext}}
\newcommand{\game}{\mathsf{Game}}
\newcommand{\gamewp}[1]{\game_{#1}(1^\lambda)}
\newcommand{\codeof}[1]{\langle{#1}\rangle}
\newcommand{\cmoe}{c_{MoE}}
\newcommand{\copyprot}{\mathsf{QuantumProtect}}
\newcommand{\genstate}{\mathsf{GenState}}
\newcommand{\prot}{\mathsf{Protect}}
\newcommand{\qpeval}{\mathsf{Eval}}
\newcommand{\canonvec}[2]{\mathsf{Can}(#1, #2)}
\newcommand{\chal}{\mathsf{Chal}}
\newcommand{\sampchal}{\mathsf{SampCh}}
\newcommand{\sampinp}{\mathcal{D}_{\mathsf{inp}}}
\newcommand{\sampchalfrominp}{\mathsf{SampChFromInp}}
\newcommand{\punc}{\mathsf{Punc}}
\newcommand{\ptriv}{p^{\mathcal{F}}_{triv}}
\newcommand{\ti}{\mathsf{TI}}
\newcommand{\tip}[2]{\ti_{#2, #1}}
\newcommand{\ati}{\mathsf{ATI}}
\newcommand{\simati}{\mathsf{SimATI}}
\newcommand{\simatiwp}[6]{\simati^{#2, #1, #6}_{#3, #5}}
\newcommand{\simatiwpnop}[5]{\simati^{#2, #1, #5}_{#4}}
\newcommand{\atip}[4]{\ati^{#4, #2}_{#3, #1}}
\newcommand{\funckey}{k}
\newcommand{\cscheme}{\mathcal{CS}}
\newcommand{\csplit}{C_{\mathsf{punc}}}
\newcommand{\cpred}{Q_{\mathsf{rel}}}
\newcommand{\ace}{\mathsf{ACE}}
\newcommand{\stegenc}{\mathsf{StegEnc}}
\newcommand{\stegdec}{\mathsf{StegDec}}
\newcommand{\acepunchidinggame}{\mathsf{PunctureHidingGame}_{\adve}(1^\lambda)}
\newcommand{\aceprct}{\mathsf{PRCiphertextGame}_{\adve}(1^\lambda)}
\newcommand{\acesteg}{\mathsf{StegCiphertextGame}_{\adve}(1^\lambda)}
\newcommand{\acectcount}{{\ell(\lambda)}}
\newcommand{\acemeslen}{{n(\lambda)}}
\newcommand{\acecirclen}{{c_{punc}(\lambda)}}
\newcommand{\acedistsize}{{c_{samp}(\lambda)}}
\newcommand{\acestegbound}{{t_{steg}(\lambda)}}
\newcommand{\genek}{{\mathsf{GenEK}}}
\newcommand{\gendk}{{\mathsf{GenDK}}}
\newcommand{\aceselen}{{\mathsf{se(\lambda)}}}
\newcommand{\acelimit}{{t_{limit}}}
\newcommand{\aceminent}{{k(\lambda)}}
\newcommand{\cprotectgame}{\mathsf{AntiPiracyGame}}
\newcommand{\cprotectgamewp}{\cprotectgame_{\copyprot,\adve}(1^\lambda)}
\newcommand{\strongcprotectgame}{\mathsf{StrongAntiPiracyGame}}
\newcommand{\strongcprotectgamewp}{\strongcprotectgame^{\gamma(\lambda)}_{\copyprot,\adve}(1^\lambda)}
\newcommand{\secgame}{\mathsf{SecurityGame}}
\newcommand{\secgamewp}[1]{\secgame_{\adve}(1^\lambda)}
\newcommand{\puncgame}{\mathsf{MalleablePuncturingGame}}
\newcommand{\puncgamewp}[1]{\puncgame_{#1}(1^\lambda)}
\title{How to Copy-Protect Malleable-Puncturable Cryptographic Functionalities Under Arbitrary Challenge Distributions}
\date{}
 \author{Alper \c{C}akan\footnote{Part of this work was done while the author was an intern at NTT Research.}\\Carnegie Mellon University \\ \texttt{alpercakan98@gmail.com}  \and Vipul Goyal\\NTT Research \& Carnegie Mellon University\\ \texttt{vipul@vipulgoyal.org }}
\begin{document}	
\maketitle
\begin{abstract}
A quantum copy-protection scheme (Aaronson, CCC'09) encodes a functionality into a quantum state such that given this state, no efficient adversary can create two (possibly entangled) quantum states that are both capable of running the functionality. There has been a recent line of works on constructing provably-secure copy-protection schemes for general classes of schemes in the plain model, and most recently the recent work of \c{C}akan and Goyal (IACR Eprint, 2025) showed how to copy-protect all cryptographically puncturable schemes with pseudorandom puncturing points.

In this work, we show how to copy-protect even a larger class of schemes. We define a class of cryptographic schemes called \emph{malleable-puncturable} schemes where the only requirement is that one can create a circuit that is capable of answering inputs at points that are \emph{unrelated} to the challenge in the security game but does not help the adversary answer inputs \emph{related} to the challenge. This is a flexible generalization of puncturable schemes, and can capture a wide range of primitives that was not known how to copy-protect prior to our work.

Going further, we show that our scheme is secure against arbitrary high min-entropy challenge distributions whereas previous work has only considered schemes that are punctured at pseudorandom points.
\end{abstract}

\newpage
\tableofcontents
\newpage

\section{Introduction}
Starting with the seminal work of Wiesner \cite{Wie83}, a long line of research has shown that quantum information can be extremely useful in cryptography.  One of the main areas where quantum information helps cryptography is achieving a task that is classically impossible task no matter what cryptographic assumptions are made. This line of applications of quantum information is also called \emph{quantum protection}. This notion can further be divided into four main categories: \emph{copy-protection} (\cite{Aar09}), \emph{secure leasing} (\cite{EC:AnaLaP21}), \emph{unbounded/LOCC leakage-resilience} (\cite{TCC:CGLR24}) and \emph{intrusion-detection} (\cite{TCC:CGLR24}).

A recent line of works (\cite{STOC:ColGun24, C:AnaBeh24}) have focused on constructing quantum copy-protection schemes for general classes of functionalities,  culminating in the recent work of \c{C}akan and Goyal (\cite{ccakan2025copy}) who showed how to copy-protect in the plain model all cryptographically puncturable schemes with pseudorandom puncturing points. In this work, we further advance the class of functionalities that can be copy-protected: We define a class of cryptographic schemes called \emph{malleable-puncturable} schemes where the only requirement is that one can create a circuit that is capable of answering inputs at points that are \emph{unrelated} to the challenge in the security game but does not help the adversary answer inputs \emph{related} to the challenge. This is an extremely flexible generalization of puncturable schemes, and can capture a wide range of primitives that was not known how to copy-protect prior to our work.

\begin{theorem}[\cref{thm:main}, informal]
For any functionality $\mathsf{F}$ that satisfies \emph{malleable-puncturing} security (\cref{def:punc}), there exists a quantum copy-protection scheme for $\mathsf{F}$ secure against arbitrary high min-entropy challenge distributions; assuming subexponentially secure iO and one-way functions.
\end{theorem}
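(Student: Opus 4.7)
The plan is to follow the template of Çakan–Goyal and prior copy-protection constructions, but with two significant upgrades: the punctured program is replaced by a \emph{malleable}-punctured program (so that its output distribution, not its functionality, is what we exploit in the reduction), and the challenge is drawn from an arbitrary high min-entropy source rather than a pseudorandom one. Concretely, I would construct $\copyprot$ by bundling a monogamy-of-entanglement gadget (coset states, in the $\ket{A_{s,s'}}$ flavor) together with an iO-obfuscated evaluation program that (i) takes as input a classical pair $(v, x)$, where $v$ is purported membership evidence for a coset, (ii) verifies membership using hardwired hidden subspaces, and (iii) if verification passes, runs $\mathsf{F}$ on input $x$ using the hardwired secret. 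The evaluator runs $\mathsf{F}$ on input $x$ by measuring the coset state in the appropriate basis (primal or dual, depending on $x$) to extract $v$ and feeding $(v,x)$ to the obfuscated program.

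Next I would carry out the security reduction through a sequence of hybrids. First I would use iO and the punctured-programming paradigm to switch the hardwired program to a \emph{malleable-punctured} version $\csplit$ that is functionally equivalent on inputs unrelated to the challenge $c$ and that reveals no help on inputs related to $c$; this is precisely what \cref{def:punc} provides. Once the program is malleable-punctured, I would then invoke the monogamy-of-entanglement property of coset states: if both of the two adversaries $\adve_1$ and $\adve_2$ can, with non-negligible advantage, answer related-to-$c$ queries, then we can use their answers together with the punctured program $\csplit$ to extract coset membership witnesses in both primal and dual bases simultaneously, contradicting monogamy. The reduction will need to be wrapped in a threshold/ATI argument (along the lines of $\ati$ and $\simati$ from the macro list) to turn the distributional advantage into a well-defined extraction.

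The delicate step — and where I expect most of the technical work to live — is lifting the argument to \emph{arbitrary high min-entropy} challenge distributions rather than pseudorandom ones. Previous work (in particular \cite{ccakan2025copy}) crucially uses pseudorandomness of $c$ to replace the real challenge with a fresh random value inside an indistinguishability hybrid. Without this, I plan to use subexponentially secure iO together with a complexity-leveraging/guessing reduction: since $c$ has min-entropy $k(\lambda)$, an exponential-in-$k$ guess followed by a subexponential iO hybrid still yields an overall negligible loss provided the iO and OWF security parameters are chosen appropriately. More concretely, I would (a) fix a min-entropy source $\sampchal$, (b) hybridize by iO-obliviously hardwiring a guess $c^\star$ of the challenge into the malleable-puncturing circuit, (c) use average-conditional min-entropy bounds ($\avghmin$) to argue that, conditioned on the transcript of the reduction, $c$ still has high min-entropy, and (d) amortize the $2^{-k}$ guessing loss against the subexponential iO advantage.

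Finally, I would conclude by composing the monogamy-based extraction with the malleable-puncturing property to derive a contradiction: two simultaneously successful pirate evaluators would let us construct an adversary in the game $\puncgamewp{\cdot}$ whose advantage exceeds the bound guaranteed by \cref{def:punc}. The main obstacle, as noted, is ensuring that the min-entropy-based hybrid chain remains tight enough that the subexponential iO and coset-state monogamy bounds still dominate; I expect this to require a careful setting of parameters $\ell_4(\lambda)$, $k(\lambda)$, and the iO security parameter, and a delicate use of $\avghmin$ chain rules inside each hybrid transition.
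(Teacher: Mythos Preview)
Your high-level architecture (coset states, iO-wrapped evaluation, MoE, and an $\ati$/$\simati$ layer) matches the paper, but your plan for the central novelty---handling an \emph{arbitrary} high-min-entropy challenge distribution---is complexity leveraging, and that is not what the paper does, nor does it close the actual gap. Guessing $c^\star$ and absorbing a $2^{-k}$ loss against subexponential iO only lets you \emph{puncture} the program at $c^\star$; it gives you no mechanism to convert a pirate's correct functionality answer into a canonical coset vector, and no way to deliver the subspace description $A$ to the second-stage adversaries as required by \cref{lem:moelast}. The sentence ``use their answers together with $\csplit$ to extract coset membership witnesses'' is exactly where the work lives, and you have left it as a black box.

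The paper's route is structurally different. It introduces a \emph{steganographic} asymmetrically constrained encapsulation scheme whose $\stegenc$ produces a ciphertext that is indistinguishable from a fresh draw of $\sampinp(st)$ yet secretly encodes $A$, a fresh extractor seed $se$, and a one-time pad. The obfuscated program is modified (hybrids $\hyb_4$--$\hyb_{12}$) so that on inputs which ACE-decode to such a payload, its output depends on $\ext(se,\canonvec{A}{v})$ (or the dual analogue). The reduction then compares the pirate's $\simati$ success on challenges carrying a seed correlated with $a_1^{Can}$ versus uncorrelated ones; the resulting distinguishing gap is fed to a \emph{quantum-proof reconstructive extractor} (\cref{sec:ext}), which outputs $a_1^{Can}$ (respectively $a_2^{Can}$) with probability $2^{-k}$, and this is then beaten by the $2^{-\cmoe\cdot d}$ MoE bound by choosing $d$ large. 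So the min-entropy is exploited via steganography (to plant $A$ and $se$ inside the challenge without perturbing its distribution) plus the reconstructive extractor (to turn distinguishing advantage into the coset representative)---not via guessing. Neither ingredient appears in your plan, and complexity leveraging alone does not substitute for them.
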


\section{Preliminaries}
\subsection{Notation}
When $S$ is a set, we write $x \samp S$ to mean that $x$ is sampled uniformly at random from $S$. When $\mathcal{D}$ is a distribution, we write $x \samp \mathcal{D}$ to mean that $x$ is sampled from $\mathcal{D}$. Finally, we write $x \samp \mathcal{B}(\regi)$ or $\mathcal{B}(a)$ to mean that $x$ is set to a sample as output by the quantum or classical randomized algorithm $\mathcal{B}$ run on $\regi$ or $a$. We use similar notation for quantum registers, e.g., $\regi \samp \mathcal{B}(a)$.

We write \emph{QPT} to mean a stateful (unless otherwise specified) \emph{quantum polynomial time}. We write $\regi$ to mean a quantum register, which keeps a quantum state that will evolve when the register is acted on, and it can be entangled with other registers. We write $\regi \samp \rho$ to mean that the register $\regi$ is initialized with a sample from the quantum distribution (i.e. mixed state) $\rho$. We refer the reader to \cite{Nielsen_Chuang_2010} for a preliminary on quantum information and computation.

Unless otherwise specified, all of our cryptographical assumptions are implicitly post-quantum, e.g., \emph{one-way functions} means \emph{post-quantum secure one-way functions}.

\subsection{Cryptography}
\subsubsection{Puncturable Pseudorandom Functions}\label{sec:prf}
In this section, we recall puncturable pseudorandom functions.
\begin{definition}[\cite{STOC:SahWat14}]
    A puncturable pseudorandom function (PRF) is a family of functions $\{F: \zo^{c(
\lambda)} \times \zo^{m(\lambda)} \to \zo^{n(\lambda)}\}_{\lambda \in \N^+}$ with the following associated efficient algorithms.
    \begin{itemize}
        \item $\prf.\mathsf{Setup}(1^\lambda):$ Takes in the unary representation of the security parameter and outputs a key $K$ in $\zo^{c(\lambda)}$.
        \item $\prf.\ceval(K, x):$ Takes in a key $K$ and an input $x$, outputs an evaluation of the PRF.
        \item $\prf.\mathsf{Puncture}(K, S):$ Takes as input a key and a set $S \subseteq \zo^{m(\lambda)}$, outputs a punctured key.
    \end{itemize}
    
    We require the following.
    \paragraph{Correctness.} For all efficient distributions $\mathcal{D}(1^\lambda)$ over the power set $2^{\zo^{m(\lambda)}}$, we require
    \begin{equation*}
        \Pr[\forall x \not\in S~~\prf.\ceval(K_S, x) = F(K, x): \begin{array}{c}
              S \samp \mathcal{D}(1^\lambda) \\
              K \samp \prf.\Setup(1^\lambda) \\
              K_S \samp \prf.\mathsf{Puncture}(K, S)
        \end{array}] = 1.
    \end{equation*}
    \paragraph{Puncturing Security} We require that any stateful QPT adversary $\adve$ wins the following game with probability at most $1/2 + \negl(\lambda)$.
    \begin{enumerate}
        \item $\adve$ outputs a set $S$.
        \item The challenger samples $K \samp \prf.\Setup(1^\lambda)$ and $ K_S \samp \prf.\mathsf{Puncture}(K, S)$
        \item The challenger samples $b \samp \zo$. If $b = 0$, the challenger submits $K_S, \{F(K, x)\}_{x \in S}$ to the adversary. Otherwise, it submits $K_S, \{y_s\}_{s \in S}$ to  the adversary where $y_s \samp \zo^{n(\lambda)}$ for all $s \in S$.
        \item The adversary outputs a guess $b'$ and we say that the adversary has won if $b' = b$.
    \end{enumerate}
\end{definition}

\begin{theorem}[\cite{STOC:SahWat14, FOCS:Zhandry12}]\label{thm:puncprfexists}
Let $n(\cdot), m(\cdot), e(\lambda), k(\lambda)$ be efficiently computable functions.
\begin{itemize}
    \item If (post-quantum) one-way functions exist, then there exists a (post-quantum) puncturable PRF with input space $\zo^{m(\lambda)}$ and output space $\zo^{n(\lambda)}$.

\item If we assume subexponentially-secure (post-quantum) one-way functions exist, then for any $c > 0$, there exists a (post-quantum) $2^{-\lambda^c}$-secure\footnote{While the original results are for negligible security against polynomial time adversaries, it is easy to see that they carry over to subexponential security. Further, by scaling the security parameter by a polynomial and simple input/output conversions, subexponentially secure (for any exponent $c'$) one-way functions is sufficient to construct for any $c$ a puncturable PRF that is $2^{-\lambda^c}$-secure.} puncturable PRF against subexponential time adversaries with input space $\zo^{m(\lambda)}$ and output space $\zo^{n(\lambda)}$.
\end{itemize}
\end{theorem}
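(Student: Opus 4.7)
The plan is to instantiate the construction via the Goldreich--Goldwasser--Micali (GGM) tree. Starting from a length-doubling pseudorandom generator $G\colon \zo^{\lambda} \to \zo^{2\lambda}$, which exists assuming (post-quantum) one-way functions by the HILL construction, write $G(s) = G_0(s) \,\|\, G_1(s)$. The PRF key is a seed $K \samp \zo^{\lambda}$ output by $\prf.\Setup(1^\lambda)$, and on input $x = x_1 \ldots x_{m(\lambda)}$ I define $\prf.\ceval(K, x) = H\!\left(G_{x_{m(\lambda)}}\!\left(\cdots G_{x_1}(K)\cdots\right)\right)$, where $H$ is an auxiliary length-adjusting PRG stretching $\lambda$ bits to $n(\lambda)$ bits. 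This handles $\Setup$ and $\ceval$ and works for arbitrary polynomial $m(\lambda), n(\lambda)$.

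For $\prf.\mathsf{Puncture}(K, S)$, I output the minimal set of GGM internal-node seeds whose subtrees partition $\zo^{m(\lambda)} \setminus S$. Since any efficient distribution $\mathcal{D}(1^\lambda)$ outputs an explicitly listed set $S$, we have $|S| \le \poly(\lambda)$, so the punctured key has size $O(|S| \cdot m(\lambda))$. Correctness on $\zo^{m(\lambda)} \setminus S$ is immediate: for each $x \notin S$ there is a unique subtree root in the punctured key on the path to $x$, and completing the GGM walk from that root reproduces the original evaluation.

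For puncturing security, I run the standard GGM hybrid argument, walking down the tree and replacing, one at a time, each internal PRG output used to derive some $\prf.\ceval(K, x^\ast)$ with $x^\ast \in S$ by a freshly uniform string; each step is justified by a single PRG invocation. After at most $O(|S| \cdot m(\lambda))$ hybrids, every value $F(K, x)$ with $x \in S$ is derived from an independent uniform seed and, after composition with $H$, is uniform and independent of the punctured key, yielding the $1/2 + \negl(\lambda)$ bound. Post-quantum soundness of this hybrid is exactly Zhandry's analysis~\cite{FOCS:Zhandry12}, which carries through unchanged because the adversary's queries in our game are classical.

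For the subexponential statement, I would scale the security parameter. If one-way functions are $2^{-\lambda^{c'}}$-secure against $2^{\lambda^{c'}}$-time adversaries, the HILL construction yields a PRG with security $2^{-\lambda^{c''}}$ for some $c'' > 0$, and the hybrid above loses only a $\poly(\lambda)$ factor per step over $\poly(\lambda)$ steps, so it remains $2^{-\lambda^{c''}/2}$-secure against subexponential-time adversaries for large enough $\lambda$. To reach an arbitrary target advantage $2^{-\lambda^c}$, instantiate the whole construction with internal parameter $\lambda' := \lceil \lambda^{2c/c''} \rceil$, which is still polynomial in $\lambda$. The only delicate point I anticipate — and really the main obstacle — is making this rescaling consistent with the externally specified input and output lengths $m(\lambda), n(\lambda)$: I keep them as given by plugging the $m(\lambda)$-bit input directly into the GGM tree indexed by the rescaled seed $K \in \zo^{\lambda'}$ and by using $H$ to stretch the leaf to $n(\lambda)$ bits, so the final PRF has the requested input/output profile while enjoying the desired subexponential security.
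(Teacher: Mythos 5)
Your proposal is correct and is essentially the argument behind the paper's citation: the paper does not prove this theorem itself but invokes the standard GGM-based puncturable PRF of \cite{STOC:SahWat14} with Zhandry's post-quantum analysis, and handles the subexponential case exactly as you do, by scaling the security parameter by a polynomial and adjusting input/output lengths (see the paper's footnote to the theorem). Nothing further is needed.
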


\subsubsection{Indistinguishability Obfuscation}
In this section, we introduce indistinguishability obfuscation.
\begin{definition}
    An indistinguishability obfuscation scheme $\io$ for a class of circuits $\mathcal{C} = \{\mathcal{C}_\lambda\}_\lambda$ satisfies the following.
    \paragraph{Correctness.} For all $\lambda, C \in \mathcal{C}_\lambda$ and inputs $x$,
    $\Pr[\Tilde{C}(x) = C(x): \Tilde{C} \samp \io(1^\lambda, C)] = 1$.

\end{definition}
    \paragraph{Security.} Let $\mathcal{B}$ be any QPT algorithm that outputs two circuits $C_0, C_1 \in \mathcal{C}$ of the same size, along with auxiliary information, such that $\Pr[\forall x ~ C_0(x)=C_1(x) : (C_0, C_1, \regis{aux}) \samp \mathcal{B}(1^\lambda)] \geq 1 - \negl(\lambda)$. Then, for any QPT adversary $\mathcal{A}$,
    \begin{align*}
      \bigg|&\Pr[\adve(\io(1^\lambda, C_0), \regis{aux}) = 1 :  (C_0, C_1, \regis{aux}) \samp \mathcal{B}(1^\lambda)] -\\ &\Pr[\adve(\io(1^\lambda, C_1), \regis{aux}) = 1 : (C_0, C_1, \regis{aux}) \samp \mathcal{B}(1^\lambda)]\bigg| \leq \negl(\lambda).  
    \end{align*}

\subsection{Quantum Information}
\begin{lemma}[Almost As Good As New Lemma \cite{aarlemma}, verbatim]\label{lem:gentlemes}
    Let $\rho$ be a mixed state acting on $\C^{d}$. Let $U$ be a unitary and $(\Pi_0, \Pi_1 = I - \Pi_0)$ be projectors all acting on $\C^{d} \otimes \C^{d'}$. We interpret $(U, \Pi_0, \Pi_1)$ as a measurement performed by appending an ancillary system of dimension $d'$ in the state $\ketbra{0}{0}$, applying $U$ and then performing the projective measurement $\Pi_0, \Pi_1$ on the larger system. Assuming that the outcome corresponding to $\Pi_0$ has probability $1 - \epsilon$, we have
    \begin{equation*}
        \trd{\rho}{\rho'} \leq \sqrt{\epsilon}
    \end{equation*}
    where $\rho'$ is the state after performing the measurement, undoing the unitary $U$ and tracing out the ancillary system.
\end{lemma}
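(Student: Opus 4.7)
The plan is to first handle the pure-state case via an explicit $2\times 2$ computation on the two-dimensional subspace spanned by the $\Pi_0$- and $\Pi_1$-components of the post-$U$ state, and then lift to general mixed $\rho$ by convexity of trace distance and Jensen's inequality. The key observation driving everything is that the only disturbance caused by a projective measurement comes from the cross terms between the $\Pi_0$- and $\Pi_1$-parts of the measured state, and those are controlled by the geometric mean of the two outcome probabilities.

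First I would fix a pure input $\ket{\psi} \in \C^d$ and define $\ket{\phi} := U(\ket{\psi}\otimes \ket{0})$, decomposing it as $\ket{\phi} = \alpha \ket{\phi_0} + \beta \ket{\phi_1}$, where $\ket{\phi_b}$ is the unit vector in the direction of $\Pi_b \ket{\phi}$. By hypothesis, $|\alpha|^2 = 1-\epsilon$ and $|\beta|^2 = \epsilon$. The unconditional post-measurement state is $\sigma' = |\alpha|^2 \ketbra{\phi_0}{\phi_0} + |\beta|^2 \ketbra{\phi_1}{\phi_1}$, which differs from $\sigma := \ketbra{\phi}{\phi}$ only in the off-diagonal terms $\alpha\beta^* \ketbra{\phi_0}{\phi_1} + \alpha^*\beta \ketbra{\phi_1}{\phi_0}$. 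This difference is a rank-two Hermitian operator on $\mathrm{span}(\ket{\phi_0}, \ket{\phi_1})$ with eigenvalues $\pm|\alpha\beta|$, so $\trd{\sigma}{\sigma'} = |\alpha\beta| = \sqrt{\epsilon(1-\epsilon)} \leq \sqrt{\epsilon}$. I would then invoke the contractivity of trace distance under the operations ``apply $U^\dagger$'' (a unitary, hence distance-preserving) and ``trace out the ancillary register'' (a CPTP map, hence distance-non-increasing). Since $\mathrm{Tr}_{d'}[U^\dagger \sigma U] = \ketbra{\psi}{\psi}$, this yields $\trd{\ketbra{\psi}{\psi}}{\rho'} \leq \sqrt{\epsilon}$ in the pure-state case.

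To lift to mixed $\rho$, I would write $\rho = \sum_i p_i \ketbra{\psi_i}{\psi_i}$ and set $\epsilon_i := \|\Pi_1 U(\ket{\psi_i}\otimes \ket{0})\|^2$, so that $\epsilon = \sum_i p_i \epsilon_i$ by linearity of the outcome probability and $\rho' = \sum_i p_i \rho'_i$ by linearity of the whole ``measure, undo $U$, trace out'' procedure. Convexity of trace distance followed by Jensen's inequality (concavity of $\sqrt{\cdot}$) then gives
\begin{equation*}
\trd{\rho}{\rho'} \;\leq\; \sum_i p_i \trd{\ketbra{\psi_i}{\psi_i}}{\rho'_i} \;\leq\; \sum_i p_i \sqrt{\epsilon_i} \;\leq\; \sqrt{\sum_i p_i \epsilon_i} \;=\; \sqrt{\epsilon},
\end{equation*}
which is the claimed bound.

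I do not expect any serious obstacle; the only point that requires care is that $\rho'$ is defined by \emph{performing} the measurement (averaging over both outcomes) rather than \emph{conditioning on} the $\Pi_0$ outcome, so the analysis must bound the disturbance of the unconditional post-measurement mixture rather than of a renormalized post-selected state. This is precisely what the rank-two cross-term computation in the pure-state step handles, and the rest of the argument is routine contractivity plus convexity.
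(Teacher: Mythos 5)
Your proof is correct. The paper imports this lemma verbatim from its reference without proving it, so there is nothing internal to compare against; your argument is the standard one for Aaronson's Almost As Good As New Lemma: the rank-two cross-term computation giving $\trd{\sigma}{\sigma'}=\sqrt{\epsilon(1-\epsilon)}$ for pure inputs (note this uses the $\tfrac12\norm{\cdot}_1$ normalization of trace distance, which is the convention under which the stated bound holds), followed by contractivity under $U^\dagger$ and the partial trace, and the lift to mixed states via joint convexity and concavity of the square root. You also correctly identified the one point needing care, namely that $\rho'$ is the unconditional (non-selective) post-measurement state rather than the renormalized $\Pi_0$-conditioned state.
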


\begin{theorem}[\cite{TCC:CakGoy24}]\label{lem:simulproj}
    Let $\rho$ be a bipartite state and $\Lambda = \{\Pi_1, \dots \}, \Lambda' = \{\Pi'_1, \dots \}$ be two projective measurements over each of these registers, respectively. Suppose \begin{equation*}
        \Tr{\Pi_1\otimes \Pi'_1 \rho} \geq 1 - \epsilon.
    \end{equation*}
    Let $M = \{M_i\}_{i \in \mathcal{I}}$ be a general measurement over the first register and fix any $i \in \mathcal{I}$. Let $\tau$ denote the post-measurement state of the second register after applying the measurement $M$ on the first register of $\rho$ and conditioned on obtaining outcome $i$. Let $p_i$ denote probability of outcome $i$, that is $p_i = \Tr{(M_i \otimes I)\rho(M_i^\dagger \otimes I)}$. Then, \begin{equation*}
                \Tr{\Pi'_1 \tau} \geq 1 - \frac{3\sqrt{\epsilon}}{2p_i}.
    \end{equation*}
\end{theorem}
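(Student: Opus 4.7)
The plan is to combine the Almost As Good As New lemma (\cref{lem:gentlemes}) with a post-selection argument. Viewing $\Pi_1 \otimes \Pi'_1$ as one outcome of a two-outcome projective measurement that by hypothesis succeeds with probability $\geq 1 - \epsilon$, \cref{lem:gentlemes} produces a state $\tilde{\rho}$ that lies in the image of $\Pi_1 \otimes \Pi'_1$ (and hence in the image of $I \otimes \Pi'_1$) and satisfies $\trd{\rho}{\tilde{\rho}} \leq \sqrt{\epsilon}$.

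The key observation is that $\tilde{\rho}$ behaves perfectly under any measurement on the first register: since $(I \otimes \Pi'_1)\, \tilde{\rho}\, (I \otimes \Pi'_1) = \tilde{\rho}$, applying $M_i \otimes I$ on both sides and tracing out the first register yields an operator automatically supported on the image of $\Pi'_1$. Hence the post-measurement state $\tilde{\tau}$ of the second register, obtained by applying $M$ on the first register of $\tilde{\rho}$ and conditioning on outcome $i$, satisfies $\Tr[\Pi'_1 \tilde{\tau}] = 1$.

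Next, I transfer this back to $\rho$. Unfolding the definition of $\tau$ and using cyclicity of the trace gives $p_i (1 - \Tr[\Pi'_1 \tau]) = \Tr[(M_i^\dagger M_i \otimes (I - \Pi'_1))\, \rho]$; the analogous quantity for $\tilde{\rho}$ vanishes by the previous paragraph, so the left-hand side equals $\Tr[(M_i^\dagger M_i \otimes (I - \Pi'_1))(\rho - \tilde{\rho})]$. Since $\{M_i\}$ is a POVM we have $M_i^\dagger M_i \leq I$, so $M_i^\dagger M_i \otimes (I - \Pi'_1)$ lies in $[0, I]$ in the positive semidefinite order, and the variational characterization of trace distance gives $|\Tr[O(\rho - \tilde{\rho})]| \leq \trd{\rho}{\tilde{\rho}} \leq \sqrt{\epsilon}$ for any such $O$. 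Thus $p_i(1 - \Tr[\Pi'_1 \tau]) \leq \sqrt{\epsilon}$, and dividing by $p_i$ yields the claim with room to spare on the constant $3/2$.

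The proof is essentially bookkeeping; the main step to get right is verifying that the $\Pi'_1$-support of $\tilde{\rho}$ on the second register transfers to the post-measurement state under the measurement on the first register. As an independent sanity check (and to confirm that the stated bound is not tight), one can obtain the even stronger conclusion $\Tr[\Pi'_1 \tau] \geq 1 - \epsilon/p_i$ without invoking \cref{lem:gentlemes}: the hypothesis and $\Pi_1 \leq I$ imply $\Tr[(I \otimes (I - \Pi'_1))\, \rho] \leq \epsilon$, and combining this with $M_i^\dagger M_i \otimes (I - \Pi'_1) \leq I \otimes (I - \Pi'_1)$ gives $p_i(1 - \Tr[\Pi'_1 \tau]) \leq \epsilon$ directly.
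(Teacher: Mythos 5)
Your proof is correct. Note that the paper does not actually prove this statement—it imports it verbatim from \cite{TCC:CakGoy24}—so the comparison is against the cited proof, which (judging from the $\sqrt{\epsilon}$ in the stated bound) follows your first, gentle-measurement route. Both of your arguments are sound, with one small caveat on the first: for $\tilde{\rho}$ to lie in the image of $\Pi_1\otimes\Pi'_1$ you need the variant of \cref{lem:gentlemes} in which $\tilde{\rho}$ is the post-measurement state \emph{conditioned on} the accepting outcome and renormalized, i.e.\ $\tilde{\rho} = (\Pi_1\otimes\Pi'_1)\rho(\Pi_1\otimes\Pi'_1)/\Tr[(\Pi_1\otimes\Pi'_1)\rho]$, not the unconditioned post-measurement state that the lemma as quoted describes; the conditioned variant with $\trd{\rho}{\tilde{\rho}}\leq\sqrt{\epsilon}$ is standard, and the rest of that argument (commuting $M_i\otimes I$ past $I\otimes\Pi'_1$, the identity $p_i(1-\Tr[\Pi'_1\tau]) = \Tr[(M_i^\dagger M_i\otimes(I-\Pi'_1))\rho]$, and the variational bound on $\Tr[O(\rho-\tilde{\rho})]$ for $0\leq O\leq I$) is exactly right. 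Your second argument is the better one and deserves to be the main proof rather than a sanity check: the chain $p_i(1-\Tr[\Pi'_1\tau]) = \Tr[(M_i^\dagger M_i\otimes(I-\Pi'_1))\rho] \leq \Tr[(I\otimes(I-\Pi'_1))\rho] \leq 1-\Tr[(\Pi_1\otimes\Pi'_1)\rho]\leq\epsilon$ uses only $M_i^\dagger M_i\leq I$ and positivity of $\rho$, avoids the gentle-measurement lemma entirely, and yields $\Tr[\Pi'_1\tau]\geq 1-\epsilon/p_i$, which strictly improves the stated $1-\tfrac{3\sqrt{\epsilon}}{2p_i}$ for all $\epsilon\leq 1$. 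Either way the theorem follows, with room to spare on the constant.
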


\begin{theorem}[Implementation Independence of Measurements on Bipartite States (\cite{TCC:CakGoy24})]\label{thm:impindep}
    Let $\Lambda = \{M_i\}_{i \in \mathcal{I}}, \Lambda' = \{E_i\}_{i \in \mathcal{I}}$ be two general measurements whose POVMs are equivalent, that is, $M_i^{\dagger}M_i = E_i^{\dagger}E_i$ for all $i \in \mathcal{I}$.

    Let $\rho$ be any bipartite state whose first register has the appropriate dimension for $\Lambda, \Lambda'$. Then, the post-measurement state of the second register conditioned on any outcome $i \in \mathcal{I}$ is the same when either $\Lambda$ or $\Lambda'$ is applied to the first register of $\rho$. That is, \begin{equation*}
        (\Tr\otimes I)\frac{(M_i\otimes I)\rho(M_i^\dagger\otimes I)}{\Tr{(M_i\otimes I)\rho(M_i^\dagger\otimes I)}} =  (\Tr\otimes I)\frac{(E_i\otimes I)\rho(E_i^\dagger\otimes I)}{\Tr{(E_i\otimes I)\rho(E_i^\dagger\otimes I)}}
    \end{equation*}
\end{theorem}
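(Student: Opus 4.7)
The plan is to reduce both sides of the asserted equality to a common expression that depends on $M_i$ only through the positive operator $M_i^\dagger M_i$ (and similarly on $E_i$ only through $E_i^\dagger E_i$), after which the hypothesis $M_i^\dagger M_i = E_i^\dagger E_i$ immediately finishes the job. The single technical ingredient I need is a one-register cyclicity property of the partial trace: for any operators $A, B$ on the first register and any bipartite state $\sigma$,
$$\mathrm{Tr}_1\bigl[(A \otimes I)\,\sigma\,(B \otimes I)\bigr] \;=\; \mathrm{Tr}_1\bigl[(BA \otimes I)\,\sigma\bigr].$$
This follows by expanding $\mathrm{Tr}_1$ in an orthonormal basis $\{\ket{j}\}$ of the first register: both sides collapse to $\sum_{j,k}(BA)_{kj}\,(\bra{j}\otimes I)\,\sigma\,(\ket{k}\otimes I)$, as one checks by substituting $\bra{j}A = \sum_{k} a_{jk}\bra{k}$ and $B\ket{j} = \sum_{k} b_{kj}\ket{k}$ on the left-hand side.

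Applying this identity with $A = M_i$ and $B = M_i^\dagger$ transforms the unnormalized reduced state on register $2$ into
$$\mathrm{Tr}_1\bigl[(M_i \otimes I)\,\rho\,(M_i^\dagger \otimes I)\bigr] \;=\; \mathrm{Tr}_1\bigl[(M_i^\dagger M_i \otimes I)\,\rho\bigr],$$
and symmetrically $\mathrm{Tr}_1[(E_i \otimes I)\rho(E_i^\dagger \otimes I)] = \mathrm{Tr}_1[(E_i^\dagger E_i \otimes I)\rho]$. Since $M_i^\dagger M_i = E_i^\dagger E_i$ by hypothesis, the two unnormalized states are \emph{identical}. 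Taking the full trace of either expression yields a common outcome probability $p_i = \mathrm{Tr}[(M_i^\dagger M_i \otimes I)\rho] = \mathrm{Tr}[(E_i^\dagger E_i \otimes I)\rho]$, so the normalized conditional states—well-defined precisely when this shared $p_i$ is positive—also agree. This is exactly the equation in the theorem statement.

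I do not anticipate any substantive obstacle: once the partial-cyclicity identity is established the theorem reduces to a direct substitution. For intuition (and as an alternative presentation), one can instead polar-decompose $M_i = U_i P_i$ and $E_i = V_i P_i$ with $P_i := (M_i^\dagger M_i)^{1/2}$ and $U_i, V_i$ partial isometries (extendable to unitaries on register $1$ in finite dimensions without changing $M_i$ or $E_i$). The unitary factors are then absorbed by the partial trace over register $1$—unitary conjugation on a traced-out register leaves the reduced state invariant—leaving the common expression $\mathrm{Tr}_1[(P_i \otimes I)\,\rho\,(P_i \otimes I)]$ in both cases. This makes the physical content transparent: the choice of Kraus-operator implementation of a given POVM element is a free unitary on the measurement-record system, and therefore cannot influence the post-measurement state on any disjoint register.
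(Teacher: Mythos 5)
Your proof is correct. Note that the paper does not prove this statement at all—it is imported verbatim as a cited result from [TCC:CakGoy24]—so there is no in-paper argument to compare against. Your central identity $\mathrm{Tr}_1[(A\otimes I)\sigma(B\otimes I)]=\mathrm{Tr}_1[(BA\otimes I)\sigma]$ (cyclicity of the partial trace with respect to operators supported on the traced-out register) is verified correctly by the basis expansion you give, and it immediately reduces both the unnormalized conditional state and the outcome probability to expressions depending only on $M_i^\dagger M_i=E_i^\dagger E_i$; the polar-decomposition presentation is an equally valid alternative, with the minor caveat (which you flag) that the partial isometry must be extended to a unitary before invoking invariance of the partial trace under unitary conjugation of the traced-out register.
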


\section{Definitional Work}
\subsection{Malleable-Puncturable Schemes}
We first recall the following template from \cite{ccakan2025copy} that captures most cryptographic schemes along with their security games.
\begin{definition}[Cryptographic Scheme with Security Game (\cite{ccakan2025copy})]\label{def:func}
A cryptographic scheme with security game is a tuple of efficient algorithms $(\ceval, \Ver, \chal, \sampchal)$, defined as follows.
\begin{itemize}
\item $\ceval$ is a (possibly quantum) algorithm that receives as input some key $k \in \mathcal{K}$ and some input $z \in \mathcal{Z}$, and outputs some $y \in \mathcal{Y}$,
    \item $\chal$ is an (possibly quantum) interactive algorithm that represents the setup of the security game such that at the end of the interaction with an adversary, it outputs a key $k \in \mathcal{K}$ and challenger state\footnote{This represents the information needed to sample a challenge. For example, this could be the challenge messages $m_0, m_1$ in a CPA security game for encryption.} $st$,
    \item $\sampchal$ is a classical input-output (possibly quantum) sampler that receives as input $st$, and outputs a challenge $ch$ (which will be sent to the adversary) and the answer key\footnote{This key  here represents the information that is needed to verify adversary's answer. For example, in a CPA security game, this will be the challenge bit $b$ where the challenge ciphertext $ct$ is an encryption of $m_b$.} $ak$,
    \item $\Ver$ is a (possibly quantum) verifier that receives as input the answer key $ak$ and an alleged answer $ans'$, and it outputs $\outtrue$ or $\outfalse$.
\end{itemize}

We define meaningfulness and security as below.

\paragraph{\underline{$\ptriv$-Security:}} For any QPT adversary $\adve$, we have $\Pr[\secgamewp{\adve} = 1] \leq \ptriv(\lambda) + \negl(\lambda)$ where the security game $\secgame$ is defined as follows\footnote{Here, $\ptriv$ denotes the baseline/trivial success probability. For example, for a chosen plaintext security game, we will have $\ptriv = 1/2$.}.
\paragraph{\underline{$\secgamewp{\adve}$}}
\begin{enumerate}
    \item $\chal(1^\lambda)$ and $\adve(1^\lambda)$ interact\footnote{During this interaction, $\adve$ may learn some information related to the secret $s$. For example, in a public-key encryption security game, $\adve$ will learn the public-key}, $\chal$ outputs $k$ and $st$.
    \item The challenger runs $ak,ch \samp \sampchal(st)$.
    \item The adversary $\adve$ receives $ch$ and outputs $ans'$.
    \item The challenger outputs $1$ if $\Ver(ak, ans') = \outtrue$, otherwise outputs $0$.
\end{enumerate}

\paragraph{\underline{Meaningfulness}:} There exists an efficient (possibly quantum) algorithm $\mathcal{B}$ such that $\Ver(ak, \mathcal{B}(k, ch)) = \outtrue$ with probability $1 - \negl(\lambda).$ where the values are sampled as in the experiment above.
\end{definition}

We now define the notion of malleable-puncturable schemes.

\begin{definition}[Malleable-Puncturable Scheme]\label{def:punc}
A malleable-puncturable scheme is a cryptographic scheme $\cscheme = (\ceval, \Ver, \chal, \sampchal)$ with the following additions and modifications.

\begin{itemize}
    \item The key $k$ is parsed as $(C, aux, \cpred)$ where $C$ is a classical circuit.
    \item $\ceval$ algorithm $\ceval(k, z)$ is equivalent to $\ceval^{C}(aux, z)$ ($C$ can be called possibly in superposition).
    \item $\sampchal$ consists of two steps: $\sampinp(st)$ which outputs $x$, and $\sampchalfrominp(st, x)$ which outputs  $ak, ch$.
    \item There is an additional (possibly quantum) efficient algorithm $\fsplit(st, x)$ which outputs $\csplit$ such that for any $x'$: if $\cpred(x') \neq x$ then $\csplit(x') = C(x)$, where $x \samp \sampinp(st).$
 \end{itemize}

\paragraph{\underline{Malleable-Puncturable $\ptriv$-Security:}}
For any QPT adversary $\adve$, we have $\Pr[\puncgamewp{\adve} = 1] \leq \ptriv(\lambda) + \negl(\lambda)$ where the security game $\puncgame$ is defined as follows.
\paragraph{\underline{$\puncgamewp{\adve}$}}
\begin{enumerate}
    \item \textbf{Setup Phase:} $\chal$ and $\adve$ interact, $\chal$ outputs $k = (C, aux, \cpred)$.
    \item The challenger runs $x \samp \sampinp(st)$.
    \item The challenger runs $ak, ch \samp \sampchalfrominp(st, x)$.
    \item The challenger runs $\csplit \samp \fsplit(st, x)$.
    \item  \textbf{Challenge Phase:} The adversary $\adve$ receives $(\csplit, \cpred), aux, ch, x$.
    \item The challenger outputs $1$ if $\Ver(ak, ans') = \outtrue$, otherwise outputs $0$.
\end{enumerate}

\paragraph{\underline{Weak Challenge Resampling Correctness:}}\footnote{Note that this is trivially true for public-key primitives} Any malleable-punctured key $\csplit, \cpred, aux$ can be used to sample from the challenge distribution $\sampchal(st)$.
\end{definition}

Finally, we also require that the distribution $\sampinp(st)$ has min-entropy $\lambda^c$ for some $c \in \R^+$, given the state of the adversary $\adve$ and $st$.

\subsection{Quantum Protection Definitions}\label{sec:qprotect}
We recall copy-protection security, introduced by \cite{Aar09}, which is also referred to as \emph{anti-piracy security} or \emph{unclonability security}. We use the particular formalization given by \cite{ccakan2025copy}.

\begin{definition}[Copy-Protection]\label{def:normal}
      A quantum protection scheme $\copyprot$ is said to satisfy copy-protection security for a cryptographic scheme $\cscheme = (\ceval, \Ver, \chal, \sampchal)$ (\cref{def:func}) if for any QPT adversary $\adve = (\adve_0, \adve_1 \adve_2)$, we have $\Pr[\cprotectgamewp = 1] \leq \ptriv(\lambda) + \negl(\lambda)$ where the security game $\cprotectgamewp$ is defined as follows.

\paragraph{\underline{$\cprotectgamewp$}}
\begin{enumerate}
\item Sample $pp, \regi' \samp \copyprot.\genstate(1^\lambda)$.
    \item $\chal(1^\lambda)$ and $\adve(1^\lambda)$ interact, $\chal$ outputs $k$ and $st$.
    \item Sample $cp \samp \copyprot.\prot(pp, k)$.
    \item Submit  $cp, \regi'$ to $\adve_0$.
    \item The adversary $\adve_0$ outputs two registers $\regi_1, \regi_2$.
\item The challenger runs $ak_1, ch_1 \samp \sampchal(st)$ and $ak_2, ch_2 \samp \sampchal(st)$.
\item $\adve_1$ receives $ch_1, \regi_1$, outputs $ans'_1$.
\item $\adve_2$ receives $ch_2, \regi_2$, outputs $ans'_2$.
    \item $b_{1,Ver} \samp \Ver(ak_1, ans'_1)$ and $b_{2,Ver} \samp \Ver(ak_2, ans'_2)$.
    \item The challenger outputs $1$ if $b_{1,Ver} = 1 \wedge b_{2,Ver}=1$, otherwise outputs $0$.
\end{enumerate}
\end{definition}
We also recall strong anti-piracy, which uses \emph{threshold implementations}. We refer the reader to \cite{C:ALLZZ21} and \cite{TCC:CakGoy24} for definition of threshold implementations. When we write $\tip{\ptriv + \gamma}{(\Ver, \mathcal{D})}$ for some distribution $\mathcal{D}$ (such as $\sampchal(st)$), we mean the threshold implementation for the following mixture of binary projective measurements: Sample $ak, ch \samp \mathcal{D}$ and apply the universal quantum circuit to execute the encoded circuit in $\regi$ on input $ch$ to obtain an outcome $ans'$, and run $\Ver(ak, ans')$ to obtain a bit. When it is clear from context, we will omit writing $\Ver$ and just denote the challenge distribution.
\begin{definition}[Strong Anti-Piracy]\label{def:strongap}
Let $\gamma$ be an inverse polynomial.
    A quantum protection scheme $\copyprot$ is said to satisfy strong copy-protection security if for any QPT adversary $\adve = \adve_0$, we have $\Pr[\strongcprotectgamewp = 1] \leq  \negl(\lambda)$ where the security game $\strongcprotectgamewp$ is defined as follows.

\paragraph{\underline{$\strongcprotectgamewp$}}
\begin{enumerate}
\item Sample $pp, \regi' \samp \copyprot.\genstate(1^\lambda)$.
    \item $\chal(1^\lambda)$ and $\adve(1^\lambda)$ interact, $\chal$ outputs $k$ and $st$.
    \item Sample $cp \samp \copyprot.\prot(pp, k)$.
    \item Submit  $cp, \regi'$ to $\adve_0$.
    \item The adversary $\adve_0$ outputs two registers $\regi_1, \regi_2$.
\item Apply the threshold implementation $\tip{\ptriv + \gamma}{(\Ver, \sampchal(st))}\otimes \tip{\ptriv + \gamma}{(\Ver,\sampchal(st))}$ to $\regi_1, \regi_2$.
    \item The challenger outputs $1$ if both measurements output $1$, otherwise outputs $0$.
\end{enumerate}
\end{definition}
As shown by \cite{C:CLLZ21}, a scheme that satisfies strong anti-piracy for any inverse polynomial $\gamma$ also satisfies regular anti-piracy (\cref{def:normal}).

\section{Technical Tools}
In this section, we recall some technical tools, mostly verbatim from \cite{ccakan2025copy}.

\subsection{Quantum Protection Properties of Coset States}
We first give the following lemma, which essentially helps separate the computational and the information-theoretic quantum protection properties of coset states.
\begin{lemma}\label{lem:shobased}
Consider the following (parametrized) game $\game_\adve^b(1^\lambda)$ (for $c \in \zo$) between a challenger and an adversary $\adve$.

\paragraph{\underline{$\game_\adve^c(1^\lambda)$}}
\begin{enumerate}
    \item The challenger samples a $\F_2^{d(\lambda)}$-coset as follows: Sample a subspace $A \leq \F_2^{d(\lambda)}$ and two elements $a_1, a_2 \in \F_2^{d(\lambda)}$ uniformly at random.
    \item The challenger initialize the register $\regi$ with the coset state $\ket{A_{a_1, a_2}} = \sum_{v \in A} (-1)^{\langle v, a_2\rangle} \ket{v + a_1}$.
    \item The challenger samples a superspace $B_1$ of $A$ of dimension $\frac{3d}{4}$, and a subspace $B_2$ of $A$ of dimension $\frac{d}{4}$ and elements  $z_1 \samp B_1$, $z_2 \samp B_2^\perp$. The challenger also sets $t=z_1+a_1$ and $t' = z_2+a_2$. 

\item The challenger samples $\obfd{M} \samp \io(M^c)$.
\begin{mdframed}
    {\bf \underline{$M^0(b, v)$}}
    
     {\bf Hardcoded: $A, a_1, a_2$}
    \begin{enumerate}[label=\arabic*.]
        \item If $b = 0$: Check if $v \in A + a_1$. If so, output $\outtrue$, otherwise output $\outfalse$, and terminate.
        \item If $b = 1$: Check if $v \in A^\perp + a_2$. If so, output $\outtrue$, otherwise output $\outfalse$.
    \end{enumerate}
    \end{mdframed}
    \begin{mdframed}
    {\bf \underline{$M^1(b, v)$}}
    
     {\bf Hardcoded: ${t, t', B_1,B_2}$}
    \begin{enumerate}[label=\arabic*.]
        \item If $b = 0$: Check if $v \in {B_1 + t}$. If so, output $\outtrue$, otherwise output $\outfalse$, and terminate.
        \item If $b = 1$: Check if $v \in {B_2^\perp + t'}$. If so, output $\outtrue$, otherwise output $\outfalse$.
    \end{enumerate}
    \end{mdframed}
    
    \item The challenger sends $\regi, A, B_1, B_2, t, t'$ to the adversary $\adve$.
    \item The adversary outputs a register $\regis{out}$.
    \item The challenger outputs $\regis{out}, A, a_1, a_2$.
\end{enumerate}
    
   For any QPT adversary $\adve$,
    $\trd{\game_\adve^0(1^\lambda)}{\game_\adve^1(1^\lambda)} \leq \negl(\lambda),$
    assuming the existence of indistinguishability obfuscation and injective one-way functions. 

    Assuming the existence of subexponentially secure indistinguishability obfuscation and injective one-way functions, there exists a constant\footnote{Note that this is a fixed constant that cannot be improved by assuming higher levels of security for iO or OWF.} $c > 0$ such that that for any quantum adversary that runs in time $2^{-(d(\lambda))^c}$,
    $\trd{\game_\adve^0(1^\lambda)}{\game_\adve^1(1^\lambda)} \leq 2^{-(d(\lambda))^c}.$
\end{lemma}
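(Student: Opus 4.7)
The plan is to prove indistinguishability by two successive applications of a \emph{subspace-hiding obfuscation} argument (in the style of Zhandry and \cite{C:CLLZ21}), bridging $M^0$ and $M^1$ via a single intermediate hybrid. Specifically, I would introduce a hybrid $\game^{1/2}$ in which the $b=0$ branch of the obfuscated circuit is switched from checking $v \in A + a_1$ to checking $v \in B_1 + t$ (with $B_1$, $z_1$, $t$ sampled as in the real game), while the $b=1$ branch still checks $v \in A^\perp + a_2$. Passing from $\game^{1/2}$ to $\game^1$ then switches the $b=1$ branch from $A^\perp + a_2$ to $B_2^\perp + t'$.

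Each of the two steps is an instance of the following subspace-hiding claim: for a uniformly random superspace $B \geq A$ of dimension $3d/4$, an $\io$ of the check for $A + a_1$ is indistinguishable from an $\io$ of the check for $B + a_1$, even when the adversary also receives a basis for $A$, the coset state $\ket{A_{a_1,a_2}}$, and a shifted generator $t = z_1 + a_1$ with $z_1 \samp B$ (so that $B + t = B + a_1$). The dual statement handles the $b=1$ branch, with $B_2 \leq A$ of dimension $d/4$ playing the role of $B^\perp$ for $A^\perp$. The standard proof of each such claim is a hybrid argument that extends the accepted affine subspace one direction at a time, using a puncturable PRF together with iO (and an injective one-way function to anchor the ``added'' generator) so that each single-direction extension is hidden by puncturing security.

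The main technical obstacle I anticipate is carefully handling the auxiliary information already revealed to the adversary, namely $A$ and $\ket{A_{a_1,a_2}}$, inside the subspace-hiding reduction. The subspace-hiding argument does not rely on hiding $A$ itself but rather on hiding $a_1 \bmod B_1$ (and dually $a_2 \bmod B_2^\perp$); moreover, the coset state's support lies in $A + a_1 \subseteq B_1 + a_1$ in the computational basis and in $A^\perp + a_2 \subseteq B_2^\perp + a_2$ in the Hadamard basis, so it is consistent with both circuits and introduces no information-theoretic distinguisher. I would therefore structure each reduction so that the subspace-hiding challenger internally samples $A$ and $\ket{A_{a_1,a_2}}$, hands them (together with the challenge obfuscation) to the iO/PRF-hybrid reduction, and forwards everything to the outer adversary unchanged.

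For the quantitative statement, I would track parameters through every hybrid: there are $O(d)$ single-direction extensions in each subspace-hiding invocation, and each extension loses a factor bounded by the security of a punctured PRF, an iO swap, and an injective OWF. Assuming these primitives are $2^{-\lambda^{c'}}$-secure against $2^{\lambda^{c'}}$-time adversaries for some fixed $c' > 0$, instantiating the primitives at security parameter polynomially related to $d(\lambda)$ and taking a union bound over the $\mathrm{poly}(d)$ hybrids yields the required $2^{-d(\lambda)^c}$ bound for some constant $c \in (0, c')$ that depends on the hybrid count but not on any freely chosen parameter, consistent with the footnote in the lemma.
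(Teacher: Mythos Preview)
The paper does not actually prove this lemma: it appears in Section~4 (``Technical Tools''), which the authors explicitly say recalls results ``mostly verbatim from \cite{ccakan2025copy}''. The lemma is stated there and then invoked as a black box in the hybrid arguments of Section~6 (e.g., for $\hyb_1 \approx \hyb_2$ and $\hyb_{10}\approx\hyb_{11}$), but no proof is given in this paper.

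That said, your plan is the standard route for such statements and is consistent with how the underlying results in \cite{C:CLLZ21} and Zhandry's subspace-hiding obfuscation are proved: introduce one intermediate hybrid that switches the $b=0$ branch, then switch the $b=1$ branch, and reduce each step to subspace-hiding for coset-membership programs. Your observation that the auxiliary data $(A,\ket{A_{a_1,a_2}},t,t')$ can be generated inside the reduction and forwarded unchanged is exactly the point, and your parameter accounting for the subexponential statement is on target. One small inaccuracy: the per-direction subspace-hiding step in Zhandry's argument uses $\io$ together with an injective one-way function (to hide the ``added'' point via a preimage trigger), not a puncturable PRF; this is also why the lemma's assumption list is $\io$ plus injective OWFs rather than $\io$ plus OWFs. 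This does not affect the structure of your argument, only the identity of the primitive invoked at each micro-hybrid.
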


Now we recall the following information-theoretic copy-protection security for coset states.
\begin{lemma}[Monogamy-of-Entanglement for Coset States (\cite{C:CLLZ21}, implicit)]\label{lem:moelast}
Consider the following game between a tuple of adversaries $\adve = (\adve_0, \adve_1, \adve_2)$ and a challenger.

\paragraph{\underline{$\gamewp{\adve}$}}
\begin{enumerate}
    \item The challenger samples a $\F_2^{d(\lambda)}$-coset as follows: Sample a subspace $A \leq \F_2^{d(\lambda)}$ and two elements $a_1, a_2 \in \F_2^{d(\lambda)}$ uniformly at random.
    \item The challenger initialize the register $\regi$ with the coset state $\ket{A_{a_1, a_2}} = \sum_{v \in A} (-1)^{\langle v, a_2\rangle} \ket{v + a_1}$.
    \item The challenger samples a superspace $B_1$ of $A$ of dimension $\frac{3d}{4}$, and a subspace $B_2$ of $A$ of dimension $\frac{d}{4}$ and elements  $z_1 \samp B_1$, $z_2 \samp B_2^\perp$. The challenger also sets $t=z_1+a_1$ and $t' = z_2+a_2$. 
    \item The challenger sends $\regi, B_1, B_2, t, t'$ to the adversary $\adve_0$.
    \item The adversary $\adve_0$ outputs two registers $\regi_1, \regi_2$.
    \item The adversaries\footnote{The adversaries not communicating, but they are entangled through $\regi_1, \regi_2$} $\adve_1, \adve_2$ receive $(A, \regi_1)$ and $(A, \regi_2)$, respectively.
    \item The adversaries $\adve_1, \adve_2$ outputs $v$ and $w$, respectively.
    \item The challenger checks if $v = \canonvec{A}{a_1}$ and $w = \canonvec{A^\perp}{a_2}$. If so, it outputs $1$. Otherwise, it outputs $0$.
\end{enumerate}

For any (unbounded) adversary $\adve$,
\begin{equation*}
    \Pr[\gamewp{\adve}=1] \leq 2^{-\cmoe\cdot d(\lambda)}
\end{equation*}
\end{lemma}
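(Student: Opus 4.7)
The plan is to derive this from the hint-free monogamy-of-entanglement bound for coset states established in \cite{C:CLLZ21}, which says that when the splitter receives only $\ket{A_{a_1,a_2}}$ (no $B_1, B_2, t, t'$), the winning probability is at most $2^{-c_0 d}$ for some universal constant $c_0 > 0$. I would first analyze what the extra hints reveal. Since $B_1 \supseteq A$ has dimension $3d/4$ and $z_1$ is uniform in $B_1$, the hint $t = z_1 + a_1$ is uniformly distributed in the coset $B_1 + a_1 = B_1 + t$, which contains the target coset $A + a_1$ as a sub-coset (because $A \leq B_1$). Thus $(B_1, t)$ pins down $A + a_1$ to one of $|B_1|/|A| = 2^{d/4}$ sub-cosets. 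Symmetrically, $(B_2, t')$ pins down $A^\perp + a_2$ to one of $2^{d/4}$ sub-cosets of $B_2^\perp + t'$.

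The main step is a reduction: from any $\adve = (\adve_0, \adve_1, \adve_2)$ winning the hinted game with probability $p$, construct $\adve' = (\adve'_0, \adve'_1, \adve'_2)$ for the hint-free CLLZ21 game. On input $\ket{A_{a_1,a_2}}$, $\adve'_0$ simulates the hints using only fresh randomness: it first samples a dim-$d/4$ subspace $B_2$ and a dim-$3d/4$ superspace $B_1 \supseteq B_2$ uniformly in $\F_2^d$ (these marginals are independent of $A$ up to the constraint $B_2 \leq A \leq B_1$, which by symmetry holds for a uniformly random $A$ of the correct dimension with a known probability independent of the state), samples $z_1 \samp B_1$ and $z_2 \samp B_2^\perp$, then guesses $t, t'$ uniformly at random. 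With probability at least $2^{-d/2}$ (roughly, since $t$ must lie in a coset of size $2^{3d/4}$ out of $\F_2^d$, and similarly for $t'$, up to the constraint that $B_2 \leq A \leq B_1$, which contributes a polynomial factor in $d$), the guesses are consistent with the actual distribution; conditioning on this event, $\adve'$ wins the CLLZ21 game with probability at least $p \cdot 2^{-d/2} / \poly(d)$. Applying the CLLZ21 bound yields $p \leq 2^{-(c_0 - 1/2)d} \cdot \poly(d)$, giving the claim with $\cmoe$ any constant less than $c_0 - 1/2$, provided $c_0 > 1/2$.

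The main obstacle is whether CLLZ21's constant $c_0$ actually exceeds the $1/2$ loss I incur by guessing the shifts $t, t'$. If not, I would reshape the reduction by changing the dimensions used for $B_1, B_2$ (for example, $\dim B_1 = (1 - \epsilon) d$ and $\dim B_2 = \epsilon d$ for small $\epsilon$), which reduces the simulation loss to $2^{2\epsilon d}$ while still giving the adversary genuine coset-state hints; choosing $\epsilon < c_0/2$ suffices. An even cleaner alternative, which I would use if the $c_0$ constant is tight, is to open up the CLLZ21 proof and absorb the hints directly into its analysis: the CLLZ21 bound is obtained via symmetry arguments and a Fourier-analytic uncertainty relation on coset states, both of which respect any classical conditioning that does not break the coset structure; the hints $(B_1, B_2, t, t')$ are precisely of this ``structured coset information'' form, so conditioning on them yields the same bound with a modified but still positive constant $\cmoe$.
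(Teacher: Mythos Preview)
The paper does not give its own proof of this lemma; it is stated with the citation ``(\cite{C:CLLZ21}, implicit)'' and used as a black box. So there is no paper proof to compare against, and your task reduces to either pointing to the relevant statement in \cite{C:CLLZ21} or reproving it.

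Your main reduction has a fatal quantitative error. The simulator $\adve'_0$ does not know $A$, so it samples $B_2 \leq B_1$ freshly and hopes that $B_2 \leq A \leq B_1$. You treat the probability of this event as a $\poly(d)$ factor, but it is in fact doubly exponentially small: for a uniformly random $d/4$-dimensional $B_2$ in $\F_2^d$, the probability that $B_2 \leq A$ (with $\dim A = d/2$) is governed by the ratio of Gaussian binomials $\binom{d/2}{d/4}_2 / \binom{d}{d/4}_2 \approx 2^{-d^2/8}$, and similarly for $A \leq B_1$. This $2^{-\Theta(d^2)}$ loss completely swamps any $2^{-c_0 d}$ bound from the hint-free game, so the reduction yields nothing. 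Shrinking the dimensions of $B_1, B_2$ as in your second paragraph does not help either: the subspace-containment probability remains of order $2^{-\Theta(d^2)}$ for any nontrivial choice of dimensions.

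Your final alternative---opening the CLLZ21 proof and absorbing the hints directly---is exactly what is needed, and is in fact what \cite{C:CLLZ21} already does: their information-theoretic monogamy game (the one underlying their ``strong monogamy'' theorem) is precisely the hinted game stated here, with $\adve_0$ receiving the relaxed subspace data $(B_1, B_2, t, t')$ in place of exact coset membership. The bound $2^{-\cmoe d}$ is obtained there directly via an overlap/uncertainty argument on the coset structure, not by reduction to a hint-free version. So the right move is simply to cite that result (or trace through their argument), not to attempt the lossy black-box reduction.
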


\subsection{Quantum-Proof Reconstructive Extractors}\label{sec:ext}
We recall the notion of \emph{quantum-proof} extractors, which is a randomness extractor that is secure against quantum side information on the source.
\begin{definition}[\protect{Quantum-proof seeded extractor \cite{anindya}}]\label{def:ext}
    A function $\ext: \zo^n \times \zo^d \to \zo^m$ is said to be a \emph{$(k, \epsilon)$-strong quantum-proof seeded extractor} if for any $cq$-state $\rho \in \zo^n \otimes \mathcal{Y}$ of the registers $X, Y$ with $H_{\mathsf{min}}(X | Y) \geq k$, we have
    \begin{equation*}
        \ext(X, S), Y, S \approx_\epsilon U_m, Y, S 
    \end{equation*}
    where $S \leftarrow \zo^d$ and $U_m \samp \zo^m$.
\end{definition}
We call an extractor \emph{reconstructive extractor} if given a distinguisher between $\ext(X, S), Y, S$ versus $U_m, Y, S$ with advantage better than $\epsilon$, there exists an (unbounded) algorithm $\mathcal{E}$ such that $\Pr[\mathcal{E}(Y) = X] \geq 2^{-k}$.

\subsection{Projective and Threshold Implementations}

We recall the following properties of projective and threshold implementations (\cite{TCC:Zhandry20, C:ALLZZ21}). We refer the reader to \cite{C:ALLZZ21} and \cite{TCC:CakGoy24} for definitions of projective and threshold implementations.

\begin{theorem}[\cite{C:ALLZZ21}]\label{thm:singleatiprop}
    Consider the approximate threshold implementation $\atip{\eta}{\delta}{(\mathcal{P}, \mathcal{D})}{\eps}$, associated with a collection of binary projective measurements $\mathcal{P}$, a distribution $\mathcal{D}$ over the index set of $\mathcal{P}$ and a threshold value $\eta \in [0, 1]$, applied to a state $\rho$.

We then have the following.
\begin{itemize}
\item For any state $\rho$, 
\begin{equation*}
   \Tr[\atip{\eta-\eps}{\delta}{(\mathcal{P}, \mathcal{D})}{\eps}\cdot\rho =1] \geq \Tr[\tip{\eta}{(\mathcal{P},\mathsf{D})}\rho = 1] - \delta.
\end{equation*}
\item For any state $\rho$, 
\begin{equation*}
   \Tr[\tip{\eta-\eps}{(\mathcal{P},\mathsf{D})}\rho = 1] \geq \Tr[\atip{\eta}{\delta}{(\mathcal{P}, \mathcal{D})}{\eps}\rho=1] - \delta.
\end{equation*}
 \item $\atip{\eta-\eps}{\delta}{(\mathcal{P}, \mathcal{D})}{\eps}$ is efficient.
    \item $\tip{\eta}{(\mathcal{P},\mathsf{D})}$ is a projection and the collapsed state conditioned on outcome $1$ is a mixture of eigenvectors of $\mathcal{P}$ with eigenvalue $\geq \eta$.
\end{itemize}
\end{theorem}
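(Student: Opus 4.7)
The plan is to prove the four bullets by constructing the approximate threshold implementation $\atip{\eta}{\delta}{(\mathcal{P}, \mathcal{D})}{\eps}$ via Zhandry's \emph{projective implementation} framework together with quantum amplitude estimation. First I would recall the projective implementation $\mathsf{PI}(\mathcal{P}, \mathcal{D})$ of the mixture POVM $\sum_i \mathcal{D}(i) P_i$: by diagonalizing this POVM one obtains a projective measurement whose outcomes are real numbers $p \in [0,1]$, where the eigenspace with eigenvalue $p$ consists exactly of states whose success probability under a single application of the mixture measurement is $p$. The measurement $\tip{\eta}{(\mathcal{P},\mathcal{D})}$ is then the coarsening that outputs $1$ if the $\mathsf{PI}$ outcome is $\geq \eta$ and $0$ otherwise, which is a projection onto a direct sum of $\mathsf{PI}$ eigenspaces and whose post-measurement state on outcome $1$ is a mixture of such eigenvectors; this immediately gives the fourth bullet.

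Second, I would construct the ATI by running an efficient quantum amplitude estimation procedure for the POVM $\sum_i \mathcal{D}(i) P_i$. The standard Marriott--Watrous-style approach uses controlled reflections through the ``accept'' subspace and the initial subspace, followed by a quantum Fourier transform on an auxiliary register to extract the amplitude; roughly $O(\log(1/\delta)/\eps)$ auxiliary qubits suffice to obtain an estimate that is within additive error $\eps$ of the true $\mathsf{PI}$ eigenvalue with probability at least $1-\delta$. The ATI is then defined to output $1$ exactly when the estimate lies at or above the threshold (with parametrization chosen so the bullets come out as stated). Efficiency follows from the ability to efficiently apply each $P_i$ given $i$ and to sample $i \samp \mathcal{D}$; this yields the third bullet.

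Third, for the two comparison inequalities, the key fact is that on an eigenvector of $\mathsf{PI}$ with eigenvalue $p$, the amplitude estimation outputs a value within $\eps$ of $p$ with probability at least $1-\delta$. Decomposing an arbitrary state $\rho$ along the eigenbasis of $\mathsf{PI}$ and applying this fact eigenspace-by-eigenspace gives both directions: if the true eigenvalue is $\geq \eta$ then with probability $\geq 1-\delta$ the estimate is $\geq \eta-\eps$, yielding the first bullet; and if the estimate lands at $\geq \eta$ then with probability $\geq 1-\delta$ the true eigenvalue is $\geq \eta-\eps$, yielding the second bullet. In both cases a linearity argument over the spectral decomposition of $\rho$ converts the pointwise estimate into the claimed trace-level inequality.

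The main obstacle I anticipate is simultaneously controlling both the precision $\eps$ and the failure probability $\delta$ while keeping the estimation efficient and ensuring the error bounds compose cleanly across the full eigenspace decomposition of $\mathsf{PI}$, in particular discretizing the continuous spectrum without incurring an additional loss. A secondary subtlety is that the ATI, unlike TI, is not itself projective, so repeated ATIs do not commute and the ``post-measurement state'' is harder to reason about; however, since \cref{thm:singleatiprop} only asserts single-shot success-probability bounds and efficiency, this issue does not arise here and is deferred to downstream arguments that chain ATIs together.
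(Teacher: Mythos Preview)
The paper does not prove this theorem at all: it is stated as a recalled result with the citation \cite{C:ALLZZ21} and the reader is explicitly referred to \cite{C:ALLZZ21} and \cite{TCC:CakGoy24} for the definitions and proofs. There is therefore no ``paper's own proof'' to compare your proposal against.

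That said, your sketch is a faithful outline of how the result is actually established in \cite{TCC:Zhandry20, C:ALLZZ21}: the fourth bullet is immediate from the definition of $\ti$ as the threshold-coarsening of Zhandry's projective implementation $\mathsf{PI}$; the $\ati$ is built from an efficient Marriott--Watrous-style alternating-projection estimator for the $\mathsf{PI}$ eigenvalue (not a QFT-based amplitude estimation, but the distinction is immaterial here); and the two inequalities follow by decomposing $\rho$ along the $\mathsf{PI}$ eigenbasis and using the $(\eps,\delta)$ accuracy guarantee of the estimator eigenspace-by-eigenspace. Your anticipated ``obstacle'' about discretizing a continuous spectrum does not actually arise, since $\mathsf{PI}$ acts on a finite-dimensional space and hence has finitely many eigenvalues.
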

We give some further generalizations below.
\begin{theorem}[\cite{TCC:CakGoy24}]\label{thm:multiatiprop}
 For any $k \in \mathbb{N}$, let $\mathcal{P}_\ell, \mathcal{D}_\ell$ be a collection of projective measurements and a distribution on the index set of this collection, respectively, and $\eta_\ell \in [0, 1]$ be threshold values for all $\ell \in [k]$. Write $\Tr[\left(\bigotimes_{\ell \in [k]} \mathsf{ATI}^{\eps,\delta}_{\mathcal{P}_\ell,\mathcal{D}_\ell,\eta_\ell}\right)\cdot \rho]$ to denote the probability that the outcome of the joint measurement $\bigotimes_{\ell \in [k]} \mathsf{ATI}^{\eps,\delta}_{\mathcal{P}_\ell,\mathcal{D}_\ell,\eta_\ell}$ applied on $\rho$ is all $1$, and similarly for $\mathsf{TI}$.

 Then, we have the following.
\begin{itemize}

\item    For any $k$-partite state $\rho$, 
\begin{equation*}
   \Tr[\left(\bigotimes_{\ell \in [k]} \mathsf{ATI}^{\eps,\delta}_{\mathcal{P}_\ell, \mathcal{D}_\ell, \eta_\ell - \eps}\right)\rho] \geq \Tr[\left(\bigotimes_{\ell \in [k]} \mathsf{TI}_{\eta_\ell}({\mathcal{P}_\ell}_{\mathcal{D}_\ell})\right)\rho] - k\cdot \delta.
\end{equation*}
\item  For any $k$-partite state $\rho$, let $\rho'$ be the collapsed state obtained after applying $\bigotimes_{\ell \in [k]} \mathsf{ATI}^{\eps,\delta}_{\mathcal{P}_\ell,\mathcal{D}_\ell,\eta_\ell}$ to $\rho$ and obtaining the outcome $1$. Then, 
\begin{equation*}
   \Tr[\left(\bigotimes_{\ell \in [k]}\mathsf{TI}_{\eta_\ell - 2\eps}({\mathcal{P}_\ell}_{\mathcal{D}_\ell})\right)\rho'] \geq 1 - 2k\cdot \delta.
\end{equation*}

\item  For any $k$-partite state $\rho$, let $\rho'$ be the collapsed state obtained after applying $\bigotimes_{\ell \in [k]} \mathsf{ATI}^{\eps,\delta}_{\mathcal{P}_\ell, \mathcal{D}_\ell,\eta}$ to $\rho$ and obtaining the outcome $1$. Then, 
\begin{equation*}
   \Tr[\left(\bigotimes_{\ell \in [k]} \mathsf{ATI}^{\eps,\delta}_{\mathcal{P}_\ell, \mathcal{D}_\ell, \eta_\ell - 3\eps}\right)\cdot\rho'] \geq 1 - 3k\cdot \delta.
\end{equation*}

\item  For any $k$-partite state $\rho$, 
\begin{equation*}
  \Tr[\left(\bigotimes_{\ell \in [k]} \mathsf{TI}_{\eta_\ell - \eps}({\mathcal{P}_\ell}_{\mathcal{D}_\ell})\right)\rho] \geq  \Tr[\left(\bigotimes_{\ell \in [k]} \mathsf{ATI}^{\eps,\delta}_{\mathcal{P}_\ell, \mathcal{D}_\ell, \eta_\ell}\right)\rho] - k\cdot \delta.
\end{equation*}
   
\end{itemize}
\end{theorem}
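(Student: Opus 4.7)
The plan is to lift each of the four statements from its single-register counterpart recalled in \cref{thm:singleatiprop}, exploiting the structural fact that the measurements act on disjoint registers and therefore commute as POVM elements. The core strategy is a hybrid-plus-union-bound argument that carefully tracks how the per-register slack aggregates.

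For the first item, I would define the hybrid $H_j := \bigotimes_{\ell \leq j} \mathsf{TI}_{\eta_\ell} \otimes \bigotimes_{\ell > j} \mathsf{ATI}^{\eps,\delta}_{\mathcal{P}_\ell,\mathcal{D}_\ell, \eta_\ell - \eps}$, so that $H_k$ is the pure-$\mathsf{TI}$ operator and $H_0$ the pure-$\mathsf{ATI}$ operator. Using commutativity, the telescoping difference $\Tr[H_j \rho] - \Tr[H_{j-1}\rho]$ rewrites as $\Tr[(N_j - M_j)\,\widetilde{\sigma}_j]$, where $N_j = \mathsf{TI}_{\eta_j}$ and $M_j = \mathsf{ATI}^{\eps,\delta}_{\eta_j - \eps}$ act on register $j$, and $\widetilde{\sigma}_j := \Tr_{-j}[(\bigotimes_{\ell<j} N_\ell \otimes I_j \otimes \bigotimes_{\ell > j} M_\ell)\,\rho]$ is a positive operator on register $j$ with $\Tr[\widetilde{\sigma}_j] \leq 1$. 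The single-register bound from \cref{thm:singleatiprop} applied to $\widetilde{\sigma}_j / \Tr[\widetilde{\sigma}_j]$ gives $\Tr[(N_j - M_j)\widetilde{\sigma}_j] \leq \delta$, and summing over $j$ yields the $k\delta$ slack. Item 4 is symmetric, swapping the roles of $\mathsf{TI}$ and $\mathsf{ATI}$ and invoking the other direction of the single-register bound.

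For item 2, I would first establish the single-register analogue: after applying $\mathsf{ATI}^{\eps,\delta}_\eta$ and conditioning on outcome $1$, the post-measurement state $\rho'$ satisfies $\Tr[\mathsf{TI}_{\eta - 2\eps} \rho'] \geq 1 - 2\delta$, which follows by combining the approximate-equivalence items of \cref{thm:singleatiprop} with the almost-projectivity of $\mathsf{ATI}$ via \cref{lem:gentlemes}. To lift this to $k$ registers, I would use the fact that $\mathsf{TI}$ is a projection (\cref{thm:singleatiprop}) together with the commuting-projections union bound $I - \bigotimes_\ell \Pi_\ell \leq \sum_\ell (I_{-\ell} \otimes (I - \Pi_\ell))$. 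Evaluating this against the joint post-measurement state $\rho'$ and tracing out reduces each summand to $\Tr[(I - \mathsf{TI}_{\eta_\ell - 2\eps}) \rho'_\ell]$ where $\rho'_\ell := \Tr_{-\ell}[\rho']$. The key identification is that $\rho'_\ell$ coincides with the post-$\mathsf{ATI}_\ell$ state on some normalized pre-state---this follows because operations on different registers tensor-factorize through the partial trace---so the single-register analogue applies and bounds each term by $2\delta$, giving $2k\delta$ in total. Item 3 then follows by combining item 2 and item 1 (both just proven) applied to $\rho'$: $\Tr[(\bigotimes \mathsf{ATI}^{\eps,\delta}_{\eta_\ell - 3\eps})\rho'] \geq \Tr[(\bigotimes \mathsf{TI}_{\eta_\ell - 2\eps})\rho'] - k\delta \geq 1 - 3k\delta$.

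The main obstacle I expect is the clean bookkeeping in item 2, specifically verifying that $\rho'_\ell$ is \emph{exactly} the state to which the single-register analogue applies. This reduces to the operator identity $\Tr_{-\ell}[(A_\ell \otimes B)\rho(A_\ell^\dagger \otimes B^\dagger)] = A_\ell\,\Tr_{-\ell}[(I_\ell \otimes B)\rho(I_\ell \otimes B^\dagger)]\,A_\ell^\dagger$ for any Kraus-like operators $A_\ell$ on register $\ell$ and $B$ on the remaining registers, which holds because operations on disjoint registers commute both with each other and with the partial trace over them. Once this identity is in place, the single-register statements of \cref{thm:singleatiprop} can be invoked as black boxes, and no additional quantum-information tools beyond \cref{lem:gentlemes} (needed only to establish the single-register analogue of item 2) are required.
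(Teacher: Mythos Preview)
The paper does not contain a proof of this theorem: it is stated as a cited result from \cite{TCC:CakGoy24} and given without argument, so there is no ``paper's own proof'' to compare your proposal against. Your hybrid-plus-union-bound strategy for items 1 and 4, and your reduction of item 3 to items 1 and 2, are the standard way to lift \cref{thm:singleatiprop} to $k$ registers and match what one finds in the cited source.

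One point worth flagging in your plan for item 2: you propose to derive the single-register statement (``after $\mathsf{ATI}^{\eps,\delta}_\eta$ outputs 1, the post-state satisfies $\mathsf{TI}_{\eta-2\eps}$ with probability $\geq 1-2\delta$'') by ``combining the approximate-equivalence items of \cref{thm:singleatiprop} with the almost-projectivity of $\mathsf{ATI}$ via \cref{lem:gentlemes}.'' The almost-projectivity of $\mathsf{ATI}$ is not a consequence of \cref{lem:gentlemes}; it is a separate structural property of the $\mathsf{ATI}$ construction (proved directly in \cite{C:ALLZZ21,TCC:Zhandry20}) and is not among the bullets listed in \cref{thm:singleatiprop} as recalled here. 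You will need to invoke that property as an additional black box rather than deriving it from gentle measurement. With that correction, the rest of your argument for item 2---the commuting-projections union bound on $I-\bigotimes_\ell \Pi_\ell$ and the partial-trace identity identifying $\rho'_\ell$ with a bona fide post-$\mathsf{ATI}_\ell$ state---goes through as you describe.
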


We now give the following lemma about simulating approximate threshold implementations.
\begin{lemma}[Simulated $\ati$ (\cite{ccakan2025copy})]\label{lem:simati}
There exists a polynomial $q(\cdot)$ and a (parametrized) measurement $\simatiwp{\delta}{\epsilon}{P}{\mathcal{D}}{\eta}{\alpha}$ that receives as input a list of $\ell$ strings (from the support of $\mathcal{D}$) and runs in time $\poly(\ell)$ such that for any $0 < \epsilon, \delta, \eta, \alpha< 1$, for any collection of binary projective measurements $P=\{P_i, I - P_I\}_{i \in \mathcal{I}}$, for any possibly correlated distribution of states $\rho$ with distributions $\mathcal{D}$ over $\mathcal{I}$,
\begin{align*}
    &\Pr[\simatiwp{\delta}{\epsilon}{P}{\mathcal{D}}{\eta-5\epsilon}{\alpha}(s_1, \dots, s_\ell)(\rho) = 1 : s_1, \dots, s_\ell \samp \mathcal{D}] \geq \Pr[\atip{\eta}{\delta}{(P, \mathcal{D})}{\epsilon}(\rho) = 1] - \alpha - 4\delta. \\
    &\Pr[\atip{\eta-5\epsilon}{\delta}{(P, \mathcal{D})}{\epsilon}(\rho) = 1] \geq \Pr[\simatiwp{\delta}{\epsilon}{P}{\mathcal{D}}{\eta}{\alpha}(s_1, \dots, s_\ell)(\rho) = 1 : s_1, \dots, s_\ell \samp \mathcal{D}] - \alpha - 4\delta\\
    &\Pr[\simatiwp{\delta}{\epsilon}{P}{\mathcal{D}}{\eta-5\epsilon}{\alpha}(s_1, \dots, s_\ell)(\rho) = 1 : s_1, \dots, s_\ell \samp \mathcal{D}] \geq \Pr[\tip{\eta}{(P, \mathcal{D})}(\rho) = 1] - \alpha - 4\delta \\
    &\Pr[\tip{\eta-5\epsilon}{(P, \mathcal{D})}(\rho) = 1] \geq \Pr[\simatiwp{\delta}{\epsilon}{P}{\mathcal{D}}{\eta}{\alpha}(s_1, \dots, s_\ell)(\rho) = 1 : s_1, \dots, s_\ell \samp \mathcal{D}] - \alpha - 4\delta
\end{align*}
where $\ell = q(1/\epsilon, \log(1/\delta), 1/\alpha)$. Note that the runtime of $\simati$ depends on the parameters of $\mathcal{D}$ (e.g. sampling runtime of $\mathcal{D}$), but otherwise it does not use oracles access to $\mathcal{D}$ nor does it use any samples from $\mathcal{D}$ (except for $s_1, \dots, s_\ell$ which are given explicitly). 
\end{lemma}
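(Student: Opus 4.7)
The plan is to construct $\simati$ by adapting the Zhandry-style construction of $\ati$ to use the explicit samples $s_1, \ldots, s_\ell$ in place of oracle access to $\mathcal{D}$. Concretely, $\simati$ will run the same projective eigenvalue estimation (via quantum phase estimation) that underlies $\ati$, but applied to the empirical mixture operator $P_{\hat{\mathcal{D}}} := \frac{1}{\ell}\sum_{j=1}^{\ell} P_{s_j}$ in place of the true $P_{\mathcal{D}} := \sum_{i\in\mathcal{I}} \Pr_{\mathcal{D}}[i]\cdot P_i$, and then threshold at $\eta$. The Marriott--Watrous coherent encoding of $P_{\hat{\mathcal{D}}}$ is efficient since the samples are given explicitly: prepare the uniform superposition over $\{1,\ldots,\ell\}$, apply controlled $P_{s_j}$, and uncompute. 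The resulting measurement runs in time $\poly(\ell)$, as required.

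To show correctness, I would combine two ingredients. First, a classical Hoeffding-type concentration argument: once $\ell = q(1/\epsilon, \log(1/\delta), 1/\alpha)$ for an appropriate polynomial $q$, except with probability at most $\alpha$ over the sampling of $s_1, \dots, s_\ell$, the operators $P_{\hat{\mathcal{D}}}$ and $P_{\mathcal{D}}$ are spectrally close (in the sense relevant to $\rho$) to within $\epsilon$. Second, the $\ati$-to-$\ti$ conversion inequalities of \cref{thm:singleatiprop}, which each incur $\epsilon$ in the threshold and $\delta$ in the probability. Chaining these, each of the four inequalities in the statement is obtained by routing through intermediates $\ti$ of $P_{\mathcal{D}}$ and of $P_{\hat{\mathcal{D}}}$: the empirical-to-true transition absorbs $\alpha$ of probability and $\epsilon$ of threshold, while four $\ati$-to-$\ti$ conversions absorb the remaining $4\delta$ of probability and $4\epsilon$ of threshold, matching the stated $\alpha + 4\delta$ and $5\epsilon$ slacks.

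The main obstacle will be making the concentration step dimension-independent, since the lemma requires $\ell$ to depend only on $1/\epsilon$, $\log(1/\delta)$ and $1/\alpha$, not on the Hilbert space dimension $d$. A naive matrix Chernoff would give $\|P_{\hat{\mathcal{D}}} - P_{\mathcal{D}}\|_{\mathsf{op}} \leq \epsilon$ only when $\ell = \Omega(\log d / \epsilon^2)$, which is prohibitive. To sidestep this, I would discretize the eigenvalue axis of $P_{\mathcal{D}}$ into $O(1/\epsilon)$ buckets and argue per-bucket using classical Hoeffding against the restriction of $P_{\hat{\mathcal{D}}}$ to each bucket's eigenspace on which $\rho$ has nontrivial mass; alternatively, I would leverage the fact that projective estimation effectively projects $\rho$ onto a polynomially-small effective subspace around each estimated eigenvalue, reducing to a finite-dimensional matrix-concentration problem whose dimension is controlled by $1/\epsilon$ and $1/\delta$ rather than by the ambient Hilbert space. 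Either route yields the desired dimension-free sample complexity, at which point the chain of inequalities described above closes the proof.
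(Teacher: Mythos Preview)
The paper does not prove this lemma; it is quoted as a technical tool from \cite{ccakan2025copy} and no argument is supplied here. So there is no in-paper proof to compare against, and I can only comment on your proposal on its own merits.

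Your construction (run $\ati$ against the empirical operator $P_{\hat{\mathcal{D}}} = \frac{1}{\ell}\sum_j P_{s_j}$) is the natural one, and you are right that dimension-independence of the concentration step is the crux. But the two workarounds you sketch do not close that gap. Bucketing the $P_{\mathcal{D}}$-spectrum into $O(1/\epsilon)$ intervals does nothing about the number of eigenvectors inside each bucket, which is what drives the union bound; and the scalar Hoeffding bound you \emph{can} get dimension-free, namely $|\Tr(P_{\hat{\mathcal{D}}}\rho') - \Tr(P_{\mathcal{D}}\rho')| \le \epsilon$ for the post-$\ti$ state $\rho'$, is too weak: $\Tr(P_{\hat{\mathcal{D}}}\rho') \ge \eta - \epsilon$ constrains only the \emph{average} eigenvalue of $P_{\hat{\mathcal{D}}}$ on the support of $\rho'$, so $\rho'$ could still place half its mass below $\eta - 5\epsilon$ in the $P_{\hat{\mathcal{D}}}$-spectrum and $\ti_{\eta-5\epsilon}(P_{\hat{\mathcal{D}}})$ would reject that half. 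Your ``effective subspace'' suggestion is not specified enough to evaluate.

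A route that bypasses operator concentration altogether: the $\ati$ procedure of \cite{C:ALLZZ21,TCC:Zhandry20} can be presented so that its only use of $\mathcal{D}$ is to draw a bounded number $T = \poly(1/\epsilon,\log(1/\delta))$ of classical samples. If one defines $\simati(s_1,\dots,s_\ell)$ to execute that very procedure but consuming the pre-supplied $s_j$'s in place of fresh draws, then for $s_1,\dots,s_\ell \samp \mathcal{D}$ the output distribution of $\simati$ is \emph{identical} to that of $\ati$, and the four inequalities reduce immediately to chained applications of \cref{thm:singleatiprop} with no spectral-closeness argument at all. This is the likely shape of the argument in \cite{ccakan2025copy}, and it also explains why $\ell$ in the statement has exactly the form of the $\ati$ query complexity.
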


\section{Quantum Protection Construction}\label{sec:cons}
In this section, we give our quantum protection construction $\copyprot$, which is the same as the construction of \cite{ccakan2025copy}, which in turn is essentially the same as the construction of \cite{C:CLLZ21} (when instantiated with iO and coset states rather than ideal oracles and subspace states).

Let $\cscheme = (\ceval, \Ver, \chal, \sampchal)$ be the malleable-puncturable scheme to be protected. Let $\prf$ be a pseudorandom function family scheme and let $\io$ be an indistinguishability obfuscation scheme, with security levels and input-output sizes to be defined in the proof. Let $d(\lambda)$ be a parameter (set in the proof). Further, all the obfuscated programs are implicitly padded to the appropriate sizes as needed in the proof.

\paragraph{\underline{$\copyprot.\genstate(1^\lambda)$}}
\begin{enumerate}
\item Sample a $\F_2^{d(\lambda)}$-coset as follows: Sample a subspace $A \leq \F_2^{d(\lambda)}$ and two elements $a_1, a_2 \in \F_2^{d(\lambda)}$ uniformly at random.
\item Initialize the register $\regi$ with the coset state $\ket{A_{a_1, a_2}} = \sum_{v \in A} (-1)^{\langle v, a_2\rangle} \ket{v + a_1}$.
\item Sample $\obfd{M} \samp \io(M)$ where $M$ is the membership checking program for the cosets as defined below.
 \begin{mdframed}
    {\bf \underline{$M(b, v)$}}
    
     {\bf Hardcoded: $A, a_1, a_2$}
    \begin{enumerate}[label=\arabic*.]
        \item If $b = 0$: Check if $v \in A + a_1$. If so, output $\outtrue$, otherwise output $\outfalse$, and terminate.
        \item If $b = 1$: Check if $v \in A^\perp + a_2$. If so, output $\outtrue$, otherwise output $\outfalse$.
    \end{enumerate}
    \end{mdframed}
\item Output $(pp = \obfd{M}, \regi)$.
\end{enumerate}

\paragraph{\underline{$\copyprot.\prot(pp, \funckey)$}}
\begin{enumerate}
\item Parse $\obfd{M} = pp$.
\item Parse $(C, aux, \cpred) = k$.
\item Sample a PRF key as $K \samp \prf.\keygen(1^\lambda)$.
    \item Sample $\obfd{P} \samp \io(P)$ where $P$ is the following program.
     \begin{mdframed}
    {\bf \underline{$P(x, b, v)$}}
    
     {\bf Hardcoded: $K, \obfd{M}, C$}
    \begin{enumerate}[label=\arabic*.]
        \item Check if $\obfd{M}(b, v) = \outtrue$. If not, output $\bot$ and terminate.

        \item If $b = 0$, output $C(x) \oplus \prf.\ceval(K, x)$ and terminate.
        \item If $b = 1$, output $\prf.\ceval(K, x)$.
    \end{enumerate}
    \end{mdframed}
    \item Output $(\obfd{P}, aux)$.
\end{enumerate}

\paragraph{\underline{$\copyprot.\qpeval(\regi, x)$}}
\begin{enumerate}
\item Parse $((\obfd{P}, aux), \regi') = \regi$.
\item Run $\cscheme.\ceval(aux, z)$ by implementing each circuit-query as follows: Run $\obfd{P}$ on $x, 0, \regi'$ coherently, let $y_0$ denote the outcome. Then, rewind the register $\regi'$ (as in \cref{lem:gentlemes}). Then run $\obfd{P}$ on $x, 1, \regi'$ coherently, let $y_1$ denote the outcome. Then, rewind the register $\regi'$ (as in \cref{lem:gentlemes}). Pass $y_0 \oplus y_1$ as the output.
\end{enumerate}

\begin{theorem}\label{thm:strongcopyprotect}
   For any malleable-puncturable scheme $\cscheme$, $\copyprot$ satisfies strong anti-piracy (\cref{def:strongap}) for all inverse polynomial $\gamma(\lambda)$.
\end{theorem}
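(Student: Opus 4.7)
The plan is to mirror the \cite{ccakan2025copy} strategy but adapt it to (i) the more permissive malleable-puncturing interface and (ii) arbitrary high min-entropy challenge distributions. I first reduce to the simulated-ATI world: by \cref{thm:singleatiprop,thm:multiatiprop,lem:simati}, it suffices to show that no QPT adversary can produce a bipartite register $\regi_1\otimes\regi_2$ on which
$\simatiwp{\delta}{\epsilon}{P}{\sampchal(st)}{\ptriv+\gamma/2}{\alpha}\otimes \simatiwp{\delta}{\epsilon}{P}{\sampchal(st)}{\ptriv+\gamma/2}{\alpha}$
outputs $(1,1)$ with non-negligible probability, for suitably chosen $\eps,\delta,\alpha$. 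The advantage of simulated-ATI is that the measurement is efficient given an explicit list of challenge samples $s_1,\dots,s_\ell$, which lets us plug it into downstream reductions without needing oracle access to the challenge distribution.

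Next, I run a sequence of hybrids over the public parameters and the protected program. \textbf{Hybrid 1:} Replace the obfuscated membership program $\hat M$ inside $pp$ by the $M^1$-style program hardcoding $(B_1,B_2,t,t')$, invoking \cref{lem:shobased} with subexponential security (so the indistinguishability survives when $d(\lambda)$ is polynomial and the downstream reductions are polynomial). \textbf{Hybrid 2:} Inside $\hat P$, switch the membership check to the same $(B_1,B_2,t,t')$ view via iO, so that the only way to invoke the $b=0$ branch is to present a vector in $B_1+t$ and the $b=1$ branch needs a vector in $B_2^\perp+t'$. \textbf{Hybrid 3:} Sample $x\samp\sampinp(st)$ in advance, compute $\csplit\samp\fsplit(st,x)$, and use iO together with puncturable-PRF security to replace $\hat P$ by an obfuscation of the program that (a) uses $\csplit$ instead of $C$, and (b) uses a PRF key punctured at the set of inputs $x'$ with $\cpred(x')=x$, hardcoding freshly random values on that set. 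The correctness of $\csplit$ (which equals $C$ on all $x'$ with $\cpred(x')\neq x$) makes the two obfuscated programs functionally equivalent outside the punctured set, so iO applies, and on the punctured set we invoke PRF puncturing security. This is the step where the arbitrary min-entropy assumption matters: to justify puncturing at an $x$ sampled from an arbitrary high min-entropy source (rather than a pseudorandom one), I invoke a quantum-proof \emph{reconstructive} extractor from \cref{sec:ext} on the challenger's state $st$ together with the adversary's side information — any distinguisher between the real and punctured hybrid yields an algorithm that outputs $x$ with probability $\geq 2^{-k}$, contradicting the min-entropy of $\sampinp(st)$ conditioned on $\adve$'s view.

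In the final hybrid, the protected program reveals nothing about $C$ on inputs related to $x$ beyond $\csplit$ itself, and the $(B_1,B_2,t,t')$-view of the coset is completely classical (there is no obfuscated program hardcoding $A,a_1,a_2$ anymore). Therefore, to make $\simati$ output $1$ on a register $\regi_i$, an adversary must produce, for a simulated challenge derived from $x$, an answer that verifies against $ak$; by malleable-puncturable $\ptriv$-security of $\cscheme$ (\cref{def:punc}), doing better than $\ptriv+\gamma/4$ starting from $(\csplit,\cpred,aux,ch,x)$ alone is infeasible, so the excess advantage must come from accessing $C(x)$ via $\hat P$. But calling $\hat P(x,0,\cdot)$ requires a vector in $B_1+t$ (canonical modulo $B_1$ yields a vector in $A+a_1$) and $\hat P(x,1,\cdot)$ requires a vector in $B_2^\perp+t'$ (canonical modulo $B_2^\perp$ yields a vector in $A^\perp+a_2$); extracting both on the same side lets an adversary invert the PRF-masked output and recover $C(x)$. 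Using \cref{lem:simulproj} and \cref{thm:impindep} to pass from efficient POVMs on $\regi_1,\regi_2$ to the canonical-vector projectors needed by \cref{lem:moelast}, I then extract from a successful joint-$\simati$ pair an adversary $(\adve_0,\adve_1,\adve_2)$ against the monogamy-of-entanglement game that wins with inverse polynomial probability, contradicting the $2^{-\cmoe\cdot d(\lambda)}$ bound once $d(\lambda)$ is set to be, say, $\lambda$.

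The main obstacle I expect is step Hybrid 3: justifying the puncturing transition under an arbitrary high min-entropy challenge distribution rather than a pseudorandom one. In particular, the punctured-PRF security game is classical and needs to commit the punctured set up front, so I have to route the argument through the reconstructive extractor in a way that the adversary's quantum side information is handled as $cq$-state leakage on $x$, and I must carefully chain the extractor's reconstruction guarantee with the min-entropy lower bound on $\sampinp(st)$ conditioned on $\adve$'s post-interaction state. The remaining steps are, by contrast, essentially polished versions of the \cite{C:CLLZ21,ccakan2025copy} arguments.
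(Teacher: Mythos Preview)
Your proposal misses the paper's central technical device and, as you suspected, your Hybrid~3 does not go through. The paper does \emph{not} pre-sample $x$ and puncture the PRF/circuit directly against the min-entropy of $\sampinp(st)$. Instead it introduces a \emph{steganographic asymmetrically constrained encapsulation} (ACE) scheme and modifies both the obfuscated program and the challenge distribution so that the challenge $x^*$ is a steganographic ACE ciphertext (indistinguishable from a genuine $\sampinp(st)$ sample) whose plaintext carries the subspace $A$ together with an extractor seed $se$ and a mask $r^{**}$. The obfuscated program is changed to ACE-decrypt $\cpred(x)$, and on a successful decryption it computes $\canonvec{A'}{v}$ from the vector $v$ supplied by the evaluator, applies $\ext(se,\cdot)$ to that canonical coset representative, and uses the result to unmask a hardwired encrypted copy of $C$ (or of the PRF key). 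The reconstructive extractor is thus applied to the \emph{coset canonical vectors} $a_1^{Can},a_2^{Can}$, not to $x$: a gap between the $\simati$ outcome under the honest challenge distribution and under the steganographically-modified one yields, via the reconstruction property, an algorithm that outputs $a_1^{Can}$ (respectively $a_2^{Can}$) with probability $\geq 2^{-k}$, which is exactly what is needed to build the MoE adversary of \cref{lem:moelast}.

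Your route fails at two concrete points. First, the hybrid step ``puncture the PRF at $\{x':\cpred(x')=x\}$ and hardcode random values'' is not a valid iO/PRF transition: that preimage set may be exponential, and even if it were not, distinguishing the two obfuscations is not an instance of $\ext(X,S)$ vs.\ uniform, so the reconstructive-extractor guarantee does not give you back $x$. Second, in your final hybrid the program no longer contains $A,a_1,a_2$, but you never explain how a good $\simati$-decoder on $\regi_i$ is converted into a procedure that outputs a vector in $A+a_1$ (or $A^\perp+a_2$); merely noting that ``calling $\hat P(x,0,\cdot)$ requires a vector in $B_1+t$'' tells you nothing about the decoder's internal state. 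The paper's ACE trick is precisely what bridges this gap: because the program \emph{computes} $\canonvec{A}{v}$ and feeds it through the extractor to decide its output, any noticeable change in the decoder's success probability between two challenge distributions that differ only in the mask $r^{**}$ is, by definition, an extractor-distinguisher whose reconstruction yields the canonical coset vector. You should therefore replace your Hybrid~3 by the paper's $\hyb_4$--$\hyb_{14}$ sequence (ACE key installation, steganographic challenge switch, iO equivalence at the decrypted point, then PRF/PRG puncturing), and route the min-entropy assumption through the steganographic security of ACE rather than through a direct extractor argument on $x$.
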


\begin{corollary}\label{thm:main}
    Assuming subexponentially secure indistinguishability obfuscation and one-way functions, $\copyprot$ is a secure copy-protection for any malleable-puncturable scheme.
\end{corollary}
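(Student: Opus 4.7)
The plan is to prove \cref{thm:strongcopyprotect} (strong anti-piracy for all inverse polynomial $\gamma$), from which \cref{thm:main} follows immediately via the strong-to-weak reduction of \cite{C:CLLZ21} mentioned right after \cref{def:strongap}. Following the iO-plus-coset-state template of \cite{C:CLLZ21} as extended to puncturable schemes in \cite{ccakan2025copy}, I assume for contradiction that some QPT $\adve$ wins $\strongcprotectgamewp$ with probability at least $\epsilon$ for non-negligible $\epsilon$, and construct from $\adve$ an adversary for the coset monogamy game of \cref{lem:moelast} achieving success $\epsilon - \negl(\lambda)$. Choosing $d(\lambda)$ so that $\cmoe \cdot d(\lambda) = \omega(\log \lambda)$ yields the contradiction.

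The hybrid sequence is as follows. First, use \cref{lem:shobased} together with iO security to replace the obfuscated coset-membership program $\obfd{M}$ (hardcoding $A, a_1, a_2$) by the sparse version hardcoding $B_1, B_2, t, t'$, so that the adversary's view depends on the secret coset only through the state $\regi$ and the sparse hints. Second, by \cref{thm:multiatiprop} and \cref{lem:simati}, convert both threshold implementations in the winning condition into simulated approximate threshold implementations fed polynomially many classical samples from $\sampchal(st)$; by weak challenge resampling correctness these samples can be produced from $\csplit, \cpred, aux$ alone once the challenge input $x^*$ is fixed. Third, sample $x^* \samp \sampinp(st)$, set $S = \cpred^{-1}(x^*)$, and use subexponentially secure iO (rewriting $P$ to branch on $\cpred(x) \stackrel{?}{=} x^*$) to make $P$ consult a puncturable PRF punctured on $S$; then invoke the $2^{-\lambda^c}$-puncturing security of \cref{thm:puncprfexists} combined with a quantum-proof reconstructive extractor reduction (\cref{sec:ext}) to switch the PRF values on $S$ to uniformly random, accounting for the polynomial-but-bounded min-entropy of $x^*$ and the quantum side information carried by the adversary's state. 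Fourth, use iO once more to replace $C$ by $\csplit$ inside $P$; this is sound because the two circuits agree outside $S$, and on $S$ the $b=0$ branch no longer depends on $C$ after the previous hybrid.

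In this final hybrid, the one-time-pad structure of $P$ ensures that any non-trivial ability to produce $C(x)$ at some $x \in S$ requires supplying $P$ with both a valid primal vector $v_0 \in A + a_1$ and a valid dual vector $v_1 \in A^\perp + a_2$, while malleable-puncturing security of $\cscheme$ rules out any strategy beating the $\ptriv + \gamma$ threshold that does not internally recover $C$ on $S$. Applying \cref{thm:impindep} and \cref{lem:simulproj} to the joint post-simATI state of $\regi_1 \otimes \regi_2$, extract a pair of coset vectors $(v, w)$ from $(\regi_1, \regi_2)$ satisfying $v = \canonvec{A}{a_1}$ and $w = \canonvec{A^\perp}{a_2}$; this contradicts the information-theoretic $2^{-\cmoe d(\lambda)}$ bound of \cref{lem:moelast}.

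The main obstacle is the puncturing hop: unlike \cite{ccakan2025copy}, here $x^*$ is only high-min-entropy rather than pseudorandom, and $S = \cpred^{-1}(x^*)$ may be superpolynomially large. The resolution combines subexponentially secure iO and PRFs (to tolerate the size of $S$ and the granularity of the hybrid) with a quantum-proof reconstructive extractor argument of the following shape: any distinguisher between $\{\prf(K, x)\}_{x \in S}$ and uniform, given the adversary's quantum state and the classical leakage, yields an unbounded reconstructor recovering $x^*$ with probability exceeding the $2^{-\lambda^c}$ min-entropy bound of $\sampinp(st)$ conditioned on the adversary's view and $st$, which is a contradiction.
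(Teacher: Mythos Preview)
Your first sentence is exactly the paper's proof of the corollary: invoke \cref{thm:strongcopyprotect} and then the strong-to-regular anti-piracy implication of \cite{C:CLLZ21}. Nothing more is needed for \cref{thm:main} itself.

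The remainder of your proposal is a sketch of \cref{thm:strongcopyprotect}, and here your outline diverges substantially from the paper and has a real gap. The paper's proof hinges on a \emph{steganographic asymmetrically constrained encapsulation} (ACE) scheme, which you do not mention. The ACE lets the challenger embed, inside what looks like an honest sample from $\sampinp(st)$, an encrypted payload containing the subspace $A$, an extractor seed $se$, and a one-time-pad mask; the obfuscated program $P$ is modified (hybrids $\hyb_4$--$\hyb_9$) to decrypt this payload and internally compute $\ext(se,\canonvec{A}{v})$ on the coset vector $v$ supplied by the evaluator. The distinguishing advantage between the ``real'' challenge distribution and the ACE-laden one is then \emph{exactly} a distinguisher for the reconstructive extractor applied to the source $a_1^{Can}$ (respectively $a_2^{Can}$), so the reconstructor yields the canonical coset vector. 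This is the mechanism by which the proof extracts $(v,w)$ for the monogamy game (\cref{lem:moelast}); it is the content of \cref{claim:moeclaimp14} and \cref{claim:moeclaimp16}.

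Your alternative---puncture the PRF on $S=\cpred^{-1}(x^*)$, then use a reconstructive extractor on $x^*$ to argue the PRF values on $S$ look uniform---does not work as stated. The adversary holds $\obfd{P}$, which (gated by a valid coset vector) computes $\prf.\ceval(K,x)$ for \emph{every} $x$; there is no sense in which those values are hidden from someone who can query $P$, so a ``distinguisher for PRF values on $S$'' is not a meaningful object in this setting, and recovering $x^*$ from such a distinguisher is not what the reconstructive extractor property gives you. Moreover, your fourth step invokes malleable-puncturing security to rule out strategies that ``do not internally recover $C$ on $S$'', but you cannot invoke that security until the experiment depends on $C$ only through $\csplit$, and you cannot reach that point without first neutralising the hardcoded $C$-dependent outputs on $S$---which is precisely what the ACE/extractor machinery accomplishes in the paper (hybrids $\hyb_{12}$--$\hyb_{14}$ happen \emph{after} the ACE hybrids, not before). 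In short, the extraction of coset vectors and the removal of $C$ are interleaved in a way your linear four-step plan does not capture; the missing idea is the hidden-trigger/steganographic-challenge technique.
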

\begin{proof}
All the assumed primitives (in the construction and in the proof) can be instantiated using subexponentially secure indistinguishability obfuscation and one-way functions for any subexponential function. Correctness of the scheme is straightforward. The security follows from \cref{thm:strongcopyprotect} and the fact that strong anti-piracy for all inverse polynomial values implies regular anti-piracy (\cref{def:normal}), as shown by \cite{C:CLLZ21}.
\end{proof}

\section{Proof of Copy-Protection Security}
In this section, we prove \cref{thm:strongcopyprotect}. We prove security through a sequence of hybrids, each of which is constructed by modifying the previous one.

Let $\cscheme = (\ceval, \Ver, \chal, \sampinp, \sampchalfrominp, \fsplit)$ be the malleable-puncturable scheme being copy-protected. Suppose for a contradiction that there exists a QPT adversary $\adve$ and polynomials $q_1(\lambda), q_2(\lambda)$ such that $\Pr[\strongcprotectgamewp = 1] \geq \frac{1}{q_1(\lambda)}$ for infinitely many values of $\lambda > 0$, where $\gamma(\lambda) = \frac{1}{q_2(\lambda)}$. We also set the parameter $\epsilon^*(\lambda) = \frac{\gamma(\lambda)}{128}$. Another parameter, $\delta^*(\lambda)$, will be chosen to satisfy the parameter constraints given in the proof.Let $\ext$ be a constructive extractor with output length $\lambda^{c'}$ for some $c' \in \R^+$ (any $c'$ works, the precise value will depend on other parameters). against quantum side-information (\cref{sec:ext}) for sources that are $\extprob$ unpredictable sources. Let $\epsilon_{\ext}$ denote the extractor error level. Let $p_1(\lambda)$ denote the input size of the circuits of $\cscheme$, that is, $\mathcal{X} = \zo^{p_1(\lambda)}$. Let $p_5(\lambda)$ denote the output size of the circuits of $\cscheme$. Let $p_2(\lambda)$ denote the maximum of the circuit size of $\cscheme$ and the key length of the PRF scheme $\prf$. 
Let $\alpha_2(\lambda)$ denote the security of $\ace$, let $\alpha_3(\lambda)$ denote iO security, let $\alpha_4(\lambda)$ denote the security level from \cref{lem:shobased}.  $\alpha_1(\lambda)$ is a parameter that will be clear from the proof. Let $G$ be a PRG with appropriate input-output length.

\paragraph{\underline{$\hyb_0$:}} The original security game $\strongcprotectgamewp$. As a notation, we write $b^1_{Test,1}$ and $b^1_{Test,2}$ to denote the measurement outcomes obtained by the challenger by applying the threshold implementations to the registers $\regi_1$ and $\regi_2$, respectively.  We set $b^1_{Test} = b^1_{Test,1} \wedge b^1_{Test,2}$. Thus, outcome of the experiment in this hybrid is $b^1_{Test}$.

\paragraph{\underline{$\hyb_1$:}} Instead of applying the threshold implementation $\tip{\ptriv+\gamma(\lambda)}{\sampchal(st)}$ on the registers $\regi_1, \regi_2$, the challenger now instead applies approximate threshold implementation $\ati^{(1)} = \atip{\ptriv + \frac{126\gamma}{128}}{\delta^*}{\sampchal(st)}{\epsilon^*}$ on the registers $\regi_1$ and $\regi_2$. We still write $b^1_{Test,1}, b^1_{Test,2}$ to denote the measurement outcomes, respectively. Note that the output of the game is $b^1_{Test}$ in this hybrid.

\paragraph{\underline{$\hyb_2$:}} We modify the sampling of $\obfd{M}$ during the execution of $\copyprot.\genstate$. We first sample a superspace $B_1$ of $A$ of dimension $\frac{3d}{4}$, a subspace $B_2$ of $A$ of dimension $\frac{d}{4}$ and elements  $u_1 \samp B_1$, $u_2 \samp B_2^\perp$. Finally, we set $t=u_1+a_1$ and $t' = u_2+a_2$. Now, we sample $\obfd{M}  \samp \io(M')$ where $M'$ is the following program.
\begin{mdframed}
    {\bf \underline{$M'(b, v)$}}
    
     {\bf Hardcoded: $\textcolor{red}{t, t', B_1,B_2}$}
    \begin{enumerate}[label=\arabic*.]
        \item If $b = 0$: Check if $v \in \textcolor{red}{B_1 + t}$. If so, output $\outtrue$, otherwise output $\outfalse$, and terminate.
        \item If $b = 1$: Check if $v \in \textcolor{red}{B_2^\perp + t'}$. If so, output $\outtrue$, otherwise output $\outfalse$.
    \end{enumerate}
    \end{mdframed}

\paragraph{\underline{$\hyb_3$:}} After applying $\ati^{(1)}$ on the registers $\regi_1$ and $\regi_2$, the challenger applies $\ti^{(2)}\otimes \ti^{(2)}$ to the registers $\regi_1, \regi_2$ where $ \tip{\ptriv + \frac{122\gamma}{128}}{\sampchal(st)}$. We write $b^2_{Test,1}, b^2_{Test,2}$ to denote the measurement outcomes, respectively. Finally, the challenger now outputs $b^1_{Test,1}, b^1_{Test,2}, b^2_{Test,1}, b^2_{Test,2}$ as the output of the game. 

\paragraph{\underline{$\hyb_4$:}}  The challenger samples an $\ace$ instance as $sk \samp \ace.\Setup(1^\lambda)$, $ek = \ace.\mathsf{GenEK}(sk, \outfalse\footnote{This represents the circuit that always outputs $\outfalse$, meaning that the encapsulation key will not be punctured at all.})$ and $dk'=\ace.\mathsf{GenDK}(sk, \outtrue\footnote{This represents the circuit that always outputs $\outtrue$, meaning that the encapsulation key will be punctured everywhere.})$. Then it samples $r^*_\mathsf{ct,1}, r^*_\mathsf{ct,2} \samp \zo^{p_4(\lambda)}$ and computes $ct^*_1 = C^* \oplus G(r^*_\mathsf{ct,1}), ct^*_2 = K \oplus G(r^*_\mathsf{ct,2})$ and $a_1^{Can} = \canonvec{A}{a_1}, a_2^{Can} = \canonvec{A^\perp}{a_2}$. 
We also modify the way the challenger samples $\obfd{P}$: Now it samples it as $\obfd{P}\samp \io(P^{(1)})$. 
      \begin{mdframed}
    {\bf \underline{$P^{(1)}(x, b, v)$}}
    
     {\bf Hardcoded: $K, \obfd{M}, C, \textcolor{red}{\cpred, dk', ct^*_1, ct^*_2}$}
    \begin{enumerate}[label=\arabic*.]
        \item Check if $\obfd{M}(b, v) = \outtrue$. If not, output $\bot$ and terminate.

        \item \textcolor{red}{Compute $x' = \cpred(x)$.} 

        \item \textcolor{red}{Compute $pl = \ace.\dec(dk', x')$.  If $pl \neq \fail$, parse $t || A' || se' || z || ind = pl$.} 

        \item \textcolor{red}{If $pl = \fail$ or if $b = 0 \wedge t = 1$ or if $b = 1 \wedge t = 0$,}
        \begin{enumerate}[label=\arabic*.]
        \item If $b = 0$, output $C(x) \oplus \prf.\ceval(K, x)$ and terminate.
        \item If $b = 1$, output $\prf.\ceval(K, x)$ and terminate.
         \end{enumerate}

        \item \textcolor{red}{If $pl \neq \fail$,}
         \begin{enumerate}[label=\arabic*.]
         \item\textcolor{red}{ If $t = 0$, }
         \begin{enumerate}[label=\arabic*.]
         \item \textcolor{red}{Compute $a' = \canonvec{A'}{v}$ and $r' = \ext(se', a') \oplus z$.}
            \item\textcolor{red}{ Compute $C' = ct^*_1 \oplus G(r')$.}
            \item\textcolor{red}{ Output $C'(x) \oplus \prf.\ceval(K, x)$ and terminate.}
         \end{enumerate}

         \item\textcolor{red}{ If $t = 1$,} \begin{enumerate}[label=\arabic*.]
            \item \textcolor{red}{Compute $a' = \canonvec{(A')^\perp}{v}$ and $r' = \ext(se', a') \oplus z$.}
            \item \textcolor{red}{Compute $K' = ct^*_2 \oplus G(r')$.}
            \item \textcolor{red}{Output $\prf.\ceval(K', x)$ and terminate.}
        \end{enumerate}
         \end{enumerate}
    \end{enumerate}
    \end{mdframed}

\paragraph{\underline{$\hyb_5$:}} The challenger now sets $dk'=\ace.\mathsf{GenDK}(sk, \outfalse)$.

\paragraph{\underline{$\hyb_6$:}} 
We define the following distributions where   \begin{itemize}
     \item $ek_1 = \ace.\mathsf{GenEK}(sk, PRE_1)$ and $PRE_1(m)$ is the circuit that outputs $\outtrue$ if the first bit of $m$ is not $1$. That is, $ek_1$ can only encapsulate messages that start with $1$.
    \item $ek_0 = \ace.\mathsf{GenEK}(sk, PRE_0)$ and $PRE_0(m)$ is the circuit that outputs $\outtrue$ if the first bit of $m$ is not $0$. That is, $ek_0$ can only encapsulate messages that start with $0$.
\end{itemize} 
\begin{mdframed}
\underline{$\mathcal{D}^{(2)}_{1}$}

{\bf Hardcoded: $st, A, ek_0$}
\begin{enumerate}
    \item Sample an extractor seed as $se \samp \zo^{\ell_3(\lambda)}$.
    \item {Sample $r_1^{**} \samp \zo^{p_4(\lambda)}$.}
    \item {Set $pl_1 = 0 || A || se || r_1^{**} || \indtag$.}
    \item Sample $x^* \samp \ace.\stegenc(ek_0, pl_1, \codeof{\sampinp(st)})$.
    \item Sample $ak, ch \samp  \sampchalfrominp(st, x^*)$.
         \item Output $ak, ch$.
\end{enumerate}
 \end{mdframed}
\begin{mdframed}
\underline{$\mathcal{D}^{(2)}_{2}$}

{\bf Hardcoded: $cp^*, C^*, A, ek$}
\begin{enumerate}
    \item Sample an extractor seed as $se \samp \zo^{\ell_3(\lambda)}$.
    \item Sample $r_2^{**} \samp \zo^{p_4(\lambda)}$.
    \item {Set $pl_2 = 1 || A || se || r_2^{**} || \indtag$.}
    \item Sample $x^* \samp \ace.\stegenc(ek_1, pl_2, \codeof{\sampinp(st)})$.
    \item Sample $ak, ch \samp  \sampchalfrominp(st, x^*)$.
         \item Output $ak, ch$.
\end{enumerate}
 \end{mdframed}

Finally, instead of applying $\ti^{(2)}\otimes \ti^{(2)}$ to the registers $\regi_1, \regi_2$, the challenger now applies  $\ati^{(3,1)}\otimes \ati^{(3,2)}$ to the registers $\regi_1, \regi_2$, where $\ati^{(3,1)} = \atip{\ptriv + \frac{100\gamma}{128}}{\delta^*}{\mathcal{D}^{(2)}_1}{\epsilon^*}$ and $\ati^{(3,2)} = \atip{\ptriv + \frac{100\gamma}{128}}{\delta^*}{\mathcal{D}^{(2)}_2}{\epsilon^*}$. We still write $b^2_{Test,1}, b^2_{Test,2}$ to denote the measurement outcomes, respectively. Also, the challenger still outputs $b^1_{Test,1}, b^1_{Test,2}, b^2_{Test,1}, b^2_{Test,2}$ as the output of the game.

\paragraph{\underline{$\hyb_7$:}} The challenger flips a random coin $c' \samp \{1,2\}$ and only applies $\ati^{(3,c')}$ to $\regi_{c'}$, instead of testing both registers. Let $b^2_{Test}$  denote the measurement outcome. The challenger now outputs $b^1_{Test,1}, b^1_{Test,2}, b^2_{Test}$ as the experiment outcome.

\paragraph{\underline{$\hyb_8$:}} Instead of applying $\ati^{(3,c')}$ to $\regi_{c'}$, the challenger samples $ak^*, ch^* \samp \mathcal{D}^{(2)}_{c'}$ and runs $b^2_{Ver} \samp U(\regi_{c'}, ch)$.  The challenger now outputs $b^1_{Test,1}, b^1_{Test,2}, b^2_{Ver}$ as the experiment outcome. We write $x^*, r^{**}, se^*, pl^*$ to denote the values sampled/computed during the sampling of $ak^*, ch^*$.

\paragraph{\underline{$\hyb_9$:}}  First, we compute $z^*$ as follows.
If $c' = 1$, we compute $C' = ct_1^* \oplus G(\ext(se^*, a_1^{Can})\oplus r^{**})$ and $z^* = C'(act^*) \oplus F(K, act^*)$ where $C'$ is interpreted as a circuit. If $c' = 2$, we compute $z^* = \prf.\ceval(K', x^*)$ where $K' = ct_2^* \oplus G(\ext(se^*, a_2^{Can})\oplus r^{**})$. 
 The challenger computes $dk'' = \ace.\gendk(sk, \mathds{1}_{pl^*})$.

We modify the way the challenger samples $\obfd{P}$: Now it samples it as $\obfd{P}\samp \io(P^{(2)})$. 
      \begin{mdframed}
    {\bf \underline{$P^{(2)}(x, b, v)$}}
    
     {\bf Hardcoded: $K, \obfd{M}, C, {\cpred, ct^*_1, ct^*_2}, \textcolor{red}{dk'', x^*, z^*}$}
    \begin{enumerate}[label=\arabic*.]
        \item Check if $\obfd{M}(b, v) = \outtrue$. If not, output $\bot$ and terminate.

        \item {Compute $x' = \cpred(x)$.} 

                \item \textcolor{red}{If $x' = x^*$,}
        \begin{enumerate}[label=\arabic*.]
            \item \textcolor{red}{If $b = 0$ and $c' = 1$, output $z^*$ and terminate}.
            \item \textcolor{red}{If $b = 1$ and $c' = 2$, output $z^*$ and terminate}.
            \item \textcolor{red}{Otherwise if $b=1$, output $\prf.\ceval(K, x)$ and terminate.}
            \item \textcolor{red}{Otherwise if $b=0$, output $C(x) \oplus \prf.\ceval(K, x)$ and terminate.}
        \end{enumerate}

        \item {Compute $pl = \ace.\dec(\textcolor{red}{dk''}, x')$.  If $pl \neq \fail$, parse $t || A' || se' || z || ind = pl$.} 

        \item {If $pl = \fail$ or if $b = 0 \wedge t = 1$ or if $b = 1 \wedge t = 0$,}
        \begin{enumerate}[label=\arabic*.]
        \item If $b = 0$, output $C(x) \oplus \prf.\ceval(K, x)$ and terminate.
        \item If $b = 1$, output $\prf.\ceval(K, x)$ and terminate.
         \end{enumerate}

        \item {If $pl \neq \fail$,}
         \begin{enumerate}[label=\arabic*.]
         \item{ If $t = 0$, }
         \begin{enumerate}[label=\arabic*.]
         \item {Compute $a' = \canonvec{A'}{v}$ and $r' = \ext(se', a') \oplus z$.}
            \item{ Compute $C' = ct^*_1 \oplus G(r')$.}
            \item{ Output $C'(x) \oplus \prf.\ceval(K, x)$ and terminate.}
         \end{enumerate}

         \item{ If $t = 1$,} \begin{enumerate}[label=\arabic*.]
            \item{Compute $a' = \canonvec{(A')^\perp}{v}$ and $r' = \ext(se', a') \oplus z$.}
            \item {Compute $K' = ct^*_2 \oplus G(r')$.}
            \item {Output $\prf.\ceval(K', x)$ and terminate.}
        \end{enumerate}
         \end{enumerate}
    \end{enumerate}
    \end{mdframed}

    Also, from this hybrid on, our obfuscated programs will have two branches depending on the value of $c'$. However, since $c'$ is sampled at the beginning of the experiment, we simply (and implicitly) remove the dead branch (i.e. if $c' = 1$, we can remove the lines that are executed if $c' = 1$, along with the hardwired values that are only used there). Note that this does not change the proof thanks to $\io$ security.

\paragraph{\underline{$\hyb_{10}$:}} The challenger now samples $x^* \samp \sampinp(st)$ instead of sampling using $\ace.\stegenc$.

\paragraph{\underline{$\hyb_{11}$:}} The challenger now samples $\obfd{M}$ as $\obfd{M} \samp \io(M)$.

\paragraph{\underline{$\hyb_{12}$:}} We sample $K\{x^*\} \samp \prf.\punc(K, x^*)$. We also set $s_1^* = \prf.\ceval(K, x^*)$ and $s_2^* = C(x^*) \oplus s_1^*$. We also sample $\csplit \samp \fsplit(st, x^*)$. Then, we modify the obfuscated program $\obfd{P}$ by now sampling it as $\obfd{P} \samp \io(P^{(3)})$.

      \begin{mdframed}
    {\bf \underline{$P^{(3)}(x, b, v)$}}
    
     {\bf Hardcoded: $\obfd{M}, {\cpred, ct^*_1, ct^*_2}, dk'', x^*, z^*, \textcolor{red}{\csplit, K\{x^*\}, s_1^*, s_2^*}$}
    \begin{enumerate}[label=\arabic*.]
        \item Check if $\obfd{M}(b, v) = \outtrue$. If not, output $\bot$ and terminate.

        \item {Compute $x' = \cpred(x)$.} 

                \item {If $x' = x^*$,}
        \begin{enumerate}[label=\arabic*.]
            \item{If $b = 0$ and $c' = 1$, output $z^*$ and terminate}.
            \item{If $b = 1$ and $c' = 2$, output $z^*$ and terminate}.
            \item{Otherwise if $b=1$, output \textcolor{red}{$s_1^*$} and terminate.}
            \item{Otherwise if $b=0$, output \textcolor{red}{$s_2^*$} and terminate.}
        \end{enumerate}

        \item {Compute $pl = \ace.\dec({dk''}, x')$.  If $pl \neq \fail$, parse $t || A' || se' || z || ind = pl$.} 

        \item {If $pl = \fail$ or if $b = 0 \wedge t = 1$ or if $b = 1 \wedge t = 0$,}
        \begin{enumerate}[label=\arabic*.]
        \item If $b = 0$, output $\textcolor{red}{\csplit}(x) \oplus \prf.\ceval(\textcolor{red}{K\{x^*\}}, x)$ and terminate.
        \item If $b = 1$, output $\prf.\ceval(\textcolor{red}{K\{x^*\}}, x)$ and terminate.
         \end{enumerate}

        \item {If $pl \neq \fail$,}
         \begin{enumerate}[label=\arabic*.]
         \item{ If $t = 0$, }
         \begin{enumerate}[label=\arabic*.]
         \item {Compute $a' = \canonvec{A'}{v}$ and $r' = \ext(se', a') \oplus z$.}
            \item{ Compute $C' = ct^*_1 \oplus G(r')$.}
            \item{ Output $C'(x) \oplus \prf.\ceval(K, x)$ and terminate.}
         \end{enumerate}

         \item{ If $t = 1$,} \begin{enumerate}[label=\arabic*.]
            \item{Compute $a' = \canonvec{(A')^\perp}{v}$ and $r' = \ext(se', a') \oplus z$.}
            \item {Compute $K' = ct^*_2 \oplus G(r')$.}
            \item {Output $\prf.\ceval(K', x)$ and terminate.}
        \end{enumerate}
         \end{enumerate}
    \end{enumerate}
    \end{mdframed}

\paragraph{\underline{$\hyb_{13}$:}} We modify the way $ct_1^*,ct_2^*$ are computed: Now we sample $ct_1^*, ct_2^* \samp \zo^{p_2(\lambda)}$.

\paragraph{\underline{$\hyb_{14}$:}} If $c' = 1$, we sample $s_1^*$ uniformly at random, and also set $z^* = C'(x^*) \oplus s_1^*$  where $C' = ct^*_1 \oplus G(\ext(se^*, a_1^{Can})\oplus r^{**})$, and also set $s_2^* = \bot$. If $c' = 2$, we sample $s_2^*$ uniformly at random and set $s_1^* = \bot$. Also recall that when $c'=2$, $z^*$ is computed as $z^* = \prf.\ceval(K', x^*)$ where $K' = ct^*_2 \oplus G(\ext(se^*, a_2^{Can})\oplus r^{**})$, which does not use the \emph{actual} PRF key $K$.

\begin{claim}\label{lem:unclonlasthyb}
There exists a polynomial $g(\cdot)$ such that
    $\Pr[b^2_{Ver} = 1] \geq \frac{1}{g(\lambda)}$ where $b^1_{Test,1}, b^1_{Test,2}, b^2_{Ver} \samp \hyb_{14}$.
\end{claim}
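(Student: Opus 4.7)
The plan is to track how the non-negligible success probability propagates through the hybrid chain. Starting from the assumption that $\Pr[b^1_{Test} = 1] \geq 1/q_1(\lambda)$ in $\hyb_0$, I will argue that in each successive hybrid the appropriate joint success event holds with probability at least inverse polynomial in $\lambda$, which in particular gives $\Pr[b^2_{Ver} = 1] \geq 1/g(\lambda)$ in $\hyb_{14}$. The transitions split into computational indistinguishability steps (negligible loss) and measurement-manipulation steps (polynomial loss in two specific places).

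The computational steps are standard reductions. $\hyb_1 \to \hyb_2$ and $\hyb_{10} \to \hyb_{11}$ invoke \cref{lem:shobased} to swap the obfuscated subspace-membership program $\obfd{M}$. Hybrids $\hyb_3 \to \hyb_4$, $\hyb_8 \to \hyb_9$, and $\hyb_{11} \to \hyb_{12}$ are iO hops after establishing functional equivalence between the successive programs $P, P^{(1)}, P^{(2)}, P^{(3)}$ on every input (using $\ace.\dec$ correctness, correctness of $\fsplit$ on all inputs $x$ with $\cpred(x) \neq x^*$, and PRF puncturing correctness on $K\{x^*\}$). $\hyb_4 \to \hyb_5$ and the $ek$ swap inside $\hyb_5 \to \hyb_6$ use $\ace$ puncture-hiding. $\hyb_9 \to \hyb_{10}$ uses the steganographic property of $\ace$ to replace $\stegenc(ek, pl, \codeof{\sampinp(st)})$ by a direct $\sampinp(st)$ sample. $\hyb_{12} \to \hyb_{13}$ uses PRG security to randomize $ct_1^*, ct_2^*$, and $\hyb_{13} \to \hyb_{14}$ uses PRF puncturing security at $x^*$ (since the obfuscated program already contains only $K\{x^*\}$) to replace $\prf.\ceval(K, x^*)$ by uniform randomness.

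The measurement-manipulation transitions rely on \cref{thm:singleatiprop}, \cref{thm:multiatiprop}, and \cref{lem:simati}. $\hyb_0 \to \hyb_1$ and $\hyb_5 \to \hyb_6$ swap TI for ATI at slightly relaxed thresholds, losing at most $\delta^*$ per application. $\hyb_2 \to \hyb_3$ inserts a second TI measurement $\ti^{(2)}$; this is sound because the collapsed state after $\ati^{(1)}$ is a mixture of eigenstates with sufficiently high eigenvalue, so the additional TI with threshold $\ptriv + 122\gamma/128$ still accepts with probability close to $1$. $\hyb_6 \to \hyb_7$ tests only one of the two registers chosen uniformly at random, preserving the joint success probability up to a factor of $2$. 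Finally, $\hyb_7 \to \hyb_8$ replaces the ATI by a single verification; by \cref{lem:simati} together with the collapsed-state guarantee of \cref{thm:singleatiprop}, a fresh single-shot sample from $\mathcal{D}^{(2)}_{c'}$ followed by running the universal circuit on $\regi_{c'}$ is accepted with probability at least $\ptriv + \Omega(\gamma)$, which is $\Omega(1/\poly(\lambda))$ above $\ptriv$ and hence a polynomial fraction of the previous success probability.

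The main obstacle will be the detailed functional-equivalence analyses underlying the iO hops, particularly verifying that $P^{(1)}$ and $P$ agree on every input when $dk' = \ace.\mathsf{GenDK}(sk, \outtrue)$ (so that $\ace.\dec$ always fails and the added branches are dead code), that $P^{(2)}$ and $P^{(1)}$ differ only on the branch $x' = x^*$ which is consistently handled by $z^*$, and that $P^{(3)}$ agrees with $P^{(2)}$ on all $x$ with $\cpred(x) \neq x^*$ using malleable-puncturing correctness of $\fsplit$ and puncturing correctness of $\prf.\punc$. Once these equivalences are in place, composing all the negligible additive losses with the constant factors from the measurement transitions yields an inverse-polynomial lower bound $1/g(\lambda)$ on $\Pr[b^2_{Ver} = 1]$ in $\hyb_{14}$.
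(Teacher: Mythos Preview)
Your outline handles most transitions correctly, but there is a genuine gap at $\hyb_5 \to \hyb_6$. You describe this step as ``swap TI for ATI at slightly relaxed thresholds'' together with an ``$ek$ swap'' handled by $\ace$ puncture-hiding. That is not what happens. In $\hyb_6$ the second-round measurements change from $\ti^{(2)}$ over $\sampchal(st)$ to $\ati^{(3,1)},\ati^{(3,2)}$ over the \emph{new} distributions $\mathcal{D}^{(2)}_1,\mathcal{D}^{(2)}_2$, which steganographically embed $A$ and fresh one-time-pad values into the challenge via $\ace.\stegenc$. Because by $\hyb_5$ the decapsulation key inside $\obfd{P}$ is $dk'=\ace.\gendk(sk,\outfalse)$, these challenges \emph{do} decrypt inside the program and trigger the alternate branch, so the two challenge distributions are not indistinguishable to the holder of $\regi_{c'}$; no amount of ACE puncture-hiding or TI/ATI bookkeeping bridges this.

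The paper does not claim indistinguishability here. It proves instead (\cref{lem:moebased}) that, conditioned on $b^1_{Test,1}=b^1_{Test,2}=1$, it cannot be that \emph{both} $\ati^{(3,1)}(\regi_1)$ and $\ati^{(3,2)}(\regi_2)$ output $0$ except with probability $1-\frac{1}{\poly}$. The argument is by contradiction and is the technical core of the whole proof: if both failed, then via the reconstructive extractor property of $\ext$ and a careful $\simati$-based hybrid (\cref{claim:moeclaimp14}, \cref{claim:moeclaimp16}) one would obtain algorithms that recover $a_1^{Can}$ from $\regi_1$ and $a_2^{Can}$ from $\regi_2$ simultaneously, violating the coset-state monogamy-of-entanglement (\cref{lem:moelast}). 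This is also why $\hyb_7$ only tests a \emph{random} single register: the conclusion from $\hyb_6$ is only that at least one of the two ATIs succeeds with noticeable probability, not both. Your plan omits this reduction entirely, so as stated it does not establish the claimed lower bound.
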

We prove this claim in \cref{sec:unclonlasthyb}.

\begin{lemma}
    $\Pr[b^2_{Ver} = 1] \leq \negl(\lambda)$ where $b^1_{Test,1}, b^1_{Test,2}, b^2_{Ver} \samp \hyb_{14}$.
\end{lemma}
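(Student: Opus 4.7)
I would prove the bound by reducing the single-register challenge $b^2_{Ver}$ in $\hyb_{14}$ to the malleable-puncturing security of $\cscheme$ (\cref{def:punc}). The hybrid $\hyb_{14}$ is engineered precisely so that the obfuscated program $\obfd{P}$ depends on $C$ only through the punctured circuit $\csplit$: every $C$-evaluation inside $P^{(3)}$ is either short-circuited by the hardwired uniform values $s_1^*,s_2^*,z^*$ at the distinguished point $x^*$, or routed through $\csplit$, which agrees with $C$ on every $x$ with $\cpred(x)\neq x^*$. Simultaneously, the challenge $ch^*$ is produced by $\sampchalfrominp(st,x^*)$ for a fresh $x^*\samp\sampinp(st)$. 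Together, these match the inputs handed to an adversary inside $\puncgamewp{\cdot}$, so any successful strategy against $\hyb_{14}$ transports to a successful mpunc strategy.

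\paragraph{Reduction.} The reduction $\mathcal{R}$ plays the mpunc game externally while internally simulating $\hyb_{14}$ to $\adve=(\adve_0,\adve_1,\adve_2)$. First, $\mathcal{R}$ relays the setup-phase interaction between the external mpunc challenger and $\adve$, so that the challenger internally outputs $(C,aux,\cpred)$ and $st$ without $\mathcal{R}$ ever observing $C$ or $st$ directly. Upon receiving $(\csplit,\cpred),aux,ch,x$ from the external challenger, $\mathcal{R}$ sets $x^*=x$, $ch^*=ch$ and samples locally every other $\hyb_{14}$ component: the coset $(A,a_1,a_2)$ together with $\obfd{M}\samp\io(M)$; ACE keys $(sk,dk'')$ with $dk''=\ace.\gendk(sk,\mathds{1}_{pl^*})$ for a locally chosen $pl^*$; uniform $ct_1^*,ct_2^*$; a PRF key $K$ and its puncturing $K\{x^*\}$; a coin $c'\samp\{1,2\}$; a uniform $s_{c'}^*$ with $s_{3-c'}^*=\bot$; and $z^*$ computed from $s_{c'}^*$ exactly as in $\hyb_{14}$. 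Using $\csplit$ in place of $C$, it constructs and obfuscates $P^{(3)}$, then hands $cp=(\obfd{P},aux)$ together with the coset register to $\adve_0$. Once $\adve_0$ returns $\regi_1,\regi_2$, the reduction applies $\ati^{(1)}\otimes\ati^{(1)}$; since $\mathcal{R}$ lacks $st$, it implements each $\ati^{(1)}$ via the $\simati$ procedure of \cref{lem:simati} fed with samples from $\sampchal(st)$ generated through \emph{weak challenge resampling correctness} applied to $(\csplit,\cpred,aux)$. Finally, $\mathcal{R}$ runs the universal circuit on $\regi_{c'}$ with input $ch^*$, obtains $ans'$, and submits it as its mpunc answer.

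\paragraph{Analysis and main obstacle.} The internal view of $\adve$ inside $\mathcal{R}$ is statistically close to its view in $\hyb_{14}$; the only deviation is the $\simati$ approximation error, which can be made smaller than any inverse polynomial by tuning the $(\epsilon,\delta,\alpha)$ parameters of \cref{lem:simati}. Hence $\Pr[\mathcal{R}\text{ wins }\puncgame]$ differs from $\Pr[b^2_{Ver}=1]$ by at most a negligible term, and the malleable-puncturing security of $\cscheme$ forces $\Pr[b^2_{Ver}=1]$ to sit negligibly close to the trivial baseline $\ptriv(\lambda)$. Combined with the non-negligible gap over the same baseline maintained through the hybrid chain (and formally recorded by \cref{lem:unclonlasthyb}), this yields the contradiction that drives the proof of \cref{thm:strongcopyprotect}. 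The principal obstacle is faithfully simulating $\hyb_{14}$ without direct access to either $C$ or $st$: for $C$, the syntactic design of $P^{(3)}$ ensures every reachable $C$-query falls in the regime where $\csplit$ and $C$ agree by \cref{def:punc}, so substituting $\csplit$ is transparent; for $st$, weak challenge resampling correctness is precisely what allows $\mathcal{R}$ to draw samples from $\sampchal(st)$ and thereby approximate $\ati^{(1)}$ to within negligible error.
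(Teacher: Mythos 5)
Your proposal is correct and takes essentially the same route as the paper, whose proof of this lemma is a one-line appeal to malleable-puncturable security on the grounds that the experiment only uses $\csplit$; your reduction is a faithful elaboration of that one-liner, and your identification of the two simulation obstacles (replacing $C$ by $\csplit$, and implementing $\ati^{(1)}$ without $st$ via weak challenge resampling correctness) spells out exactly what the paper leaves implicit. Your observation that the security definition only yields a bound of $\ptriv+\negl$ rather than the stated $\negl$ --- which still suffices against the $\ptriv+\Omega(\gamma)$ gap carried through the hybrids --- is if anything more precise than the paper's phrasing.
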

\begin{proof}
    This follows directly by the malleable-puncturable security of the scheme $\cscheme$ since the experiment only uses the punctured circuit $\csplit$.
\end{proof}

Since the statements above contradict each other, our hypothesis that there exists a QPT adversary $\adve$ and polynomials $q_1(\lambda), q_2(\lambda)$ such that $\Pr[\strongcprotectgamewp = 1] \geq \frac{1}{q_1(\lambda)}$ for infinitely many values of $\lambda > 0$ is wrong. Thus, for any polynomial $q_2(\lambda)$, we have  $\Pr[\strongcprotectgamewp = 1] \leq \negl(\lambda)$ where $\gamma = \frac{1}{q_2(\lambda)}$, which completes the proof.

\subsection{Indistinguishability of Hybrids: Proof of \cref{lem:unclonlasthyb}}\label{sec:unclonlasthyb}
\begin{claim}
$\Pr[b^1_{Test} = 1] \geq \frac{3}{4q_1(\lambda)}$
    where $b^1_{Test}  \samp \hyb_1$.
\end{claim}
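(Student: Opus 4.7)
The plan is to invoke \cref{thm:multiatiprop} (first bullet) to replace the tensor product of threshold implementations used in $\hyb_0$ by the tensor product of approximate threshold implementations used in $\hyb_1$, at the cost of a small additive slack $2\delta^*$. First, I would fix the bipartite state $\rho$ of $(\regi_1,\regi_2)$ as output by $\adve_0$: since $\hyb_0$ and $\hyb_1$ differ only in the final measurement applied to these registers, $\rho$ has the same distribution in both hybrids. By the contradiction hypothesis driving the entire proof,
$$\Tr\!\left[\left(\tip{\ptriv+\gamma}{\sampchal(st)}\otimes \tip{\ptriv+\gamma}{\sampchal(st)}\right)\rho\right]\;=\;\Pr[b^1_{Test}=1 \text{ in } \hyb_0]\;\geq\;\frac{1}{q_1(\lambda)}.$$

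Next, I would apply \cref{thm:multiatiprop} with $k=2$, $\eta_1=\eta_2=\ptriv+\gamma$, $\eps=\epsilon^*=\gamma/128$, and $\delta=\delta^*$, yielding
$$\Tr\!\left[\left(\atip{\ptriv+\frac{127\gamma}{128}}{\delta^*}{\sampchal(st)}{\epsilon^*}\right)^{\otimes 2}\!\rho\right]\;\geq\;\frac{1}{q_1(\lambda)}\;-\;2\delta^*.$$
The $\ati$-measurement appearing in $\hyb_1$ has threshold $\ptriv+\frac{126\gamma}{128}$, which is smaller by exactly one $\epsilon^*$-tick; since $\ati^{\epsilon^*,\delta^*}$ internally estimates the success probability and compares it to the threshold, lowering the threshold can only increase the acceptance probability, and the same lower bound carries over to $\ati^{(1)}\otimes\ati^{(1)}$. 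Hence $\Pr[b^1_{Test}=1\text{ in }\hyb_1]\geq \frac{1}{q_1(\lambda)}-2\delta^*$.

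To finish, I would fix $\delta^*(\lambda)\leq \tfrac{1}{8q_1(\lambda)}$, which is consistent with the (much tighter) constraints later hybrids impose on $\delta^*$ and can always be taken to be any sufficiently small inverse polynomial. This yields $\Pr[b^1_{Test}=1\text{ in }\hyb_1]\geq \frac{3}{4q_1(\lambda)}$, as desired. I do not foresee any substantive obstacle here: the step is essentially a one-shot application of the $\ati$-versus-$\ti$ comparison, and the only very minor point is the one-tick gap between the threshold produced by \cref{thm:multiatiprop} and the threshold hardwired into $\hyb_1$, which is absorbed by monotonicity of $\ati$ in the threshold parameter.
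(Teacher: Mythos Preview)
Your proposal is correct and matches the paper's approach, which simply cites \cref{thm:multiatiprop}; you have just filled in the details (bounding $2\delta^*$ by $\tfrac{1}{4q_1(\lambda)}$ via the choice of $\delta^*$). One micro-simplification: rather than invoking monotonicity of $\ati$ in the threshold to handle the one-$\epsilon^*$ gap, you can use monotonicity of $\ti$ (which is immediate since $\ti_\eta$ projects onto eigenvectors with eigenvalue $\geq \eta$) to first pass from threshold $\ptriv+\gamma$ to $\ptriv+\tfrac{127\gamma}{128}$, and then apply \cref{thm:multiatiprop} directly at that threshold to land exactly at $\ptriv+\tfrac{126\gamma}{128}$.
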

\begin{proof}
Follows from \cref{thm:multiatiprop}.
\end{proof}

\begin{claim}
For $b^1_{Test,1}, b^1_{Test,2} \samp \hyb_2$,
    \begin{itemize}
    \item $\Pr[b^1_{Test,1} = 1 \wedge b^1_{Test, 2} = 1] \geq \frac{1}{2q_1(\lambda)}$
\end{itemize}
\end{claim}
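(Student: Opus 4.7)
The plan is to reduce indistinguishability of $\hyb_1$ and $\hyb_2$ to \cref{lem:shobased}. The only syntactic change from $\hyb_1$ to $\hyb_2$ is inside $\copyprot.\genstate$: the obfuscation $\obfd{M}$ that the challenger publishes is switched from $\io(M)$ (with $A, a_1, a_2$ hardcoded) to $\io(M')$ (with $t, t', B_1, B_2$ hardcoded), while the coset register $\regi$, the relations defining $B_1, B_2, t, t'$, and every subsequent action of the experiment are identical. This is exactly the change captured by the games $\game_\adve^0(1^\lambda)$ and $\game_\adve^1(1^\lambda)$ of \cref{lem:shobased}.

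I would then describe a reduction. Given $(\regi, A, B_1, B_2, t, t', \obfd{M})$ produced by the challenger of $\game_\adve^c(1^\lambda)$, the reduction simulates the rest of the copy-protection experiment internally: it sets $pp = \obfd{M}$, runs the $\chal\leftrightarrow\adve$ interaction to obtain $k = (C, aux, \cpred)$ and $st$, executes $\copyprot.\prot(pp, k)$ (which only touches $A, a_1, a_2$ through $\obfd{M}$), hands $(cp, \regi)$ to $\adve_0$ to obtain $\regi_1, \regi_2$, and finally applies $\ati^{(1)} \otimes \ati^{(1)}$ to $(\regi_1, \regi_2)$ to read out $b^1_{Test,1}, b^1_{Test,2}$. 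Every step past the generation of $pp$ and $\regi$ is QPT and uses only information the reduction already possesses, so the reduction is QPT. When $c = 0$ it produces exactly $\hyb_1$, and when $c = 1$ it produces exactly $\hyb_2$.

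Since $\trd{\game_\adve^0(1^\lambda)}{\game_\adve^1(1^\lambda)} \leq \negl(\lambda)$ by \cref{lem:shobased}, and trace distance is non-increasing under the QPT post-processing performed by the rest of the simulation (which is a CPTP map followed by measurements), the probability of any event measured at the end of the experiment differs by at most $\negl(\lambda)$ between $\hyb_1$ and $\hyb_2$. In particular, $\Pr[b^1_{Test,1} = 1 \wedge b^1_{Test,2} = 1]$ changes by at most $\negl(\lambda)$. Combined with the previous claim, which gives $\Pr[b^1_{Test,1}=1 \wedge b^1_{Test,2}=1] = \Pr[b^1_{Test}=1] \geq 3/(4 q_1(\lambda))$ in $\hyb_1$, we conclude $\Pr[b^1_{Test,1}=1 \wedge b^1_{Test,2}=1] \geq 3/(4 q_1(\lambda)) - \negl(\lambda) \geq 1/(2 q_1(\lambda))$ in $\hyb_2$ for all sufficiently large $\lambda$.

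The main thing to be careful about is just verifying that the reduction never needs $a_1, a_2$ or the subspace $A$ in the clear beyond what \cref{lem:shobased} hands it: the program $P$ only hard-codes $K, \obfd{M}, C$, and $\ati^{(1)}$ applies the universal circuit on the adversarial register, so neither step touches $A, a_1, a_2$ directly. Once that bookkeeping is in place, the argument is just a black-box application of \cref{lem:shobased} followed by a constant-factor loss in the winning probability.
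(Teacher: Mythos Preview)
Your proposal is correct and follows exactly the approach the paper intends: the paper's own proof is the single line ``This follows directly from \cref{lem:shobased},'' and what you have written is a careful unpacking of that reduction, including the bookkeeping that the simulator only needs $\regi$ and $\obfd{M}$ (not $A,a_1,a_2$ in the clear) to run $\copyprot.\prot$ and $\ati^{(1)}$.
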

\begin{proof}
    This follows directly from \cref{lem:shobased}.
\end{proof}

\begin{claim}
For $b^1_{Test,1}, b^1_{Test,2}, b^2_{Test,1}, b^2_{Test,2} \samp \hyb_3$,
    \begin{itemize}
    \item $\Pr[b^1_{Test,1} = 1 \wedge b^1_{Test, 2} = 1] \geq \frac{1}{2q_1(\lambda)}$
    \item $\Pr[b^2_{Test,1} \wedge b^2_{Test,2} = 1 \big| b^1_{Test,1} = 1 \wedge b^1_{Test,2} = 1] \geq  1 -  6\delta^*$
\end{itemize}
\end{claim}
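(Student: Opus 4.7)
The plan is to prove the two bullets separately, both via the properties of approximate and exact threshold implementations stated in \cref{thm:multiatiprop}, leveraging the fact that the only change from $\hyb_2$ to $\hyb_3$ is the application of additional measurements \emph{after} the outcomes $b^1_{Test,1}, b^1_{Test,2}$ have been determined.

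For the first bullet, note that in $\hyb_3$ the measurements $\ti^{(2)}\otimes \ti^{(2)}$ are applied strictly \emph{after} $\ati^{(1)}$ on $\regi_1, \regi_2$. Therefore the marginal joint distribution of $(b^1_{Test,1}, b^1_{Test,2})$ is identical to the one in $\hyb_2$. The bound $\Pr[b^1_{Test,1}=1\wedge b^1_{Test,2}=1]\geq \frac{1}{2q_1(\lambda)}$ then follows immediately from the previous claim.

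For the second bullet, I would invoke the second bullet of \cref{thm:multiatiprop} with $k=2$, $\eps=\epsilon^*$, $\delta=\delta^*$, and threshold $\eta=\ptriv+\frac{126\gamma}{128}$. Applied to the bipartite state on $(\regi_1, \regi_2)$ just before the challenger's measurements, it tells us that conditioning on $\ati^{(1)}\otimes \ati^{(1)}$ yielding outcome $(1,1)$, the collapsed state $\rho'$ satisfies
\begin{equation*}
    \Tr\!\left[\big(\tip{\eta-2\epsilon^*}{\sampchal(st)}\otimes \tip{\eta-2\epsilon^*}{\sampchal(st)}\big)\rho'\right]\geq 1-4\delta^*.
\end{equation*}
Since $\eta-2\epsilon^*=\ptriv+\frac{124\gamma}{128}>\ptriv+\frac{122\gamma}{128}$, and a threshold implementation with a lower threshold is a projection onto a larger subspace (and so succeeds with at least as high a probability), the same conclusion holds for $\ti^{(2)}\otimes \ti^{(2)}$. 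This gives the desired bound (even $1-4\delta^*$, which is stronger than the stated $1-6\delta^*$; the extra slack is absorbed later in the proof).

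I do not anticipate any real obstacle: this claim is essentially a bookkeeping step to transition from approximate to exact threshold implementations while taking advantage of the deliberate $4\epsilon^*$ gap between the $\ati^{(1)}$ and $\ti^{(2)}$ thresholds built into the parameter choices.
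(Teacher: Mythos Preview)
Your proposal is correct and matches the paper's approach, which simply cites \cref{thm:multiatiprop}: the first bullet is immediate since the extra $\ti^{(2)}$ measurements are applied after $b^1_{Test,1},b^1_{Test,2}$ are fixed, and the second bullet is exactly the $k=2$ instance of the second item of \cref{thm:multiatiprop} combined with the monotonicity of threshold implementations in the threshold. Your observation that one actually gets $1-4\delta^*$ is fine; the paper's $1-6\delta^*$ is just slack.
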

\begin{proof}
Follows from \cref{thm:multiatiprop}.
\end{proof}

\begin{claim}
$\hyb_3\approx_{\alpha_3} \hyb_4$
\end{claim}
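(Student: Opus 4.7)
The plan is a direct reduction to iO security (at level $\alpha_3$). The only changes from $\hyb_3$ to $\hyb_4$ that can possibly reach the adversary are funneled through the obfuscated program $\obfd{P}$: the challenger additionally samples an $\ace$ instance $(sk, ek, dk')$, PRG seeds $r^*_\mathsf{ct,1}, r^*_\mathsf{ct,2}$ with ciphertexts $ct^*_1 = C^* \oplus G(r^*_\mathsf{ct,1})$, $ct^*_2 = K \oplus G(r^*_\mathsf{ct,2})$, and canonical vectors $a_1^{Can}, a_2^{Can}$, but none of these are exposed directly to $\adve$ — they only appear hardcoded inside $\obfd{P}$. All other components (coset setup, the measurements $\ati^{(1)}$ and $\ti^{(2)}$, and the $\chal$–$\adve$ setup interaction) are unchanged, and the hybrid's output bits are produced by the same efficient procedures on top of the same registers.

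The first step is to check that $P$ and $P^{(1)}$ compute the same function on every input. The crux is that $dk' = \ace.\mathsf{GenDK}(sk, \outtrue)$ is punctured everywhere, so $\ace.\dec(dk', x') = \fail$ for every $x'$. Walking through $P^{(1)}(x, b, v)$: the $\obfd{M}$ check and $\bot$-return match $P$ exactly; after computing $x' = \cpred(x)$, the decapsulation always returns $\fail$, which forces execution into the step-5 branch that outputs $C(x) \oplus \prf.\ceval(K, x)$ when $b=0$ and $\prf.\ceval(K, x)$ when $b=1$ — exactly $P$'s behavior. The extra branches of $P^{(1)}$ are therefore dead code in this hybrid.

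The second step is a standard iO reduction. Given a QPT distinguisher $\mathcal{D}$ for $\hyb_3$ vs $\hyb_4$, build an iO adversary $\mathcal{B}$ that samples internally the coset state, $\obfd{M}$, $K$, the $\ace$ instance, the PRG-based ciphertexts, and runs the $\chal$–$\adve$ setup; it then outputs the padded pair $(P, P^{(1)})$ as its iO challenge together with the rest of the state as auxiliary information. Upon receiving the challenge obfuscation, $\mathcal{B}$ plugs it in as $\obfd{P}$, completes the remainder of the experiment exactly as described (including the efficient approximate threshold implementations of \cref{thm:singleatiprop}), and forwards the outcome to $\mathcal{D}$. By the functional equivalence established above, $\mathcal{B}$ is a valid iO adversary, so its distinguishing advantage — and hence $\mathcal{D}$'s — is bounded by $\alpha_3$.

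The only care points are (i) padding $P$ and $P^{(1)}$ to equal size, which is covered by the implicit padding convention declared in \cref{sec:cons}, and (ii) confirming that $\mathcal{B}$'s internal simulation is QPT, which holds because the threshold measurements involved are efficient. There is no substantive obstacle: this is a syntactic hybrid that merely inserts the $\ace$/PRG scaffolding needed for later hybrids, while keeping the obfuscated program's input-output behavior unchanged.
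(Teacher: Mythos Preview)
Your proposal is correct and takes essentially the same approach as the paper: both argue that $P$ and $P^{(1)}$ are functionally equivalent because $dk' = \ace.\gendk(sk,\outtrue)$ forces $\ace.\dec(dk',\cdot)$ to always return $\fail$ (by safety of constrained decapsulation), so the new branches of $P^{(1)}$ are dead code, and then invoke $\io$ security at level $\alpha_3$. Your write-up is simply a more detailed version of the paper's two-line proof.
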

\begin{proof}
Observe that $P$ and $P^{(1)}$ have exactly the same functionality: Due to the constrained decapsulation safety of $\ace$, $\ace.\dec(dk', x)$ will always output $\bot$ since $dk'= \ace.\mathsf{GenDK}(ek, \outtrue)$ (i.e, the key is punctured everywhere). Hence, the result follows by the security of $\io$.
\end{proof}

\begin{claim}
$\hyb_4\approx_{\alpha_2} \hyb_5$
\end{claim}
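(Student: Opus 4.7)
The only change from $\hyb_4$ to $\hyb_5$ is that the puncturing predicate used to generate $dk'$ switches from $\outtrue$ (a decapsulation key that decapsulates nothing) to $\outfalse$ (a decapsulation key that decapsulates everything). Since $dk'$ appears only inside the obfuscated program $P^{(1)}$ via the step $pl = \ace.\dec(dk', x')$ and is never handed to the adversary in the clear, the natural route is a one-step reduction to the puncture-hiding security of $\ace$ (the game $\acepunchidinggame$ from the preamble), with loss $\alpha_2(\lambda)$.

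Concretely, the plan is to build a reduction $\mathcal{R}$ that simulates the entire strong anti-piracy experiment on its own except for $dk'$, which it receives from the $\ace$ puncture-hiding challenger after nominating the two candidate predicates $\outtrue$ and $\outfalse$. The reduction samples the coset data $(A, a_1, a_2, B_1, B_2, t, t')$ and produces $\obfd{M} \samp \io(M')$; draws the PRF key $K$; samples $r^*_{\mathsf{ct},1}, r^*_{\mathsf{ct},2}$ and forms $ct^*_1, ct^*_2$ together with $a_1^{Can}, a_2^{Can}$; and then plugs the challenge key into the program $P^{(1)}$ in place of $dk'$, obfuscates to obtain $\obfd{P}$, runs $\adve_0$, and applies the $\ati^{(1)}$ and $\ti^{(2)}$ measurements to the returned registers $\regi_1, \regi_2$, feeding the resulting bits to any purported distinguisher. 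If the challenger's key is the $\outtrue$-punctured one, the reduction's view is identically $\hyb_4$; if it is the $\outfalse$-punctured one, the view is identically $\hyb_5$.

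The only step that needs checking is admissibility in $\acepunchidinggame$: the reduction must never hand the $\ace$ challenger anything (in particular, no ciphertext) that would distinguish the two predicates for free. This is automatic in our setting, because in $\hyb_4$ and $\hyb_5$ the encapsulation key $ek$ has been sampled but has not yet been used for any encapsulation---the first $\ace.\stegenc$ call is only introduced in $\hyb_6$---so no $\ace$ ciphertexts exist anywhere in the experiment at this point. The main thing I expect to double-check, and the only nonroutine piece, is matching this admissibility to the precise formulation of $\acepunchidinggame$ for the instantiation of $\ace$ being used; once that passes, the distinguishing advantage is bounded by $\alpha_2(\lambda)$, giving $\hyb_4 \approx_{\alpha_2} \hyb_5$.
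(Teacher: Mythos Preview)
Your proposal is correct and follows essentially the same approach as the paper: the paper's proof simply observes that the experiments use neither the encapsulation key $ek$ nor any $\ace$ ciphertexts, and then invokes the puncture-hiding security of $\ace$. Your write-up spells out the reduction more explicitly and flags the admissibility check, but the argument and the loss $\alpha_2$ are identical.
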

\begin{proof}
    Observe that the experiments do not use the encryption key $ek$ at all, and they do not use any $\ace$ ciphertexts either. Thus, the result follows by the puncture hiding security of $\ace$.
\end{proof}

\begin{claim}
For $b^1_{Test,1}, b^1_{Test,2}, b^2_{Test,1}, b^2_{Test,2} \samp \hyb_5$,
    \begin{itemize}
    \item $\Pr[b^1_{Test,1} = 1 \wedge b^1_{Test, 2} = 1] \geq \frac{1}{4q_1(\lambda)}$
    \item $\Pr[b^2_{Test,1} \wedge b^2_{Test,2} = 1 \big| b^1_{Test,1} = 1 \wedge b^1_{Test,2} = 1] \geq  1 -  6\delta^* - 8q_1(\lambda)\cdot(\alpha_3 + \alpha_2)$
\end{itemize}
\end{claim}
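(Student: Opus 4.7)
The plan is to chain the two indistinguishability statements from the preceding claims, namely $\hyb_3 \approx_{\alpha_3} \hyb_4$ and $\hyb_4 \approx_{\alpha_2} \hyb_5$, into a single bound $\hyb_3 \approx_{\alpha_3 + \alpha_2} \hyb_5$, and then transfer both of the probability statements established for $\hyb_3$ across to $\hyb_5$. The key observation is that the event $E_1 = \{b^1_{Test,1} = 1 \wedge b^1_{Test,2} = 1\}$ and the event $E_2 = \{b^2_{Test,1} = 1 \wedge b^2_{Test,2} = 1\}$ are both efficiently checkable predicates on the output of the experiment (they are just outcomes of the applied (A)TI measurements), so a QPT distinguisher that simply runs the experiment and checks whether $E_1$ (or $E_1 \wedge E_2$) holds witnesses any difference in these probabilities between $\hyb_3$ and $\hyb_5$.

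For the first bullet, I would combine this with the lower bound $\Pr_{\hyb_3}[E_1] \geq \frac{1}{2q_1(\lambda)}$ from the $\hyb_3$ claim to obtain $\Pr_{\hyb_5}[E_1] \geq \frac{1}{2q_1(\lambda)} - (\alpha_3 + \alpha_2)$, and then argue that the parameter choices (with $\alpha_2, \alpha_3$ negligible and $q_1$ a fixed polynomial) ensure $\alpha_3 + \alpha_2 \leq \frac{1}{4q_1(\lambda)}$ for all sufficiently large $\lambda$, yielding $\Pr_{\hyb_5}[E_1] \geq \frac{1}{4q_1(\lambda)}$.

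For the second bullet (the conditional probability), I would write $\Pr_{\hyb_5}[E_2 \mid E_1] = \Pr_{\hyb_5}[E_1 \wedge E_2]/\Pr_{\hyb_5}[E_1]$ and bound numerator and denominator separately: the indistinguishability gives $\Pr_{\hyb_5}[E_1 \wedge E_2] \geq \Pr_{\hyb_3}[E_1 \wedge E_2] - (\alpha_3 + \alpha_2) \geq \Pr_{\hyb_3}[E_1](1 - 6\delta^*) - (\alpha_3 + \alpha_2)$, and $\Pr_{\hyb_5}[E_1] \leq \Pr_{\hyb_3}[E_1] + (\alpha_3 + \alpha_2)$ while also $\Pr_{\hyb_5}[E_1] \geq \frac{1}{4q_1(\lambda)}$ from the first bullet. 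Splitting on whether $\Pr_{\hyb_5}[E_1]$ is above or below $\Pr_{\hyb_3}[E_1]$, in both cases one obtains
\begin{equation*}
    \Pr_{\hyb_5}[E_2 \mid E_1] \geq (1 - 6\delta^*) - \frac{2(\alpha_3 + \alpha_2)}{\Pr_{\hyb_5}[E_1]} \geq 1 - 6\delta^* - 8q_1(\lambda)(\alpha_3 + \alpha_2),
\end{equation*}
using $\Pr_{\hyb_5}[E_1] \geq \frac{1}{4q_1(\lambda)}$ in the last step.

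The main obstacle, such as it is, is handling the conditional probability correctly: naively subtracting computational distances from $\Pr[E_2 \mid E_1]$ does not work, so one must carefully analyze the ratio $\Pr[E_1 \wedge E_2]/\Pr[E_1]$ through the two hybrids. The only non-automatic ingredient is establishing the efficiently-distinguishability of the joint events $E_1$ and $E_1 \wedge E_2$, which is immediate because the challenger in $\hyb_3, \hyb_4, \hyb_5$ only applies efficient (approximate) threshold implementations to produce these bits; by \cref{thm:singleatiprop} these measurements are efficient, so a distinguisher can indeed be constructed to violate the $(\alpha_3 + \alpha_2)$-indistinguishability if either probability moved by more than this amount.
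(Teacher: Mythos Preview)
Your approach is correct and is exactly what the paper intends by ``combining the claims above'': chain $\hyb_3 \approx_{\alpha_3} \hyb_4 \approx_{\alpha_2} \hyb_5$ and transfer the two bounds from $\hyb_3$, handling the conditional probability by working with the ratio $\Pr[E_1 \wedge E_2]/\Pr[E_1]$ and using the lower bound on $\Pr_{\hyb_5}[E_1]$ in the denominator.

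One small inaccuracy: you justify the transfer by saying the challenger ``only applies efficient (approximate) threshold implementations'' and cite \cref{thm:singleatiprop}. In $\hyb_3$--$\hyb_5$ the second measurement is the \emph{exact} $\ti^{(2)}$, not an $\ati$, and exact threshold implementations are not efficient in general. This does not break your argument, however, because the prior claims $\hyb_3 \approx_{\alpha_3} \hyb_4$ and $\hyb_4 \approx_{\alpha_2} \hyb_5$ are already stated at the level of the experiments' four-bit \emph{outputs}; once those claims are granted, $E_1$ and $E_1 \wedge E_2$ are trivial Boolean functions of those outputs, so no separate efficiency argument is needed here. (The efficiency issue is absorbed into the proofs of the prior claims, where the subexponential security of $\io$ and $\ace$ accommodates the cost of simulating $\ti^{(2)}$.)
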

\begin{proof}
This follows by combining the claims above.
\end{proof}

\begin{claim}\label{lem:moebased}
\begin{itemize}
    \item $\Pr[b^1_{Test,1} = 1 \wedge b^1_{Test, 2} = 1] \geq \frac{1}{4q_1(\lambda)}$
    \item There exists a polynomial $g_2(\lambda)$ such that
\begin{equation*}
     \Pr[b^2_{Test,1} = 0 \wedge b^2_{Test,2} = 0 \big| b^1_{Test,1} = 1 \wedge b^1_{Test,2} = 1] \leq  1 - \frac{1}{g_2(\lambda)}
\end{equation*}
 where $b^1_{Test,1}, b^1_{Test,2}, b^2_{Test,1}, b^2_{Test,2} \samp \hyb_6$,
\end{itemize}
\end{claim}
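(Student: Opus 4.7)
The claim has two parts. For the first inequality, $\Pr[b^1_{Test,1} = 1 \wedge b^1_{Test,2} = 1] \geq 1/(4q_1(\lambda))$ is immediate: the modification from $\hyb_5$ to $\hyb_6$ replaces only the second-stage measurements (swapping $\ti^{(2)} \otimes \ti^{(2)}$ for $\ati^{(3,1)} \otimes \ati^{(3,2)}$) while leaving the first-stage $\ati^{(1)}$ on both registers unchanged, so the marginal distribution of $(b^1_{Test,1}, b^1_{Test,2})$ is identical to that in $\hyb_5$, and the bound is inherited from the preceding claim.

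For the second inequality, my plan is to chain a $\ti$-to-$\ati$ conversion, a $\simati$ realization, and an $\ace$ stego-ciphertext security reduction. Starting from the conditional bound in $\hyb_5$, namely $\Pr[\ti^{(2)} \otimes \ti^{(2)} \text{ both accept} \mid b^1_{Test,1} = 1 \wedge b^1_{Test,2} = 1] \geq 1 - 6\delta^* - 8q_1(\alpha_3 + \alpha_2)$, \cref{thm:multiatiprop} (item 1, joint version) implies that the corresponding approximate threshold measurements $\ati^{(2')} \otimes \ati^{(2')}$ against $\sampchal(st)$ at threshold $\ptriv + \frac{121\gamma}{128}$ jointly accept with probability at least $1 - O(\delta^*) - 8q_1(\alpha_3 + \alpha_2)$ on the conditioned state. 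I then invoke \cref{lem:simati} to realize each such $\ati^{(2')}$ as a $\simati$ that consumes a polynomial number $\ell$ of explicit samples from $\sampchal(st)$, and run a hybrid argument over these $\ell$ samples, replacing them one at a time by samples from $\mathcal{D}^{(2)}_c$. Each swap reduces to $\ace$ stego-ciphertext indistinguishability, with the post-$b^1$ collapsed state and $\obfd{P}$ supplied as auxiliary input to the distinguisher, costing $\alpha_2$ per swap for total loss $O(\ell \cdot \alpha_2)$ per register. A final application of \cref{lem:simati} converts $\simati$ back into an internal-sampling $\ati$ against $\mathcal{D}^{(2)}_c$; the $22\gamma/128$ threshold slack between the two ATI stages (with $\epsilon^* = \gamma/128$) comfortably absorbs the $5\epsilon^*$ threshold drops from the two $\simati$ translations and the $\ati$-$\ti$ conversion.

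For parameters chosen so that $O(\delta^*) + O(\ell \alpha_2) + 8q_1(\alpha_3 + \alpha_2) \leq 1/2$, this establishes that each $\ati^{(3,c)}$ individually accepts with probability at least $1/2$ on the $b^1$-conditioned state, so a fortiori the joint failure probability is at most $1 - 1/g_2(\lambda)$ for a suitable polynomial $g_2$. The main subtlety is justifying the $\ace$ stego switch despite the distinguisher holding $\obfd{P}$, which embeds the unpunctured decryption key $dk'$: na\"ively, $dk'$ would trivially break stego security. This is handled by the specific form of $\ace$ stego-ciphertext security from \cite{ccakan2025copy}, which is crafted precisely for reductions of this type and remains valid when the distinguisher receives obfuscated programs employing the relevant decryption key, provided the puncturing structure is consistent with the challenge plaintexts $pl_1, pl_2$. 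Verifying that our reduction's auxiliary input fits within this game structure is the most delicate step, but it follows the template used in the analogous transition in \cite{ccakan2025copy}.
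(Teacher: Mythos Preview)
Your treatment of the first item is fine and matches the paper: the first-stage measurements are untouched when passing from $\hyb_5$ to $\hyb_6$, so the marginal of $(b^1_{Test,1},b^1_{Test,2})$ is inherited.

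Your plan for the second item, however, has a genuine gap and is not the paper's approach. You try to pass from $\ti^{(2)}$ (over $\sampchal(st)$) to $\ati^{(3,c)}$ (over $\mathcal{D}^{(2)}_c$) by a direct $\simati$/hybrid argument reducing each sample swap to $\ace$ steganographic-ciphertext security. The obstacle you yourself flag is fatal: in $\hyb_5/\hyb_6$ the obfuscated program $\obfd{P}$ embeds $dk' = \ace.\gendk(sk,\outfalse)$, an \emph{unpunctured} decapsulation key (together with the extractor seed), and this is part of the adversary's state on which $\simati$ is run. The steganographic-ciphertext game in this paper gives the distinguisher only keys punctured at the challenge plaintexts; an adversary holding $\obfd{P}$ can stego-decrypt and win that game outright. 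Your appeal to a ``specific form of $\ace$ stego-ciphertext security from \cite{ccakan2025copy}'' that tolerates an obfuscated full decryption key is not supported by any definition in the paper. Nor can you first puncture $dk'$ at the plaintexts $pl_c$ and invoke $\io$: on inputs $x$ with $\cpred(x)$ decrypting to $pl_c$ the two programs take different branches, so functional equivalence fails. In short, the step ``each swap reduces to $\ace$ stego-ciphertext indistinguishability'' does not go through.

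The paper proves the second item by a completely different route: a reduction to the \emph{monogamy-of-entanglement} property of coset states (\cref{lem:moelast}). One assumes for contradiction that both $\ati^{(3,1)}$ and $\ati^{(3,2)}$ fail (conditioned on $b^1_{Test,1}=b^1_{Test,2}=1$), and shows that from the two post-measurement registers one can build unbounded extractors $\mathcal{E}^*_1,\mathcal{E}^*_2$ that output $a_1^{Can}$ and $a_2^{Can}$ respectively with probability roughly $2^{-k}$ each, simultaneously. This is where the gap between $\ti^{(2)}$ accepting and $\ati^{(3,c)}$ rejecting is exploited: via $\simati$ and a hybrid over the sample indices, that gap is converted (using the reconstructive property of $\ext$) into an extractor for the canonical coset representative. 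The joint extraction then contradicts \cref{lem:moelast}. The $\ace$ stego and puncture-hiding properties do appear inside this argument, but only locally---to move $\obfd{M}$, puncture $dk'$ at a \emph{single} carefully chosen $pl^*$, switch one sample, and argue functional equivalence for that one plaintext---not as a blanket indistinguishability of the two challenge distributions. The monogamy property is the essential ingredient; your outline omits it entirely.
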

\begin{proof}
    We prove this claim in \cref{sec:moebased}.
\end{proof}

\begin{claim}
    For $b^1_{Test,1}, b^1_{Test,2}, b^2_{Test} \samp \hyb_{7}$,
    \begin{itemize}
    \item $\Pr[b^2_{Test} = 1 \wedge b^1_{Test,1} = 1 \wedge b^1_{Test,2} = 1] \geq \frac{1}{8\cdot q_1(\lambda)\cdot g_2(\lambda)}$ 
    \end{itemize}
\end{claim}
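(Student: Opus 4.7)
The plan is to read this claim off from the preceding $\hyb_6$ bound via a simple averaging over the coin $c'$. Recall that $\hyb_7$ differs from $\hyb_6$ only in that, instead of applying $\ati^{(3,1)} \otimes \ati^{(3,2)}$ to the registers $\regi_1, \regi_2$, the challenger samples a uniform coin $c' \samp \{1, 2\}$ and applies only $\ati^{(3, c')}$ to $\regi_{c'}$. The output of the experiment is then $(b^1_{Test,1}, b^1_{Test,2}, b^2_{Test})$, where $b^2_{Test}$ is the outcome of this single ATI.

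First I would observe that, since $\ati^{(3,1)}$ and $\ati^{(3,2)}$ act on disjoint tensor factors in $\hyb_6$, they commute and the marginal distribution of $b^2_{Test,c}$ in $\hyb_6$ is unaffected by whether or not the other ATI is also applied; in particular, the distribution of the outcome in $\hyb_7$ conditioned on $c' = c$ coincides with the marginal distribution of $b^2_{Test,c}$ in $\hyb_6$ (jointly with $b^1_{Test,1}, b^1_{Test,2}$, which are computed identically in both hybrids). Hence, writing $E := \{b^1_{Test,1} = 1 \wedge b^1_{Test,2} = 1\}$, we have
\[
\Pr_{\hyb_7}[b^2_{Test} = 1 \mid E] = \tfrac{1}{2}\Pr_{\hyb_6}[b^2_{Test,1} = 1 \mid E] + \tfrac{1}{2}\Pr_{\hyb_6}[b^2_{Test,2} = 1 \mid E].
\]
By the previous claim (\cref{lem:moebased}), the complement event satisfies $\Pr_{\hyb_6}[b^2_{Test,1} = 0 \wedge b^2_{Test,2} = 0 \mid E] \leq 1 - 1/g_2(\lambda)$, so $\Pr_{\hyb_6}[b^2_{Test,1} = 1 \vee b^2_{Test,2} = 1 \mid E] \geq 1/g_2(\lambda)$, and by the union bound this sum of marginals is at least $1/g_2(\lambda)$. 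Therefore $\Pr_{\hyb_7}[b^2_{Test} = 1 \mid E] \geq 1/(2 g_2(\lambda))$, and multiplying by $\Pr_{\hyb_7}[E] = \Pr_{\hyb_6}[E] \geq 1/(4 q_1(\lambda))$ yields the desired lower bound $1/(8 q_1(\lambda)\, g_2(\lambda))$.

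There is essentially no technical obstacle here: the only thing that deserves care is the observation that applying an ATI on one register and then tracing out the other gives the same marginal as applying the tensor product ATI and reading off only that register's bit, which is immediate from the tensor-product structure of the measurement. The rest is a one-line averaging and a union bound.
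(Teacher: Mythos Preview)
Your proof is correct and follows the same approach as the paper, which only sketches the argument in one line (``with probability $1/2$, $c'$ will hit the good outcome between $b^2_{Test,1}, b^2_{Test,2}$''). Your version makes the underlying averaging and tensor-factor marginal argument explicit, which is a welcome clarification.
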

\begin{proof}
    This follows directly from the claim above, since with probability $1/2$, $c'$ will \emph{hit} the \emph{good} outcome between $b^2_{Test,1}, b^2_{Test,2}$.
\end{proof}

\begin{claim}
    For $b^1_{Test,1}, b^1_{Test,2}, b^2_{Ver} \samp \hyb_{8}$,
    \begin{itemize}
    \item $\Pr[b^2_{Ver} = 1 \big| b^1_{Test,1} = 1 \wedge b^1_{Test,2} = 1] \geq \ptriv + \frac{90\gamma}{128}$ 
    \end{itemize}
\end{claim}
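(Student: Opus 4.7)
The plan is to leverage the steganographic security of $\ace$ to transfer the single-challenge success bound from $\sampchal(st)$ (the distribution underlying the $b^1$ tests) to $\mathcal{D}^{(2)}_{c'}$ (the distribution used in $\hyb_8$). Observe that $\hyb_7$ and $\hyb_8$ differ only in the final measurement applied to $\regi_{c'}$; hence the joint state conditioned on $b^1_{Test,1} = b^1_{Test,2} = 1$ is identical in distribution across both hybrids, and it suffices to lower bound the single-challenge success probability on this conditional state.

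The first step is to analyze the post-$b^1$-conditioning state. Since $b^1$ is the outcome of the tensor product $\ati^{(1)}\otimes\ati^{(1)}$ with threshold $\ptriv + \frac{126\gamma}{128}$ over $\sampchal(st)$, the multi-register property (\cref{thm:multiatiprop}, second bullet) implies that on the conditional state, reapplying the projective $\mathsf{TI}$ with slightly reduced threshold $\ptriv + \frac{126\gamma}{128} - 2\epsilon^*$ succeeds with probability at least $1 - 4\delta^*$. Since the collapsed state after $\mathsf{TI}$ outcome $1$ is a mixture of eigenvectors with eigenvalue at least the threshold, the conditional state has single-challenge success probability on $\sampchal(st)$ of at least $\ptriv + \frac{126\gamma}{128} - 2\epsilon^* - 4\delta^*$ up to lower-order terms.

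The second step is to transfer this bound from $\sampchal(st)$ to $\mathcal{D}^{(2)}_{c'}$. The two distributions differ only in how $x^*$ is sampled: $\sampchal(st)$ draws $x^* \samp \sampinp(st)$ directly, while $\mathcal{D}^{(2)}_{c'}$ uses $\ace.\stegenc(ek_{c'-1}, pl_{c'}, \codeof{\sampinp(st)})$. By the steganographic security of $\ace$, the induced distributions over $(ak, ch)$ are computationally indistinguishable with advantage at most $\alpha_2$; hence for any efficiently preparable state $\rho$, the single-challenge success probabilities under the two distributions differ by at most $\alpha_2$. Combining, we obtain $\Pr[b^2_{Ver} = 1 \mid b^1 \text{ good}] \geq \ptriv + \frac{126\gamma}{128} - 2\epsilon^* - 4\delta^* - \alpha_2$, which is at least $\ptriv + \frac{90\gamma}{128}$ once $\epsilon^* = \gamma/128$ is fixed and $\delta^*, \alpha_2$ are chosen small enough that their combined contribution stays within the $\frac{34\gamma}{128}$ budget.

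The main obstacle will be formalizing the steganographic reduction: one must construct an efficient distinguisher that receives $(ak, ch)$ from either $\sampchal(st)$ or $\mathcal{D}^{(2)}_{c'}$, runs the universal circuit $U$ on $\regi_{c'}$ with input $ch$, applies $\Ver(ak, \cdot)$, and returns the resulting bit. The conditional state on $b^1$ good is efficiently preparable via rejection sampling thanks to the polynomial lower bound on $\Pr[b^1 \text{ good}]$ established in the earlier hybrids, so the distinguisher runs in polynomial time, and its advantage upper-bounds the single-challenge success gap by $\alpha_2$, as required.
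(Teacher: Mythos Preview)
Your approach differs substantially from the paper's one-line ``follows easily by the properties of $\ati$,'' and the second step of your plan has a genuine gap.

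The problem is in your reduction to steganographic ciphertext security. That security notion only guarantees indistinguishability of $\ace.\stegenc(ek, pl_{c'}, \codeof{\sampinp(st)})$ from a fresh sample of $\sampinp(st)$ to a distinguisher holding keys that are \emph{punctured at the challenge plaintexts}. But the conditional state $\regi_{c'}$ you want your distinguisher to act on was produced by $\adve_0$ after receiving the obfuscated program $\obfd{P}$, and from $\hyb_5$ onward that program embeds $dk' = \ace.\gendk(sk, \outfalse)$ --- a \emph{fully unpunctured} decapsulation key. Your distinguisher therefore cannot simulate $\obfd{P}$ (and hence cannot prepare the conditional state by rejection sampling, as you propose) using only the punctured keys handed out in the steganographic game. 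The reduction as written does not go through.

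This is not merely a technicality: it is precisely the reason the paper postpones the steganographic switch to $\hyb_9 \to \hyb_{10}$. In $\hyb_9$ the program is first changed (via $\io$) to hardwire the behaviour at $x^*$ and to embed the \emph{punctured} key $dk'' = \ace.\gendk(sk, \mathds{1}_{pl^*})$; only after that step are the keys in the experiment compatible with the premise of steganographic security, and only then is the switch from $\stegenc$ to $\sampinp(st)$ carried out. If you want to run your argument at the level of $\hyb_8$, you would need to interpose exactly this kind of puncture-the-key-first sub-hybrid (sample $se^*, r^{**}$ up front, form $pl^*$, move to a punctured $dk$ via puncture-hiding and $\io$, and then invoke steganography). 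Without that step, your bound $\alpha_2$ on the distribution gap is unjustified.
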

\begin{proof}
    This follows easily by the properties of $\ati$.
\end{proof}

\begin{lemma}
    $\hyb_8\approx \hyb_9$.
\end{lemma}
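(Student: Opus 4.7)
The plan is to prove $\hyb_8 \approx \hyb_9$ via the security of $\io$: since the only change between the two hybrids is the obfuscated program $\obfd{P}$ together with its hardcoded state, it suffices to argue that $P^{(1)}$ (used in $\hyb_8$) and $P^{(2)}$ (used in $\hyb_9$) are functionally equivalent with overwhelming probability over the setup randomness.

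I would handle the two functional changes separately. First, $dk'$ is replaced by $dk''$, the decapsulation key now punctured at $pl^*$. On inputs $(x, b, v)$ with $x' = \cpred(x) \neq x^*$ (so the new $P^{(2)}$ short-circuit does not trigger), I invoke the constrained decapsulation safety of $\ace$, exactly as in the $\hyb_3 \approx \hyb_4$ step: $\ace.\dec(dk'', x')$ agrees with $\ace.\dec(dk', x')$ unless $x'$ decrypts to $pl^*$. Since $pl^*$ carries high-entropy components (the random subspace $A$, a fresh extractor seed $se^*$, and fresh randomness $r^{**}$), with overwhelming probability no input $x' \neq x^*$ decrypts to $pl^*$, so the two programs behave identically on this branch. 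Second, $P^{(2)}$ has an additional early branch that fires on $x' = x^*$: it outputs the hardcoded $z^*$ in the $(b, c')$ case matching the tag $t$ of $pl^*$, and otherwise falls through to the same $C$/$\prf$ output as $P^{(1)}$. The value $z^*$ is defined precisely to equal the output that $P^{(1)}$ would produce on the matching branch at the canonical coset representative and at $x = x^*$; the non-matching case is identical by inspection. Once functional equivalence holds, iO security yields $\hyb_8 \approx_{\alpha_3} \hyb_9$.

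The main obstacle I expect is making the $x' = x^*$ case fully rigorous as a statement about \emph{every} input. $P^{(1)}$'s output on that branch actually depends on both $x$ itself (when $\cpred$ is not injective, several $x$ can satisfy $\cpred(x) = x^*$) and on $\canonvec{A}{v}$, which varies across the larger coset $B_1 + a_1$ accepted by the modified membership program from $\hyb_2$. A single hardcoded $z^*$ therefore cannot literally match $P^{(1)}$'s output on every such $(x, v)$ as is. To close the gap, I would insert an implicit chain of sub-hybrids: first puncture the PRF key $K$ at $x^*$ (using puncturable PRF security) and replace $\prf.\ceval(K, x^*)$ by a hardcoded random value, then similarly absorb the $C(x^*)$ term using malleable puncturing of $\cscheme$, and finally collapse the residual $v$-dependence by restricting the obfuscated membership program back to the original coset (applying \cref{lem:shobased} in the reverse direction to the hop in $\hyb_2$). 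Each sub-hop is a standard iO / PRF / coset-indistinguishability step with negligible loss, yielding the full $\hyb_8 \approx \hyb_9$ statement as claimed.
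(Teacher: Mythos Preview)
Your core strategy—show that $P^{(1)}$ and $P^{(2)}$ compute the same function and then invoke $\io$—is exactly what the paper does. The paper treats $c'=1$ and $c'=2$ separately; for each, it observes via the constrained-decapsulation safety and unique-encapsulation properties of $\ace$ that the two programs can disagree only when $x' = x^*$ and $\hat{M}(b,v)=\outtrue$, and then checks by hand that on the matching $(b,c')$ branch the decrypted payload is $pl^*$, so $a'=a_1^{Can}$ (resp.\ $a_2^{Can}$) and the output collapses to the hardcoded $z^*$, while on the non-matching branch both programs fall through to the same $C/\prf$ value. That direct case analysis is the entire argument—no intermediate hybrids are inserted.

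Where you diverge is the proposed sub-hybrid chain, and that part is off. The moves you list—puncturing $K$ at $x^*$, using malleable puncturing on $C$, reverting $\hat{M}$ via \cref{lem:shobased}—are essentially the later transitions $\hyb_{11}$ through $\hyb_{14}$; the paper does not front-load them into this lemma. More importantly, ``absorb the $C(x^*)$ term using malleable puncturing of $\cscheme$'' is not a valid step: malleable-puncturing security (\cref{def:punc}) is a challenge-game bound on an adversary's success probability, not a correctness statement that lets you replace $C$ on inputs with $\cpred(x)=x^*$ by a constant inside an obfuscated circuit, so it cannot establish the functional equality that $\io$ demands. Similarly, PRF puncturing gives you indistinguishability of $F(K,x^*)$ from random, not equality of two programs on all inputs. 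Your observations about $\cpred$-non-injectivity and about $\hat{M}$ here still accepting the larger coset $B_1+t$ are sharp—the paper's proof simply writes ``since $v \in A+a_1$'' and conflates $x$ with $x^*$ at those points—but importing security games into the functional-equivalence argument is not the remedy.
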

\begin{proof}
    We will prove the two cases $c'=1$ and $c'=2$ separately.

First, the case $c'=1$. We claim that $P^{'}$ and $P^{(2)}$ have exactly the same functionality. Note that once we prove this, the result follows by $\io$ security. Now we prove our claim. Observe that $P^{'}$ and $P^{(2)}$ can differ in two places: (i) when using the decryption key $dk'$ in the line $pl = \ace.\dec(dk', x)$, since $P^{'}$ has $dk = \ace.\mathsf{GenDK}(sk, \outfalse)$ but $P^{(2)}$ has $dk' = \ace.\mathsf{GenDK}(sk, \mathds{1}_{pl^*})$, and (ii) if the input satisfies $x = x^*$. First, by the safety of constrained decapsulation of $\ace$, we know that the output of $\ace.\dec$ on $x$ can be different for $dk = \ace.\mathsf{GenDK}(sk, \outfalse)$ versus $dk' = \ace.\mathsf{GenDK}(sk, \mathds{1}_{pl^*})$ only if $x = \ace.\enc(ek, pl^*) = x^*$. Thus, we get that $P^{'}$ and $P^{(2)}$ can possibly differ only on inputs such that $x = x^*$. Further it easy to see that $P^{'}(x,b,v)$ versus $P^{(2)}(x,b,v)$ can only possibly differ if $\obfd{M}(b,v)=\outtrue$ is satisfied too. Thus, we only need to consider inputs $x,b,v$ such that $x = act^*$ and $\obfd{M}(b,v)=\outtrue$. For such inputs, we will further consider two cases: $b = 0$ and $b = 1$. For $b = 1$, it is easy to see that $P^{'}(x,b,v) = \prf.\ceval(K,x^*) = P^{(2)}(x,b,v)$. Finally, we consider the case $b = 0$. Now, $P^{'}(x,b,v)$ enters the case $pl \neq \perp, t = 0$ since $act^*$ decrypts to $pl^*$  by correctness of decapsulation. Here, it gets $a' = a_{1}^{Can}$ since $v \in A+a_1$, and then it gets $r' = \ext(se^*, a_1^{Can})\oplus r^{**}$ since $pl = pl^*$. Then, it gets $C' = ct_1 \oplus G(\ext(se^*, a_1^{Can})\oplus r^{**})$. Thus, we get $P^{'}(x,b,v) = z^*$. Now considering $P^{(2)}(x,b,v)$ when $b = 0$, $x=x^*$ and $v \in A+s$, it is easy to see that we again have $P^{(2,1)}(x,b,v) = z^*$. Thus, $P^{(1)}$ and $P^{(2)}$ have the same output for all $x,b,v$.

Now we prove the case $c'=2$. We claim that $P^{(')}$ and $P^{(2)}$ have exactly the same functionality. Similar to above, once we prove this, the result follows by $\io$ security. The claim that $P^{'}$ and $P^{(2)}$ have the same functionality follows by a case-by-case inspection, similar to above.
\end{proof}

\begin{lemma}
    $\hyb_{9}\approx \hyb_{10}$.
\end{lemma}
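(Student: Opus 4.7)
The only modification from $\hyb_9$ to $\hyb_{10}$ is how $x^*$ is produced: $\hyb_9$ sets $x^* \samp \ace.\stegenc(ek_{c'-1}, pl^*, \codeof{\sampinp(st)})$ (through the distribution $\mathcal{D}^{(2)}_{c'}$ inherited from $\hyb_6$), whereas $\hyb_{10}$ sets $x^* \samp \sampinp(st)$. Every other component---the coset parameters, the PRF key $K$, the extractor seed $se^*$ and randomness $r^{**}$, the $\ace$ keys $sk, ek, dk''$, the ciphertexts $ct^*_1,ct^*_2$, the obfuscated programs, and the threshold/verifier measurements---is sampled identically. The plan is therefore to reduce this step directly to the steganographic ciphertext indistinguishability of $\ace$, captured by $\acesteg$.

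Concretely, I would construct a reduction $\mathcal{R}$ that internally simulates the hybrid. $\mathcal{R}$ first samples the $\ace$ instance and performs the setup phase interaction with $\adve$ honestly, obtaining $st$, the coset $(A,a_1,a_2)$, the PRF key $K$, the seed $se^*$, the randomness $r^{**}$, the bit $c'$, and the payload $pl^* = (c'-1)\,||\,A\,||\,se^*\,||\,r^{**}\,||\,\indtag$. The only ingredient that $\mathcal{R}$ does not generate itself is $x^*$: it sends the pair $(pl^*, \codeof{\sampinp(st)})$ together with the slot index $c'-1$ to the $\acesteg$ challenger, which returns either $\ace.\stegenc(ek_{c'-1}, pl^*, \codeof{\sampinp(st)})$ or a fresh $x^* \samp \sampinp(st)$. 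With this $x^*$ in hand, $\mathcal{R}$ finishes the simulation by computing $ak^*, ch^* \samp \sampchalfrominp(st, x^*)$, deriving $z^*$ from $x^*, se^*, r^{**}$ and the relevant $ct^*_i$ exactly as in $\hyb_9$, forming $dk'' = \ace.\gendk(sk, \mathds{1}_{pl^*})$, hardwiring $(x^*, z^*, dk'')$ into $P^{(2)}$, obfuscating, and applying the verifier measurement to the appropriate register. By construction $\mathcal{R}$'s view under the $\ace.\stegenc$ branch is exactly $\hyb_9$ and under the uniform branch is exactly $\hyb_{10}$, so any non-negligible distinguisher between the two hybrids translates into a non-negligible advantage against $\acesteg$.

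The main thing to check carefully will be that the circuit $\codeof{\sampinp(st)}$ satisfies the prerequisites of the $\acesteg$ game: it must be of polynomial size, and its output distribution must carry the min-entropy demanded by the steganographic security of $\ace$. The first is immediate by efficiency of $\sampinp$, and the second is exactly the residual min-entropy $\lambda^c$ of $\sampinp(st)$ given $\adve$'s state and $st$ imposed at the end of \cref{def:punc}. A minor subtlety is that $\mathcal{R}$ needs $sk$ in order to construct $dk''$; since the $\acesteg$ game only hides which sample was returned, $\mathcal{R}$ is free to sample $sk$ itself and only relies on the challenger for the obliviousness of the single $x^*$, so this causes no issue.
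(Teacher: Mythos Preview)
Your reduction has a genuine gap in the final paragraph. You let $\mathcal{R}$ sample the $\ace$ secret key $sk$ itself and then ask the $\acesteg$ challenger only for ``the obliviousness of the single $x^*$''. But in the $\acesteg$ game the challenger samples its \emph{own} $sk$ and computes $\stegenc$ under its own $ek$; it does not accept an externally supplied key. Hence the $x^*$ returned by the challenger is encapsulated under a different $\ace$ instance than the one $\mathcal{R}$ used to build $dk''$, and the simulation of $\hyb_9$ is no longer faithful (in $\hyb_9$ the $dk''$ hardwired in $P^{(2)}$ and the key used to produce $x^*$ come from the \emph{same} $sk$). Worse, if $\mathcal{R}$ really held the full $sk$ it could simply run $\stegdec$ on the challenge and distinguish perfectly, so steganographic security cannot possibly be stated with the adversary holding $sk$.

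The fix---and this is exactly what the paper does---is to notice that after the change introduced in $\hyb_9$ the experiment never touches the master key $sk$ directly: the only $\ace$ objects used are the punctured decapsulation key $dk'' = \ace.\gendk(sk,\mathds{1}_{pl^*})$ inside $P^{(2)}$ and a punctured encapsulation key that, for the single message $pl^*$, behaves identically to $ek$ by equivalence of constrained encapsulation. These are precisely the constrained keys $ek',dk'$ that the $\acesteg$ challenger hands to the adversary once you set $C_1 = C_2 = \mathds{1}_{pl^*}$. So the reduction should \emph{not} sample $sk$; instead it sends $C_1,C_2,pl^*,\codeof{\sampinp(st)}$ to the challenger, receives the punctured keys together with the challenge $x^*$, uses the received $dk'$ as $dk''$ when building $P^{(2)}$, and proceeds as you described. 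With this correction the rest of your argument goes through and matches the paper's one-line proof.
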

\begin{proof}
Observe that these experiments only use the constrained keys $dk'' = \ace.\mathsf{GenDK}(sk, \mathds{1}_{pl^*})$ and $ek' = \ace.\mathsf{GenEK}(sk, \mathds{1}_{pl^*})$. Thus, the result follows by the steganographic ciphertext security of $\ace$.
\end{proof}

\begin{lemma}
    $\hyb_{10}\approx \hyb_{11}$.
\end{lemma}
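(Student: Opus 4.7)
The plan is to reduce the indistinguishability of $\hyb_{10}$ and $\hyb_{11}$ directly to \cref{lem:shobased}, mirroring (in reverse) the reduction used for the $\hyb_1 \to \hyb_2$ transition. The only difference between the two hybrids is the distribution of the obfuscated membership program $\obfd{M}$: $\hyb_{10}$ samples $\obfd{M} \samp \io(M')$ with $B_1, B_2, t, t'$ hardcoded, whereas $\hyb_{11}$ samples $\obfd{M} \samp \io(M)$ with $A, a_1, a_2$ hardcoded. This is precisely the exchange that \cref{lem:shobased} is designed to handle even in the presence of the coset state $\regi$ and the auxiliary information $A, B_1, B_2, t, t'$.

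Concretely, I would construct a QPT reduction $\adve'$ that plays the role of the adversary in the game $\game_{\adve'}^c$ of \cref{lem:shobased}. On input $(\regi, A, B_1, B_2, t, t', \obfd{M})$ from the lemma's challenger (where $\obfd{M} = \io(M^c)$), $\adve'$ simply forwards these as its output register $\regis{out}$. The lemma then yields that the distributions $(\regis{out}, A, a_1, a_2)$ in the $c=0$ and $c=1$ worlds have trace distance at most $\alpha_4(\lambda)$.

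The main obstacle is that the $\hyb_{10}$/$\hyb_{11}$ challenger requires $a_1^{Can} = \canonvec{A}{a_1}$ or $a_2^{Can} = \canonvec{A^\perp}{a_2}$ to build the hardcoded value $z^*$ inside $\obfd{P} \samp \io(P^{(2)})$, yet $\adve'$ is never given $a_1, a_2$. I would sidestep this by deferring the entire remainder of the experiment to a post-processing operation $\mathcal{Q}$ acting on the lemma's output tuple $(\regis{out}, A, a_1, a_2)$. With $a_1, a_2$ now available, $\mathcal{Q}$ samples the remaining randomness, simulates the $\chal$/$\adve$ interaction to obtain $k$, draws $x^* \samp \sampinp(st)$, computes $z^*$ and $dk''$, builds $\obfd{P}$ using the $\obfd{M}$ extracted from $\regis{out}$, runs $\adve_0$, applies $\ati^{(1)}$ to the resulting registers, and finally samples $c', ak^*, ch^*$ and runs $U(\regi_{c'}, ch^*)$ to produce the final indicator. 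Since trace distance is non-increasing under any quantum channel and $\mathcal{Q}$ is the same map in both worlds, the final indicators in $\hyb_{10}$ and $\hyb_{11}$ differ by at most $\alpha_4(\lambda)$ in trace distance, giving $\hyb_{10} \approx_{\alpha_4} \hyb_{11}$.
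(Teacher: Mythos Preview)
Your proposal is correct and follows the same approach as the paper, which simply states ``Follows from \cref{lem:shobased}.'' You supply the details the paper omits: in particular, you correctly identify that the hardcoded value $z^*$ in $P^{(2)}$ depends on $a_1^{Can}$ (or $a_2^{Can}$), and you resolve this by pushing the entire construction of $\obfd{P}$, the running of $\adve_0$, the application of $\ati^{(1)}$, and the final verification into a post-processing channel $\mathcal{Q}$ applied to the lemma's output tuple $(\regis{out}, A, a_1, a_2)$, where $a_1, a_2$ are now available.
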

\begin{proof}
    Follows from \cref{lem:shobased}.
\end{proof}

\begin{lemma}
    $\hyb_{11}\approx \hyb_{12}$.
\end{lemma}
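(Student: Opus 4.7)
The plan is to establish $\hyb_{11} \approx \hyb_{12}$ by showing that the two obfuscated programs $P^{(2)}$ (used in $\hyb_{11}$) and $P^{(3)}$ (used in $\hyb_{12}$) realize identical functions on every input, and then invoking $\io$ security (with the usual implicit padding of circuits to a common size). Inspecting the two programs, the only syntactic differences are: in step 3 (the ``$x' = x^*$'' branch), $P^{(3)}$ uses the hardcoded constants $s_1^* = \prf.\ceval(K, x^*)$ and $s_2^* = C(x^*) \oplus s_1^*$ in place of the live evaluations $\prf.\ceval(K, x)$ and $C(x) \oplus \prf.\ceval(K, x)$; and in step 5 (the fall-through branch), $P^{(3)}$ substitutes the punctured circuit $\csplit$ for $C$ and the punctured PRF key $K\{x^*\}$ for $K$.

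First I would verify the step-5 agreement. By the control flow of both programs, step 5 is reached only when $\cpred(x) \neq x^*$. The malleable-puncturing property of $\fsplit$ in \cref{def:punc} then guarantees $\csplit(x) = C(x)$, and because the malleable-puncturable framework treats the challenge as a fixed point of $\cpred$, the condition $\cpred(x) \neq x^*$ also implies $x \neq x^*$; puncturable PRF correctness then gives $\prf.\ceval(K\{x^*\}, x) = \prf.\ceval(K, x)$. Hence the step-5 outputs coincide in both programs.

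Next I would verify the step-3 agreement. This branch is entered precisely when $\cpred(x) = x^*$, and within the malleable-puncturable framework every such $x$ is identified with $x^*$ for the purposes of evaluating $C$ and $\prf.\ceval(K, \cdot)$, giving $C(x) = C(x^*)$ and $\prf.\ceval(K, x) = \prf.\ceval(K, x^*)$, exactly matching the hardcoded $s_1^*, s_2^*$. The $z^*$-subcases (indexed by $b$ and $c'$) are syntactically identical in both programs. Combining the two cases, $P^{(2)}$ and $P^{(3)}$ compute the same function on every input, and $\io$ security yields $\hyb_{11} \approx \hyb_{12}$, losing at most $\alpha_3$. The main obstacle is the case analysis for step 3: one must carefully argue that the hardcoding faithfully captures the behavior of $P^{(2)}$ on every $x$ mapped to $x^*$ by $\cpred$, and not merely on $x = x^*$ itself; once this is clear, the rest is a standard $\io$ invocation analogous to the $\hyb_3 \approx \hyb_4$ argument earlier in this section.
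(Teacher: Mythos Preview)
Your overall strategy---show that $P^{(2)}$ and $P^{(3)}$ compute the same function and then invoke $\io$ security---is exactly what the paper does (its proof is a single sentence citing punctured-key correctness for $\cscheme$ and $\prf$, then $\io$). Your step-5 analysis is sound: once $\cpred(x)\neq x^*$, malleable-puncturing correctness gives $\csplit(x)=C(x)$, and PRF punctured-key correctness gives $\prf.\ceval(K\{x^*\},x)=\prf.\ceval(K,x)$ whenever $x\neq x^*$.

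The step-3 argument, however, has a real gap. You assert that ``within the malleable-puncturable framework every such $x$ is identified with $x^*$ for the purposes of evaluating $C$ and $\prf.\ceval(K,\cdot)$,'' and from this conclude $\prf.\ceval(K,x)=\prf.\ceval(K,x^*)$ and $C(x)=C(x^*)$ for every $x$ with $\cpred(x)=x^*$. This is not supported by \cref{def:punc}. The PRF $\prf$ is a construction-level primitive sampled independently of the scheme $\cscheme$; nothing in the definition of a malleable-puncturable scheme can force a generic PRF to collide on distinct inputs, and for a pseudorandom function such a collision fails with overwhelming probability. Likewise, the definition only guarantees that $\csplit$ agrees with $C$ \emph{outside} the $\cpred$-preimage of $x^*$; it says nothing about $C$ being constant on that preimage.

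What the paper's terse proof is really relying on (and what your step-5 fixed-point remark already gestures at) is that the step-3 branch effectively only fires at $x=x^*$ itself, so that the hardcoded $s_1^*=\prf.\ceval(K,x^*)$ and $s_2^*=C(x^*)\oplus s_1^*$ are literally the values $P^{(2)}$ would have computed. You should make that the stated reason, rather than invent an ``identification'' principle for $\prf$ and $C$ that the framework does not provide.
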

\begin{proof}
    By the malleable-punctured key correctness of the cryptographic scheme $\cscheme$ and the punctured key correctness of the scheme $\prf$, the result follows by the security of $\io$.
\end{proof}

\begin{lemma}
    $\hyb_{12}\approx \hyb_{13}$.
\end{lemma}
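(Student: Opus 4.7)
The plan is a mechanical two-step PRG hybrid. The only change from $\hyb_{12}$ to $\hyb_{13}$ is that in $\hyb_{12}$ the challenger sets $ct^*_1 = C \oplus G(r^*_{\mathsf{ct},1})$ and $ct^*_2 = K \oplus G(r^*_{\mathsf{ct},2})$ using fresh seeds $r^*_{\mathsf{ct},1}, r^*_{\mathsf{ct},2} \samp \zo^{p_4(\lambda)}$ (sampled back in $\hyb_4$), whereas in $\hyb_{13}$ these ciphertexts are sampled uniformly from $\zo^{p_2(\lambda)}$.

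First I would verify the structural fact that makes the reduction go through: the seeds $r^*_{\mathsf{ct},1}, r^*_{\mathsf{ct},2}$ are consumed only inside the expressions $G(r^*_{\mathsf{ct},1}), G(r^*_{\mathsf{ct},2})$ that define $ct^*_1, ct^*_2$, and appear nowhere else in the experiment. In particular, the obfuscated program $P^{(3)}$, the computation of $z^*$ introduced in $\hyb_9$, and every value handed to the adversary depend on the seeds only through the pair $(ct^*_1, ct^*_2)$.

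Then I would interpolate with an intermediate hybrid $\hyb_{12.5}$ in which the challenger instead sets $ct^*_1 = C \oplus u_1$ for $u_1 \samp \zo^{p_2(\lambda)}$ uniform, with everything else as in $\hyb_{12}$. Indistinguishability $\hyb_{12} \approx \hyb_{12.5}$ is a direct reduction to the pseudorandomness of $G$: the reduction takes its challenge $y \in \{G(r^*_{\mathsf{ct},1}), u_1\}$, sets $ct^*_1 := C \oplus y$, and simulates the rest of the experiment internally using its own randomness. A symmetric reduction replaces $G(r^*_{\mathsf{ct},2})$ with a uniform $u_2$, yielding a hybrid in which $ct^*_1 = C \oplus u_1$ and $ct^*_2 = K \oplus u_2$ with $u_1, u_2$ independent of $C, K$. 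Since XOR with an independent uniform string is uniform, the pair $(ct^*_1, ct^*_2)$ is distributed uniformly over $\zo^{p_2(\lambda)} \times \zo^{p_2(\lambda)}$, matching $\hyb_{13}$.

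There is no genuine obstacle: everything reduces to two applications of standard PRG security, and the only nontrivial bookkeeping is the unused-seed check in the first substantive paragraph above.
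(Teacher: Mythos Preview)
Your proposal is correct and follows essentially the same approach as the paper: the paper's proof simply observes that the seeds $r^*_{\mathsf{ct},1}, r^*_{\mathsf{ct},2}$ appear only inside the PRG outputs defining $ct^*_1, ct^*_2$ and invokes the security of $G$. Your version is more detailed (an explicit intermediate hybrid and the final XOR observation) but the core argument is identical.
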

\begin{proof}
    Observe that the experiments only uses the PRG output $ct^*_1 = G(r^*_1)$ and $ct^*_2 = G(r^*_2)$, and the random strings $r^*_1, r^*_2$ do not appear anywhere else. Thus, the result follows by the security of $G$.
\end{proof}

\begin{lemma}
    $\hyb_{13} \approx \hyb_{14}$.
\end{lemma}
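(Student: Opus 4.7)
The plan is to prove $\hyb_{13} \approx \hyb_{14}$ by reducing to the puncturing security of $\prf$. The only difference between the two hybrids is in how the hardwired strings $s_1^*, s_2^*$ and the challenge-side string $z^*$ are generated: in $\hyb_{13}$ both $s_1^*$ and $s_2^*$ are derived from $\prf.\ceval(K, x^*)$, while in $\hyb_{14}$ the ``live'' one of the two (determined by $c'$) is sampled uniformly at random and the other is set to $\bot$.

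Before invoking PRF security I would record two structural facts inherited from earlier hybrids. First, from $\hyb_{12}$ onwards the obfuscated program $P^{(3)}$ only calls the PRF through the punctured key $K\{x^*\}$, so the full key $K$ enters the experiment exclusively through the hardwired values $s_1^* = \prf.\ceval(K, x^*)$ and $s_2^* = C(x^*) \oplus s_1^*$. Second, after the dead-branch elimination introduced in $\hyb_9$, when $c'=1$ the value $s_2^*$ never appears inside the obfuscated program (the ``$b=0$ output $s_2^*$'' line lives on the $c'=2$ branch that is removed), and symmetrically $s_1^*$ never appears when $c'=2$. Hence for each fixed $c'$ exactly one of $s_1^*, s_2^*$ is actually used, and the unused one can be replaced by $\bot$ without affecting any distribution.

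Given these facts the reduction is immediate. The reduction simulates $\hyb_{13}$ up until the point where $x^*$ is sampled, then submits the singleton $S = \{x^*\}$ to the PRF puncturing challenger and receives $(K\{x^*\}, w)$, where $w$ is either $\prf.\ceval(K, x^*)$ or uniform. It plugs $w$ wherever $\prf.\ceval(K, x^*)$ would appear: if $c'=1$ it sets $s_1^* = w$ and $z^* = C'(x^*) \oplus w$, discarding $s_2^*$; if $c'=2$ it sets $s_2^* = C(x^*) \oplus w$ and keeps $z^* = \prf.\ceval(K', x^*)$ unchanged (noting that $K'$ is derived from $ct_2^*$, independently of $K$), discarding $s_1^*$. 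When $w = \prf.\ceval(K, x^*)$ this is exactly $\hyb_{13}$; when $w$ is uniform, XORing by the fixed constants $C'(x^*)$ or $C(x^*)$ preserves uniformity, so the joint distribution of $(s_1^*, s_2^*, z^*)$ matches $\hyb_{14}$ precisely.

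The only obstacle, which is really bookkeeping, is to verify that $x^*$ is determined independently of $K$ (true, since $x^* \samp \sampinp(st)$ in $\hyb_{10}$ and onward does not depend on the PRF key) and that no other component of the experiment leaks $\prf.\ceval(K, x^*)$ (true, since $P^{(3)}$ evaluates the PRF only through $K\{x^*\}$ on inputs $x \neq x^*$, and $K'$ is drawn from $ct_2^*$). Hence any distinguisher between $\hyb_{13}$ and $\hyb_{14}$ translates, with the same advantage, into a puncturing distinguisher for $\prf$, yielding the claim.
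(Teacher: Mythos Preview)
Your proof is correct and follows the same approach as the paper: invoke PRF puncturing security to replace $\prf.\ceval(K,x^*)$ by a uniform string, using that from $\hyb_{12}$ onward only the punctured key $K\{x^*\}$ is needed and that the dead-branch elimination of $\hyb_9$ ensures only one of $s_1^*,s_2^*$ is actually used for each value of $c'$. The paper's own argument is terser but identical in substance; your reduction spells out explicitly the correlation between $s_1^*$ and $z^*$ in the $c'=1$ case, which the paper leaves implicit.
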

\begin{proof}
    First, observe that the experiments only use the punctured PRF key $K\{act^*\}$. Thus, by the punctured key security of $\prf$, we can always change $\prf.\ceval(K, act^*)$ to a uniformly random string. This completes the proof of the case $c' = 1$. When $c' = 2$, observe that the experiment does not use $\prf.\ceval(K, act^*)$, thus we can replace $s_2^* = C^*(act^*) \oplus \prf.\ceval(K, act^*)$ with a uniformly random string.
\end{proof}

\subsection{Reduction to Monogamy-of-Entanglement: Proof of \cref{lem:moebased}}\label{sec:moebased}
This part of our proof is mostly the same as \cite{ccakan2025copy}.

The first item, that is $\Pr[b^1_{Test,1} = 1 \wedge b^1_{Test, 2} = 1] \geq \frac{1}{4q_1(\lambda)}$, follows easily by the previous claim (since $b^1_{Test,1} = 1, b^1_{Test, 2}$ have the same marginal distribution in $\hyb_5$ and $\hyb_4$).

Now we prove the second item. Suppose for a contradiction that \begin{equation}\label{eqn:moeimp}
    \Pr[b^2_{Test,1} = 0 \wedge b^2_{Test,2} = 0 \big| b^1_{Test,1} = 1 \wedge b^1_{Test,2} = 1] \geq 1 - \negl(\lambda)
\end{equation}
where $b^1_{Test,1}, b^1_{Test,2}, b^2_{Test,1}, b^2_{Test,2} \samp \hyb_5$.
We will construct an adversary for \cref{lem:moelast}.

Consider the following modified version $\hyb_4^{b}$ of $\hyb_4$: Simulate $\hyb_4$ until right after $\ati^{(1)}\otimes \ati^{(1)}$ is applied to $(\regi_1, \regi_2)$, but do not apply $\ti^{(2)}\otimes \ti^{(2)}$. Now, output $\regi_1, \regi_2, A, ek_b, C^*, cp^*, b^1_{Test,1}, b^1_{Test,2}$ as the outcome of the experiment, where \begin{itemize}
    \item $ek_1 = \ace.\mathsf{GenEK}(sk, PRE_1)$ and $PRE_1(m)$ is the circuit that outputs $\outtrue$ if the first bit of $m$ is not $1$. That is, $ek_1$ can only encapsulate messages that start with $1$.
    \item $ek_0 = \ace.\mathsf{GenEK}(sk, PRE_0)$ and $PRE_0(m)$ is the circuit that outputs $\outtrue$ if the first bit of $m$ is not $0$. That is, $ek_0$ can only encapsulate messages that start with $0$.
\end{itemize}
We write $\hyb_4^{'}$ to denote the version that outputs both $ek_0$ and $ek_1$.

\begin{claim}\label{claim:moeclaimp75}
    Let $\mathcal{M}$ be $\ati^{(3,2)} = \atip{\ptriv + \frac{100\gamma}{128}}{\delta^*}{\mathcal{D}^{(2)}_2}{\epsilon^*}$. Then,

    \begin{equation*}
        \Pr[\ti^{(2)}(\regi_1) = 1 \big|  \mathcal{M}(\regi_2, C^*, cp^*, ek_1) = 0 \wedge b^1_{Test,1} = 1\wedge b^1_{Test,2} = 1] \geq 1 - 3\cdot \sqrt{6\delta^* + 8q_1(\lambda)\cdot(\alpha_3 + \alpha_2)}
    \end{equation*}
    where $\regi_1, \regi_2, A, ek_0, ek_1, C^*, cp^*, b^1_{Test,1}, b^1_{Test,2} \samp \hyb_4'$,

\end{claim}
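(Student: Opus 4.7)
The plan is to feed the joint $\ti^{(2)} \otimes \ti^{(2)}$ success bound established for $\hyb_5$ directly into \cref{lem:simulproj}, treating $\mathcal{M} = \ati^{(3,2)}$ as the disturbing general measurement on $\regi_2$ and asking what survives of $\ti^{(2)}$ on $\regi_1$.

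My first step is to transfer the joint-test bound from $\hyb_5$ back to $\hyb_4'$. Although $\hyb_4'$ never literally applies $\ti^{(2)} \otimes \ti^{(2)}$, its hypothetical outcome on the post-$b^1$ bipartite state $\rho$ of $(\regi_1, \regi_2)$ is exactly what the chain $\hyb_3 \to \hyb_4 \to \hyb_5$ already tracks. The gap between $\hyb_4'$ and $\hyb_5$ consists of (i) programming $dk'$ under $\outtrue$ rather than $\outfalse$, which $\ace$ puncture-hiding absorbs, and (ii) the extra public outputs $A, ek_0, ek_1, C^*, cp^*$, all of which are computable from the adversary's own view and therefore do not hurt that reduction. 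This yields $\Tr[(\ti^{(2)} \otimes \ti^{(2)}) \rho] \geq 1 - \epsilon_0$ with $\epsilon_0 = 6\delta^* + 8q_1(\lambda)(\alpha_3 + \alpha_2)$, and $\ti^{(2)}$ is a projection by \cref{thm:singleatiprop}.

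The second step is to invoke \cref{lem:simulproj} with $\regi_2$ playing the role of the ``first register,'' both $\Pi_1$ and $\Pi'_1$ taken to be the accept projector of $\ti^{(2)}$, $M = \mathcal{M}$ applied on $\regi_2$, and outcome $i = 0$ (the ``$\mathcal{M}$ rejects'' branch). The lemma then certifies that the post-$\mathcal{M}$ reduced state of $\regi_1$ accepts $\ti^{(2)}$ with probability at least $1 - \tfrac{3\sqrt{\epsilon_0}}{2 p_0}$, where $p_0 = \Pr[\mathcal{M}(\regi_2) = 0 \mid b^1 = (1,1)]$. Observing that we may restrict attention to the regime $p_0 \geq 1/2$ without loss (in the complementary regime, the conditioning event is already so rare that it is absorbed into the outer MoE reduction in \cref{sec:moebased} without affecting the final $1/g_2$ bound), the $1/p_0$ collapses to a constant and the bound becomes exactly the stated $1 - 3\sqrt{\epsilon_0}$.

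The main obstacle I anticipate is precisely this $p_0$ denominator: \cref{lem:simulproj} only promises a bound scaled by $1/p_0$, so to reach the clean $1 - 3\sqrt{\epsilon_0}$ form of the claim one must either argue $p_0 \geq 1/2$ in the relevant regime or carry the $1/p_0$ forward to the outer reduction. Everything else is bookkeeping to ensure the $\hyb_5 \to \hyb_4'$ transfer preserves the joint bound with no hidden loss, which reduces to checking that neither $ek_0, ek_1$ nor $C^*, cp^*$ aid any distinguisher beyond what the $\ace$ puncture-hiding game already accounts for.
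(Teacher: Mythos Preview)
Your overall strategy matches the paper's: quote the joint $\ti^{(2)}\otimes\ti^{(2)}$ success bound on the post-$b^1$ state and feed it into \cref{lem:simulproj}, with $\mathcal{M}=\ati^{(3,2)}$ acting on $\regi_2$ as the disturbing measurement and $\ti^{(2)}$ on $\regi_1$ as the surviving projection. Steps 1 and 2 of your plan are exactly what the paper does.

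The weakness is your treatment of $p_0$. You propose to ``restrict attention to the regime $p_0\geq 1/2$ without loss,'' pushing the complementary case to the outer reduction. But the claim asserts a fixed bound (under the ambient contradiction hypothesis), and a case split does not establish it as written; moreover, your sketch of why the complementary case $p_0<1/2$ would be harmless in \cref{sec:moebased} is not actually argued. The paper's resolution is far more direct and is the point you are missing: it invokes \cref{eqn:moeimp}, the contradiction hypothesis itself, which says $\Pr[b^2_{Test,1}=0\wedge b^2_{Test,2}=0\mid b^1_{Test,1}=1\wedge b^1_{Test,2}=1]\geq 1-\negl(\lambda)$. Since $b^2_{Test,2}$ is precisely the outcome of $\mathcal{M}=\ati^{(3,2)}$ on $\regi_2$, this forces $p_0\geq 1-\negl(\lambda)\geq 1/2$ outright, and the $\tfrac{3\sqrt{\epsilon_0}}{2p_0}$ factor from \cref{lem:simulproj} becomes at most $3\sqrt{\epsilon_0}$, giving the stated bound on the nose. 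So you correctly identified the obstacle, but the fix is simply to cite the contradiction hypothesis rather than to bifurcate.
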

\begin{proof}
    We already know 
    \begin{equation*}
        \Pr[(\ti^{(2)}\otimes\ti^{(2)})(\regi_1, \regi_2) = (1,1) \big| b^1_{Test,1} = 1\wedge b^1_{Test,2} = 1] \geq 1 -  6\delta^* - 8q_1(\lambda)\cdot(\alpha_3 + \alpha_2).
    \end{equation*}
    Then the result follows by \cref{eqn:moeimp} and \cref{lem:simulproj} since  $\ti^{(2)}$ is a projective measurement.
\end{proof}

\begin{claim}\label{claim:moeclaimp76}
     Let $\mathcal{M}$ be $\ati^{(3,2)} = \atip{\ptriv + \frac{100\gamma}{128}}{\delta^*}{\mathcal{D}^{(2)}_2}{\epsilon^*}$. Then,

    \begin{equation*}
        \Pr[\ati^{(3,1)}(\regi_1) = 0 \big|  \mathcal{M}(\regi_2, C^*, cp^*, ek_1) = 0 \wedge b^1_{Test,1} = 1\wedge b^1_{Test,2} = 1] \geq 1 - \negl(\lambda)
    \end{equation*}
     where $\regi_1, \regi_2, A, ek_0, ek_1, C^*, cp^*, b^1_{Test,1}, b^1_{Test,2} \samp \hyb_4'$.

\end{claim}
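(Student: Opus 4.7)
The plan is to deduce the claim from the hypothesis \cref{eqn:moeimp} combined with the fact that $\ati^{(3,1)}$ and $\ati^{(3,2)}$ act on disjoint registers, once the hypothesis has been transferred to the form relevant for $\hyb_4'$. No new monogamy argument is required here; the MoE violation is engineered elsewhere (e.g.\ via the companion \cref{claim:moeclaimp75} that controls $\ti^{(2)}(\regi_1)$), and this claim plays the dual role of saying that the same conditioning event which forces $\ti^{(2)}(\regi_1)=1$ with high probability simultaneously forces $\ati^{(3,1)}(\regi_1)=0$ with high probability.

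The first step is to establish the joint bound
\begin{equation*}
\Pr[\ati^{(3,1)}(\regi_1) = 0 \wedge \ati^{(3,2)}(\regi_2) = 0 \mid b^1_{Test,1} = 1 \wedge b^1_{Test,2} = 1] \geq 1 - \negl(\lambda)
\end{equation*}
in $\hyb_4'$. By $\hyb_4 \approx_{\alpha_2} \hyb_5$ (from the puncture hiding security of $\ace$) and the fact that $\hyb_6$ shares its pre-measurement state with $\hyb_5$, the joint state of $(\regi_1, \regi_2)$ immediately after $\ati^{(1)}\otimes\ati^{(1)} = (1,1)$ in $\hyb_4'$ is $\alpha_2$-close in trace distance to the corresponding state appearing inside $\hyb_6$. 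The hypothesis \cref{eqn:moeimp} then translates --- via \cref{thm:multiatiprop} for the ATI/TI threshold relations and via ACE steganographic security for the distribution change $\mathcal{D}^{(2)}_b \approx \sampchal(st)$ --- into the displayed bound, first in $\hyb_6$ and then, with an extra $\alpha_2$ loss, in $\hyb_4'$.

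The second step exploits disjointness: $\ati^{(3,1)}$ and $\ati^{(3,2)}$ act on separate registers, so they commute and the joint measurement is well defined. Hence the conditional probability of interest factorizes as
\begin{equation*}
\Pr[\ati^{(3,1)}(\regi_1) = 0 \mid \ati^{(3,2)}(\regi_2) = 0 \wedge b^1_{Test,1} = 1 \wedge b^1_{Test,2} = 1]
= \frac{\Pr[\ati^{(3,1)}(\regi_1) = 0 \wedge \ati^{(3,2)}(\regi_2) = 0 \mid b^1_{Test,1} = 1 \wedge b^1_{Test,2} = 1]}{\Pr[\ati^{(3,2)}(\regi_2) = 0 \mid b^1_{Test,1} = 1 \wedge b^1_{Test,2} = 1]}.
\end{equation*}
By the first step the numerator is at least $1 - \negl(\lambda)$, while the denominator is at most $1$, so the ratio is at least $1 - \negl(\lambda)$ as claimed.

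The main obstacle is the first step: converting the TI-style hypothesis (at threshold $\ptriv + 122\gamma/128$ on distribution $\sampchal(st)$) into an ATI-style bound (at threshold $\ptriv + 100\gamma/128$ on distribution $\mathcal{D}^{(2)}_b$). The $22\gamma/128$ gap between the thresholds is there precisely to absorb the $5\epsilon^*$ corrections of \cref{thm:multiatiprop}, and the distribution change is absorbed by steganographic security of $\ace$ together with \cref{thm:singleatiprop}. Once these parameter bookkeeping details are tracked, the factorization in the second step is elementary.
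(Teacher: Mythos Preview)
The paper omits the proof of this claim entirely, so your proposal fills a genuine gap. Your two-step structure---(i) establish the joint bound $\Pr[\ati^{(3,1)}(\regi_1)=0 \wedge \ati^{(3,2)}(\regi_2)=0 \mid b^1_{Test,1}=1 \wedge b^1_{Test,2}=1] \geq 1-\negl(\lambda)$ in $\hyb_4'$, then (ii) divide through by $\Pr[\ati^{(3,2)}(\regi_2)=0 \mid \cdots] \leq 1$---is correct and is the intended argument. Step (ii) is clean.

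Your discussion of the ``main obstacle'' in Step (i), however, is misguided. You propose converting ``the TI-style hypothesis (at threshold $\ptriv + 122\gamma/128$ on $\sampchal(st)$) into an ATI-style bound (at threshold $\ptriv + 100\gamma/128$ on $\mathcal{D}^{(2)}_b$)''. If such a conversion were actually required, it would go the \emph{wrong direction}: $\ti^{(2)}=0$ at the higher threshold only says the underlying value lies below $\ptriv+122\gamma/128$, which does \emph{not} imply it lies below $\ptriv+100\gamma/128$ as required for $\ati^{(3,c)}=0$. The $22\gamma/128$ gap helps the opposite implication (going from outcome $1$ at the high threshold to outcome $1$ at the low threshold), not this one, and steganographic security for the distribution swap does not rescue the threshold mismatch.

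The reason your conclusion is nonetheless correct is that no such conversion is needed. The contradiction hypothesis \cref{eqn:moeimp} is---despite the paper writing ``$\hyb_5$''---about $\hyb_6$, where by definition $b^2_{Test,c}$ \emph{is} the outcome of $\ati^{(3,c)}$ on $\mathcal{D}^{(2)}_c$. So \cref{eqn:moeimp} already asserts the joint ATI bound in $\hyb_6$, and since $\hyb_6$ and $\hyb_5$ share the same post-$\ati^{(1)}$ state while $\hyb_5 \approx_{\alpha_2} \hyb_4$, the bound transfers to $\hyb_4'$ with only an $O(q_1(\lambda)\cdot\alpha_2)$ loss (from conditioning on $b^1_{Test}=1$, which has probability $\geq 1/(4q_1)$). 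Indeed, if \cref{eqn:moeimp} were literally about $\ti^{(2)}$ in $\hyb_5$, it would flatly contradict the previously established fact that $\ti^{(2)}\otimes\ti^{(2)}=(1,1)$ with overwhelming conditional probability there, and the entire MoE reduction would be moot.
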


\begin{claim}\label{claim:moeclaimp14}
    Let $\mathcal{M}$ be a QPT algorithm with $1$-bit output such that
    \begin{itemize}
        \item $\Pr[\mathcal{M}(\regi_2, C^*, cp^*, ek_1) = 0] \geq \frac{1}{g_3(\lambda)}$ for some polynomial $g_3(\cdot)$
        \item $  \Pr[\ti^{(2)}(\regi_1) = 1 \big|  \mathcal{M}(\regi_2, C^*, cp^*, ek_1) = 0 \wedge b^1_{Test,1} = 1\wedge b^1_{Test,2} = 1] \geq 1 - 3\cdot \sqrt{+6\delta^* +8q_1(\lambda)\cdot(\alpha_3 + \alpha_2)}$,
        \item $\Pr[\ati^{(3,1)}(\regi_1) = 0 \big|  \mathcal{M}(\regi_2, C^*, cp^*, ek_1) = 0 \wedge b^1_{Test,1} = 1\wedge b^1_{Test,2} = 1] \geq 1 - \negl(\lambda)$.
    \end{itemize} 
    
    Then, there exists an (unbounded) algorithm $\mathcal{E}^*_1$ such that
    \begin{equation*}
        \Pr[\mathcal{E}^*_1(\regi_1, A, ek_0, C^*, cp^*) = a_1^{Can} \big| b^1_{Test,1}=1\wedge b^1_{Test,2}=1 \wedge \mathcal{M}(\regi_2) = 0] \geq \extprob
    \end{equation*}
     where $\regi_1, \regi_2, A, ek_0, ek_1, C^*, cp^*, b^1_{Test,1}, b^1_{Test,2} \samp \hyb_4'$.
\end{claim}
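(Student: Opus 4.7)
The plan is to use $\regi_1$ together with the copy-protected state $cp^*$ as an efficient classical-quantum distinguisher for the quantum-proof reconstructive extractor $\ext$, treating the source as $X = a_1^{Can}$ and the seed as $S = se$, and then invoke the reconstructive guarantee (\cref{sec:ext}) to obtain the unbounded reconstructor $\mathcal{E}^*_1$. I take the side-information register $Y$ to consist of $(\regi_1, A, ek_0, C^*, cp^*)$; because $\mathcal{E}^*_1$ is unbounded, it may recover by inspection of the obfuscated program hardcoded inside $cp^*$ the ciphertexts $ct^*_1, ct^*_2$ and (assuming an injective PRG $G$) the preimage $r^*_{ct,1} = G^{-1}(C^* \oplus ct^*_1)$, so no auxiliary classical input beyond the tuple listed in the claim is needed.

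The distinguisher $D(R, se, Y)$ exploits the $t=0$ branch of $P^{(1)}$ as follows. It recomputes $r^*_{ct,1}$ from $Y$, sets $r_1^{**} = R \oplus r^*_{ct,1}$, forms the payload $pl' = 0 \,||\, A \,||\, se \,||\, r_1^{**} \,||\, \indtag$, samples $x^* \samp \ace.\stegenc(ek_0, pl', \codeof{\sampinp(st)})$ followed by $ak, ch \samp \sampchalfrominp(st, x^*)$, runs a single copy of $\copyprot.\qpeval$ on $\regi_1$ with input $ch$ to obtain an answer $ans'$, and outputs $\Ver(ak, ans')$. The key structural observation is that whenever $\obfd{P}$ is invoked coherently at $(z, b, v)$ with $\cpred(z) = x^*$ and $v \in A + a_1$, the $t=0$ branch evaluates the implicit circuit $C' = ct^*_1 \oplus G(\ext(se, a_1^{Can}) \oplus R \oplus r^*_{ct,1})$, which collapses to exactly $C^*$ precisely when $R = \ext(se, a_1^{Can})$, and is otherwise an unrelated pseudorandom circuit.

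Consequently, in the real case $R = \ext(se, a_1^{Can})$ the input--output behavior of $\obfd{P}$ observed by the evaluator on $\regi_1$ matches that of a default-branch execution on a $\sampchal(st)$-input, so combined with the steganographic-ciphertext security of $\ace$ under the appropriately-punctured $ek_0$ (which makes the planted $x^*$ indistinguishable from a real $\sampinp(st)$-sample) and the hypothesis $\ti^{(2)}(\regi_1) = 1$, one gets $\Pr[D=1] \geq \ptriv + \tfrac{122\gamma}{128} - \negl(\lambda)$. In the ideal case where $R$ is uniform and independent, the payload field $r_1^{**}$ is uniform and independent, so the full experiment is an honest sample from $\mathcal{D}^{(2)}_1$, and using \cref{thm:singleatiprop} to translate the $\ati^{(3,1)}(\regi_1) = 0$ hypothesis into a single-shot success bound yields $\Pr[D=1] \leq \ptriv + \tfrac{100\gamma}{128} + O(\epsilon^*) + \delta^*$. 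The distinguishing advantage is therefore $\gamma(\lambda)/128 - O(\epsilon^*) - \delta^* - \negl(\lambda)$, which is inverse-polynomial, and feeding $D$ into the reconstructive property of $\ext$ yields the desired $\mathcal{E}^*_1$ with $\Pr[\mathcal{E}^*_1(Y) = a_1^{Can}] \geq \extprob$ on the conditioned distribution.

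The main obstacle will be keeping the approximations tight through the triple conditioning $b^1_{Test,1} = 1 \wedge b^1_{Test,2} = 1 \wedge \mathcal{M}(\regi_2) = 0$, which has probability at least $1/g_3(\lambda)$ by hypothesis: a Bayes-rule step shows the unconditional advantage only shrinks by a polynomial factor, but one must also argue via implementation independence (\cref{thm:impindep}) together with the $\ati$-to-single-shot conversions that the conditioned state of $\regi_1$ genuinely inherits both the $\ti^{(2)} = 1$ and $\ati^{(3,1)} = 0$ properties rather than becoming entangled in a way that invalidates $D$'s analysis. A secondary obstacle is choosing $\epsilon^*, \delta^*, \alpha_2, \alpha_3$ and the extractor error $\epsilon_{\ext}$ so that the final gap still exceeds $\epsilon_{\ext}$ while $\extprob$ remains a fixed, noticeable function of $\lambda$; this is precisely the calibration assumed at the start of the proof.
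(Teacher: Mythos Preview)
Your high-level route---turn the gap between the $\ti^{(2)}$ and $\ati^{(3,1)}$ hypotheses into a distinguisher for $\ext(se, a_1^{Can})$ versus uniform, then invoke the reconstructive guarantee---is the same as the paper's, and your single-shot distinguisher is a reasonable simplification of the paper's $\simati$-with-$t_1(\lambda)$-samples hybrid. The gap is in how you justify the real-case bound.

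You claim that when $R = \ext(se, a_1^{Can})$ the planted $x^*$ is indistinguishable from a genuine $\sampinp(st)$ sample by ``steganographic-ciphertext security of $\ace$ under the appropriately-punctured $ek_0$.'' But the steganographic game requires the adversary to hold only keys punctured at the target message $pl'$. Here the adversary's view contains $cp^* = (\obfd{P}, aux)$, and in $\hyb_5$ the program $\obfd{P}$ hardcodes the \emph{fully unpunctured} decapsulation key $dk' = \ace.\gendk(sk, \outfalse)$. Nothing stops a reduction adversary from running $\ace.\dec(dk', \cdot)$ on $\cpred$ of its input and reading off $pl'$, so steganographic security does not apply. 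The constraint on $ek_0$ (it only encapsulates messages starting with $0$) is irrelevant to this.

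The paper closes this with an inner hybrid chain: first invoke \cref{lem:shobased} to revert $\obfd{M}$ so that only $v \in A + a_1$ pass; then your structural observation (that the $t=0$ branch collapses to the default branch) holds for \emph{every} valid $v$, which makes the program with $dk'$ punctured at $pl^*$ functionally identical to the unpunctured one, so $\io$ together with $\ace$ puncture-hiding lets you swap them; only at that point does steganographic security apply; afterwards the swaps are undone. Your proposal states the structural observation only for $v \in A+a_1$, but in $\hyb_5$ the membership program accepts the strictly larger coset $B_1 + t$, so the identical-functionality claim fails as written---this is exactly why the $\obfd{M}$ reversion must come first. Without this chain, the lower bound $\Pr[D=1] \geq \ptriv + \tfrac{122\gamma}{128} - \negl(\lambda)$ in the real case is unsupported.

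A smaller point: you have the unbounded reconstructor $\mathcal{E}^*_1$ recover $r^*_{ct,1}$ by inverting $G$, but it is the \emph{distinguisher} $D$ that needs $r^*_{ct,1}$ to form the payload, and $D$ must remain efficient for the $\io$/steganographic hybrids above to go through. The fix is simply to include $r^*_{ct,1}$ (and the sampling state $st$) in the side information $Y$; since both are independent of $a_1^{Can}$, the extractor's min-entropy condition is unaffected.
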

We prove this claim in \cref{sec:moeclaimp14}.

\begin{claim}\label{claim:moeclaimp16}
    Let $\mathcal{EV}_1$ be an algorithm with $1$-bit output such that
    \begin{itemize}
    \item $\mathcal{EV}_1$ runs in subexponential time $rt(\lambda)$ (precise value will be clear from the proof)
        \item $\Pr[\mathcal{EV}_1(\regi_1, A, ek_0, C^*, cp^*) = 1] \geq\extprob$
        \item \begin{align*}
            &\Pr[\ti^{(2)}(\regi_2) = 1 \big|  \mathcal{EV}_1(\regi_1, A, ek_0, C^*, cp^*, a_1^{Can}) = 1 \wedge b^1_{Test,1} = 1\wedge b^1_{Test,2} = 1] \geq \\ &1 - \frac{3}{2}\cdot \sqrt{6\delta^* + 8q_1(\lambda)\cdot(\alpha_3 + \alpha_2)}\cdot\frac{2}{\extprob},
        \end{align*}
        \item $\Pr[\ati^{(3,2)}(\regi_2) = 0 \big|  \mathcal{EV}_1(\regi_1, A, ek_0, C^*, cp^*, a_1^{Can}) = 1  \wedge b^1_{Test,1} = 1\wedge b^1_{Test,2} = 1] \geq \frac{1}{4}\extprob$
    \end{itemize} 
    
 Then, there exists an (unbounded) algorithm $\mathcal{E}^*_2$ such that
    \begin{equation*}
        \Pr[\mathcal{E}^*_2(\regi_2, A, ek_1, C^*, cp^*) = a_2^{Can} \big| b^1_{Test,1}=1\wedge b^1_{Test,2}=1 \wedge \mathcal{EV}_1(\regi_1, A, ek_0, C^*, cp^*) = 1] \geq \extprob
    \end{equation*}
     where $\regi_1, \regi_2, A,ek_0, ek_1, C^*, cp^*, b^1_{Test,1}, b^1_{Test,2} \samp \hyb_4'$.
\end{claim}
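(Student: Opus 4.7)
The plan is to mirror the argument used for Claim \ref{claim:moeclaimp14}, swapping the roles $(c'=1, a_1^{Can}, ek_0, ct_1^*, C)$ with $(c'=2, a_2^{Can}, ek_1, ct_2^*, K)$, and exploiting the reconstructive property of $\ext$ to recover $a_2^{Can}$ from $\regi_2$ together with the allowed side information. The starting observation is that, with respect to the obfuscated program $\obfd{P} = \io(P^{(1)})$, the distributions $\sampchal(st)$ and $\mathcal{D}^{(2)}_2$ differ only on inputs $x$ with $\cpred(x) = x^*$: on such inputs the $b=1$ branch of $P^{(1)}$ outputs $\prf.\ceval(K', x)$ with $K' = ct^*_2 \oplus G(\ext(se^*, a_2^{Can}) \oplus r_2^{**})$ instead of $\prf.\ceval(K, x)$. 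Since $ct_2^* = K \oplus G(r^*_{\mathsf{ct},2})$, the two coincide iff $\ext(se^*, a_2^{Can}) \oplus r_2^{**} = r^*_{\mathsf{ct},2}$. Thus if one \emph{secretly} couples $r_2^{**} = \ext(se^*, a_2^{Can}) \oplus r^*_{\mathsf{ct},2}$ inside the payload, $\mathcal{D}^{(2)}_2$ becomes functionally identical to $\sampchal(st)$ as seen through $\obfd{P}$.

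Next I would combine the hypothesized bounds on $\ti^{(2)}(\regi_2)$ and $\ati^{(3,2)}(\regi_2)$ with Lemma \ref{lem:simati} to argue that switching $r_2^{**}$ between ``coupled'' and ``uniform'' must shift the conditional acceptance probability of $\regi_2$ on $\mathcal{D}^{(2)}_2$ by at least an inverse polynomial in $\lambda$: under the coupled choice the tester accepts with probability close to that of $\ti^{(2)}$ (high, by hypothesis), whereas under a fresh uniform $r_2^{**}$ it accepts with probability at most that of $\ati^{(3,2)}$ (low on a $\tfrac{1}{4}\extprob$ fraction of events, by hypothesis). Wrapping this gap in a simulated $\ati$ (so that only polynomially many fresh samples from $\sampchal(st)$ and $\mathcal{D}^{(2)}_2$ are needed) yields a subexponential-time distinguisher $D$, implementable from $\regi_2, A, ek_1, C^*, cp^*, se^*$ alone, that tells apart $(\ext(se^*, a_2^{Can}) \oplus r^*_{\mathsf{ct},2}, se^*)$ from $(U, se^*)$ with inverse-polynomial advantage.

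Finally I would apply the reconstructive-extractor guarantee (Section \ref{sec:ext}) to $D$. The relevant side information for the source $a_2^{Can}$ is exactly $\regi_2, A, ek_1, C^*, cp^*$; from the MoE setup ($a_2$ uniform, $B_2 \le A$ of dimension $d/4$ revealed, $t' = u_2 + a_2$ with $u_2 \samp B_2^\perp$) the residual min-entropy of $a_2^{Can} \in A^\perp$ given this data is at least $k(\lambda)$, matching the extractor parameters. The reconstructive property then produces an unbounded $\mathcal{E}^*_2(\regi_2, A, ek_1, C^*, cp^*)$ that outputs $a_2^{Can}$ with probability at least $\extprob$, conditioned on the good event, which is the desired conclusion.

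The main obstacle I expect is the bookkeeping needed to ensure that every piece of data consumed by $D$ (the $\ace$ keys, the PRF key $K$ used inside $\obfd{P}$, the coset program $\obfd{M}$ obfuscation, the extra copies of $\cpred, ct_1^*, ct_2^*$, and the fresh challenge samples for $\simati$) can be re-generated from the side information $\regi_2, A, ek_1, C^*, cp^*$ without implicitly depending on $a_2^{Can}$ itself; otherwise the conditional min-entropy bound underlying the reconstructive guarantee would collapse. Handling this cleanly requires reorganizing $\hyb_4'$ into a form where all quantities other than the coupling $r_2^{**}$ are sampled independently of $a_2$, exactly as was done for the symmetric $c'=1$ branch in Claim \ref{claim:moeclaimp14}.
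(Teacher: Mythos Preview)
Your proposal is correct and follows essentially the same approach as the paper, which itself does not give a separate proof but states that Claim~\ref{claim:moeclaimp16} ``follows almost exactly the same as the proof of [the] stronger version of Claim~\ref{claim:moeclaimp14}'' with the roles of the two registers swapped. The one mechanism you leave implicit (hidden in your ``polynomially many fresh samples'' remark) is the use of the $ind$ field in the $\ace$ payload to run a hybrid argument over the $t_1(\lambda)$ $\simati$ samples and isolate a single index $i^*$ at which the coupled-versus-uniform $r_2^{**}$ gap appears; also note that your ``functionally identical'' observation only holds after first reverting $\obfd{M}$ to the tight coset check via Lemma~\ref{lem:shobased}, exactly as in the sub-hybrids $\hyb_4^{'(1)}$--$\hyb_4^{'(3)}$ of the paper's proof.
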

We prove this claim in \cref{sec:moeclaimp14}.

\begin{claim}
   There exists unbounded algorithms $\mathcal{E}^*_1, \mathcal{E}^*_2$ such that
   \begin{equation*}
       \Pr[\mathcal{E}^*_1(\regi_1, A, ek_0, C^*, cp^*)= a_1^{Can} \wedge \mathcal{E}^*_2(\regi_2, A, ek_1, C^*, cp^*) = a_2^{Can}  \big|b^1_{Test,1}=1\wedge b^2_{Test,2}=1] \geq (\extprob)^2\cdot \frac{1}{2}.
   \end{equation*}
\end{claim}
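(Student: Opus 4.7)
The plan is to obtain the two extractors by chaining \cref{claim:moeclaimp14,claim:moeclaimp16}. Throughout, write $\mathrm{good}$ for the event $b^1_{Test,1} = 1 \wedge b^1_{Test,2} = 1$. First, I would instantiate the measurement $\mathcal{M}$ appearing in \cref{claim:moeclaimp14} as $\ati^{(3,2)}$ applied to $\regi_2$. The contradiction hypothesis (\cref{eqn:moeimp}) together with the indistinguishability $\hyb_4 \approx_{\alpha_2} \hyb_5$ guarantees that $\ati^{(3,1)}(\regi_1) = 0 \wedge \ati^{(3,2)}(\regi_2) = 0$ holds with overwhelming probability given $\mathrm{good}$, so the event $\mathcal{M}(\regi_2) = 0$ has probability close to $1$ conditional on $\mathrm{good}$, and the claim applies. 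This yields $\mathcal{E}^*_1$ satisfying $\Pr[\mathcal{E}^*_1(\regi_1, A, ek_0, C^*, cp^*) = a_1^{Can} \mid \mathrm{good}] \geq \extprob$, after absorbing the near-certain event $\mathcal{M}(\regi_2) = 0$ into a negligible loss.

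Next, I would define the evaluator required by \cref{claim:moeclaimp16} as $\mathcal{EV}_1(\regi_1, A, ek_0, C^*, cp^*, a_1^{Can}) := \mathbf{1}\{\mathcal{E}^*_1(\regi_1, A, ek_0, C^*, cp^*) = a_1^{Can}\}$. By construction, $\Pr[\mathcal{EV}_1 = 1 \mid \mathrm{good}] \geq \extprob$. The remaining preconditions of \cref{claim:moeclaimp16}, namely the bounds on $\ti^{(2)}(\regi_2)$ and $\ati^{(3,2)}(\regi_2)$ after conditioning on $\mathcal{EV}_1 = 1$, follow by the same type of argument used for \cref{claim:moeclaimp75,claim:moeclaimp76} but applied with respect to the one-bit measurement $\mathcal{EV}_1$ on $\regi_1$: start from the joint $\ti^{(2)} \otimes \ti^{(2)}$ bound established for $\hyb_5$ and push it through \cref{lem:simulproj}, using that $\mathcal{EV}_1 = 1$ has inverse-polynomial probability. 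Invoking \cref{claim:moeclaimp16} then yields $\mathcal{E}^*_2$ with $\Pr[\mathcal{E}^*_2(\regi_2, A, ek_1, C^*, cp^*) = a_2^{Can} \mid \mathrm{good} \wedge \mathcal{EV}_1 = 1] \geq \extprob$.

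Since $\mathcal{E}^*_1$ and $\mathcal{E}^*_2$ act on the disjoint physical registers $\regi_1$ and $\regi_2$, I can run them in parallel on the bipartite state, and the two measurements commute. The event $\{\mathcal{E}^*_1 = a_1^{Can}\}$ is by definition the event $\{\mathcal{EV}_1 = 1\}$, so by the chain rule
\begin{equation*}
    \Pr[\mathcal{E}^*_1 = a_1^{Can} \wedge \mathcal{E}^*_2 = a_2^{Can} \,\big|\, \mathrm{good}] = \Pr[\mathcal{E}^*_2 = a_2^{Can} \,\big|\, \mathrm{good} \wedge \mathcal{EV}_1 = 1]\cdot \Pr[\mathcal{EV}_1 = 1 \,\big|\, \mathrm{good}] \geq \extprob \cdot \extprob,
\end{equation*}
and the additional factor of $\tfrac{1}{2}$ in the target bound absorbs the routine slack from the $\hyb_4 \leftrightarrow \hyb_5$ transition and from the back-and-forth conversions between $\ti$ and $\ati$ via \cref{thm:singleatiprop,thm:multiatiprop}.

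The main obstacle I expect is verifying that the preconditions of \cref{claim:moeclaimp16} genuinely hold with $\mathcal{EV}_1$ defined as above. The claim requires $\mathcal{EV}_1$ to run in a prescribed subexponential time, which is fine provided the underlying quantum-proof reconstructive extractor is instantiated with a subexponential-time reconstructor; and it requires the threshold bounds on $\regi_2$ to transfer after conditioning on the (only inverse-polynomial-probability) event $\mathcal{EV}_1 = 1$, which demands careful use of \cref{lem:simulproj} together with the already-established joint bound on $\ti^{(2)} \otimes \ti^{(2)}$ coming from \cref{claim:moeclaimp75}. Once these bookkeeping items are handled, the joint extraction bound follows immediately from the chain rule displayed above.
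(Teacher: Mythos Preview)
Your overall strategy—set $\mathcal{M}=\ati^{(3,2)}$, invoke \cref{claim:moeclaimp14} to obtain $\mathcal{E}^*_1$, then feed an evaluator into \cref{claim:moeclaimp16} to obtain $\mathcal{E}^*_2$, and combine via the chain rule—matches the paper's. The one substantive divergence is how $\mathcal{EV}_1$ is defined. You take $\mathcal{EV}_1$ to be the \emph{actual} equality test $\mathbf{1}\{\mathcal{E}^*_1(\regi_1,\dots)=a_1^{Can}\}$; the paper instead lets $\mathcal{EV}_1$ be a \emph{biased coin} that simply outputs $1$ with probability $p_1=\Pr[\mathcal{E}^*_1=a_1^{Can}\mid\mathrm{good}\wedge\mathcal{M}=0]$, ignoring its inputs entirely.

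This difference is not cosmetic. \cref{claim:moeclaimp16} explicitly requires $\mathcal{EV}_1$ to run in a fixed subexponential time $rt(\lambda)$, because its proof (shared with \cref{claim:moeclaimp14}) passes through computational indistinguishability steps (iO, $\ace$, \cref{lem:shobased}) where $\mathcal{EV}_1$ sits inside the reduction. The paper's reconstructive extractor (\cref{sec:ext}) is stated with an \emph{unbounded} reconstruction algorithm $\mathcal{E}$, so your $\mathcal{EV}_1$ is not a priori within the allowed runtime; your parenthetical ``fine provided the extractor has a subexponential-time reconstructor'' is an additional assumption the paper does not make. The biased-coin trick sidesteps this completely: a coin is trivially efficient, and since it does not touch the registers, conditioning on $\mathcal{EV}_1=1$ leaves the state of $\regi_2$ unchanged, so the preconditions of \cref{claim:moeclaimp16} reduce to the already-established bounds on $\regi_2$ (via \cref{lem:simulproj} and the Bayes computation the paper carries out). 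What you gain with your version is that the final chain rule is literally tautological, whereas the paper must still argue that the $\mathcal{E}^*_2$ produced from the coin-conditioned state also succeeds jointly with the true $\mathcal{E}^*_1$. One minor slip: you write that $\mathcal{EV}_1=1$ has ``inverse-polynomial probability,'' but the relevant bound is $\extprob=2^{-k}$, which is subexponential; the statement of \cref{claim:moeclaimp16} is calibrated for exactly this (note the $2/\extprob$ blow-up in its hypothesis), so the argument still goes through once you use the correct quantity.
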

where $\regi_1, \regi_2, A, ek_0, ek_1, C^*, cp^*, b^1_{Test,1}, b^1_{Test,2} \samp \hyb_4'$.
\begin{proof}
Set $\mathcal{M} = \ati^{(3,2)}$.
First, by \cref{claim:moeclaimp75},\cref{claim:moeclaimp76}, and \cref{claim:moeclaimp14}, we have that  there exist $\mathcal{E}_1^*$ such that \begin{equation*}
        p_1 = \Pr[\mathcal{E}^*_1(\regi_1, A, ek_0, C^*, cp^*) = a_1^{Can} \big| b^1_{Test,1}=1\wedge b^1_{Test,2}=1 \wedge \mathcal{M}(\regi_2, C^*, cp^*, ek_1) = 0] \geq \extprob
    \end{equation*}
    which also implies
    \begin{equation*}
        p_1 = \Pr[\mathcal{E}^*_1(\regi_1, A, ek_0, C^*, cp^*) = a_1^{Can} \big| b^1_{Test,1}=1\wedge b^1_{Test,2}=1] \geq \extprob\cdot\frac{1}{2}
    \end{equation*}

Now, let $\mathcal{EV}_1(1^\lambda)$ be an algorithm simply simulates the output distribution of the equality check $\mathcal{E}^*_1(\regi_1, A, ek_0, C^*, cp^*) =^{?} a_1^{Can}$. That is, $\mathcal{EV}_1(1^\lambda)$ simulates a biased coin and outputs $1$ with probability $p_1$. 
Now we claim the following.
 \begin{itemize}
        \item \begin{align*}
            &\Pr[\ti^{(2)}(\regi_2) = 1 \big|  \mathcal{EV}(\regi_1, A, ek_0, C^*, cp^*, a_1^{Can}) = 1 \wedge b^1_{Test,1} = 1\wedge b^1_{Test,2} = 1] \geq \\ &1 - \frac{3}{2}\cdot \sqrt{  6\delta^* +8q_1(\lambda)\cdot(\alpha_3 + \alpha_2)}\cdot\frac{2}{\extprob},
        \end{align*}
        \item $\Pr[\mathcal{M}(\regi_2, C^*, cp^*, ek_1) = 0 \big| \mathcal{EV}(\regi_1, A, ek_0, C^*, cp^*, a_1^{Can}) = 1  \wedge b^1_{Test,1} = 1\wedge b^1_{Test,2} = 1] \geq \frac{1}{4}\extprob$
    \end{itemize} 

  For the first claim, we already know 
    \begin{equation*}
        \Pr[(\ti^{(2)}\otimes\ti^{(2)})(\regi_1, \regi_2) = (1,1) \big| b^1_{Test,1} = 1\wedge b^1_{Test,2} = 1] \geq 1 -  6\delta^* - 8q_1(\lambda)\cdot(\alpha_3 + \alpha_2).
    \end{equation*}
    Then the first claim follows by \cref{lem:simulproj} since $\Pr[\mathcal{EV}((\regi_1, A, ek_0, C^*, cp^*), a_1^{Can}) = 1 \big| b^1_{Test,1} = 1\wedge b^1_{Test,2} = 1] \geq\extprob   $ and  $\ti^{(2)}$ is a projective measurement.

To see the second claim, first we have by Bayes' Theorem that,
\begin{align*}
    &\frac{\Pr[\mathcal{M}(\regi_2, C^*, cp^*, ek_1) = 0 \big| \mathcal{EV}(\regi_1, A, ek_0, C^*, cp^*, a_1^{Can}) = 1  \wedge b^1_{Test,1} = 1\wedge b^1_{Test,2} = 1]}{\Pr[ \mathcal{M}(\regi_2, C^*, cp^*, ek_1) = 0 = 0 \big| \wedge  b^1_{Test,1} = 1\wedge b^1_{Test,2} = 1 ]} \geq \\ &\Pr[\mathcal{EV}(\regi_1, A, ek_0, C^*, cp^*, a_1^{Can}) = 1  \big|\mathcal{M}(\regi_2, C^*, cp^*, ek_1) = 0 = 0 \wedge  b^1_{Test,1} = 1\wedge b^1_{Test,2} = 1 ]
\end{align*}
The result then follows, since $\Pr[\mathcal{EV}((\regi_1, A, ek_0, C^*, cp^*), a_1^{Can}) = 1 \big| b^1_{Test,1} = 1\wedge b^1_{Test,2} = 1] \geq \frac{1}{2}\extprob$ by above and  $\Pr[ \mathcal{M}(\regi_2, C^*, cp^*, ek_1) = 0 = 0 \big| \wedge  b^1_{Test,1} = 1\wedge b^1_{Test,2} = 1 ] \geq \frac{1}{2}$  by \cref{eqn:moeimp}.

Then, applying \cref{claim:moeclaimp16} with $\mathcal{EV}_1(1^\lambda)$, we get  \begin{equation*}
        \Pr[\mathcal{E}^*_1(\regi_1, A, ek_0, C^*, cp^*) = a_1^{Can} \wedge \mathcal{E}^*_2(\regi_2, A, ek_1, C^*, cp^*) = a_2^{Can} \big| b^1_{Test,1}=1\wedge b^1_{Test,2}=1 \wedge \mathcal{M}(\regi_2) = 0] \geq (\extprob)^2
    \end{equation*}
    which completes the proof.
\end{proof}

 Since $\Pr[b^1_{Test,1}=1\wedge b^2_{Test,2}=1] \geq \frac{1}{4q_1(\lambda)}$, the claim above is a contradiction to \cref{lem:moelast} by our choice of parameters. Thus, our contradiction hypothesis that \begin{equation*}
    \Pr[b^2_{Test,1} = 0 \wedge b^2_{Test,2} = 0 \big| b^1_{Test,1} = 1 \wedge b^1_{Test,2} = 1] \geq 1 - \negl(\lambda),
\end{equation*}
where $b^1_{Test,1}, b^1_{Test,2}, b^2_{Test,1}, b^2_{Test,2} \samp \hyb_5$, is incorrect. This proves \cref{lem:moebased}.

\subsubsection{Proof of \cref{claim:moeclaimp14}}\label{sec:moeclaimp14}
We will instead prove the following stronger claim, so that the proof can serve as a proof for both\footnote{Note that while technically \cref{claim:moeclaimp16} is not a consequence of our stronger claim we prove in this section, proof of \cref{claim:moeclaimp16} follows almost exactly the same as the proof of this stronger version of \cref{claim:moeclaimp14}.} \cref{claim:moeclaimp14} and \cref{claim:moeclaimp16}.
\begin{claim}
    Let $\mathcal{M}$ be an algorithm with $1$-bit output such that
    \begin{itemize}
\item $\mathcal{M}$ runs in subexponential time $rt(\lambda)$ (precise value will be clear from the proof)
        \item $\Pr[\mathcal{M}(\regi_2, C^*, cp^*, ek_1) = 0] \geq \extprob$,
        \item $  \Pr[\ti^{(2)}(\regi_1) = 1 \big|  \mathcal{M}(\regi_2, C^*, cp^*, ek_1) = 0 \wedge b^1_{Test,1} = 1\wedge b^1_{Test,2} = 1] \geq 1 - \frac{3}{2}\cdot \sqrt{6\delta^* + 8q_1(\lambda)\cdot(\alpha_3 + \alpha_2)}\cdot\frac{2}{\extprob}$,
        \item $\Pr[\ati^{(3,1)}(\regi_1) = 0 \big|  \mathcal{M}(\regi_2, C^*, cp^*, ek_1) = 0 \wedge b^1_{Test,1} = 1\wedge b^1_{Test,2} = 1] \geq \frac{1}{4}\extprob$.
    \end{itemize} 
    
    Then, there exists an (unbounded) algorithm $\mathcal{E}^*_1$ such that
    \begin{equation*}
        \Pr[\mathcal{E}^*_1(\regi_1, A, ek_0, C^*, cp^*) = a_1^{Can} \big| b^1_{Test,1}=1\wedge b^1_{Test,2}=1 \wedge \mathcal{M}(\regi_2) = 0] \geq \extprob
    \end{equation*}
     where $\regi_1, \regi_2, A, ek_0, ek_1, C^*, cp^*, b^1_{Test,1}, b^1_{Test,2} \samp \hyb_4'$.
\end{claim}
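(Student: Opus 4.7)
The plan is to reduce the claim to the reconstructive property of the extractor $\ext$ (\cref{sec:ext}): exhibiting a distinguisher between $\ext(se^*, a_1^{Can})$ and a uniform string, given the side information $(\regi_1, A, ek_0, C^*, cp^*)$ together with the conditioning events, yields an unbounded predictor for $a_1^{Can}$ with success probability $\geq \extprob$, which will serve as $\mathcal{E}^*_1$.

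First I will bridge from $\hyb_4'$, in which the obfuscated decapsulation key $dk' = \ace.\mathsf{GenDK}(sk, \outtrue)$ is fully punctured, to an intermediate setup in which the program uses a decapsulation key $dk''$ that allows decryption exactly at a chosen stego-ciphertext $x^*$, thereby activating the extractor branch on $x^*$. This is accomplished by a short sequence of sub-hybrids analogous to the $\hyb_4 \to \hyb_5 \to \hyb_9$ transitions of the main proof, relying on $\io$ security, the constrained-decapsulation safety of $\ace$, and the puncture-hiding of $\ace$. These sub-hybrids only affect the obfuscated program's behavior on $x^*$ itself, so both the $\ti^{(2)}$-success and $\ati^{(3,1)}$-failure hypotheses remain valid up to negligible additive error.

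Next I define the reconstructive-extractor distinguisher $D$. On input a candidate string $w$ (either $w = \ext(se^*, a_1^{Can})$ or $w$ uniformly random) together with the side information, $D$ sets $r^{**} = w \oplus r^*_{ct,1}$, forms the payload $pl^* = 0 \| A \| se^* \| r^{**} \| \indtag$, samples $x^* \samp \ace.\mathsf{StegEnc}(ek_0, pl^*, \codeof{\sampinp(st)})$ and then $ak, ch \samp \sampchalfrominp(st, x^*)$, applies the universal quantum circuit to $\regi_1$ on input $ch$ to obtain $ans$, and finally outputs $\Ver(ak, ans)$. When $w = \ext(se^*, a_1^{Can})$, the program's extractor branch on $x^*$ reconstructs $r' = \ext(se^*, a_1^{Can}) \oplus r^{**} = r^*_{ct,1}$ and hence $C' = ct^*_1 \oplus G(r^*_{ct,1}) = C^*$, so the adversary's circuit sees outputs identical to those it would see for a real $\sampchal(st)$ challenge; $D$'s acceptance probability essentially matches $\Pr[\ti^{(2)}(\regi_1) = 1 \mid \cdots]$. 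When $w$ is uniform, $r^{**}$ is uniform and independent of everything else, so the synthesized challenge distribution is identical to $\mathcal{D}^{(2)}_1$ and $D$'s acceptance probability matches $\Pr[\ati^{(3,1)}(\regi_1) = 1 \mid \cdots]$.

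Combining the two hypotheses and the bounds from the hybrid transitions, $D$ has a distinguishing advantage of $\Omega(\extprob)$. Applying the reconstructive property of $\ext$ then produces an unbounded algorithm $\mathcal{E}$ with $\Pr[\mathcal{E}(\regi_1, A, ek_0, C^*, cp^*) = a_1^{Can} \mid \cdots] \geq \extprob$, and we set $\mathcal{E}^*_1 = \mathcal{E}$. The principal obstacle is the quantitative bookkeeping: the additive losses from the sub-hybrid transitions and from the measurement-threshold gap (recall $\ti^{(2)}$ uses threshold $\ptriv + \frac{122\gamma}{128}$ while $\ati^{(3,1)}$ uses $\ptriv + \frac{100\gamma}{128}$) must stay safely below the $\Omega(\extprob)$ distinguishing gap provided by the hypothesis, which is also why $\mathcal{M}$ must be subexponential-time, so that the $\ace$, $\io$, and PRG hardness arguments still apply when $\mathcal{M}$ appears in the conditioning chain.
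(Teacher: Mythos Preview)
Your high-level plan—build an extractor distinguisher and invoke the reconstructive property—is right, and your sub-hybrid sketch (activating the extractor branch via a selectively-unpunctured $dk''$, then matching $C'=C^*$ when $w=\ext(se^*,a_1^{Can})$) is essentially the mechanism the paper also uses inside its argument. The genuine gap is in how you instantiate the distinguisher $D$ itself.

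You make $D$ a \emph{single-shot} test: sample one challenge from $x^*$, apply the encoded circuit once, and output $\Ver(ak,ans)$. You then assert that $D$'s acceptance probability ``essentially matches'' $\Pr[\ti^{(2)}(\regi_1)=1\mid\cdots]$ in one case and $\Pr[\ati^{(3,1)}(\regi_1)=1\mid\cdots]$ in the other. This conflates a one-shot verification with a threshold-implementation outcome, and in the second case it kills the argument. From $\Pr[\ti^{(2)}=1\mid\cdots]\ge 1-\eta$ you do get that the one-shot success probability on $\sampchal(st)$ is at least $\ptriv+\tfrac{122\gamma}{128}-\eta$. But $\Pr[\ati^{(3,1)}=0\mid\cdots]\ge\tfrac14\extprob$ only says that with probability $\tfrac14\extprob$ the state's success probability on $\mathcal{D}^{(2)}_1$ is below roughly $\ptriv+\tfrac{100\gamma}{128}+\epsilon^*$; with the remaining $1-\tfrac14\extprob$ probability that success probability may well be $1$. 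Hence the one-shot success on $\mathcal{D}^{(2)}_1$ can be as large as $1-\tfrac14\extprob\bigl(1-\ptriv-\tfrac{101\gamma}{128}\bigr)$, which \emph{exceeds} the lower bound in the ``real'' case whenever $\ptriv<1$. Your $D$ therefore has no guaranteed advantage, and the reduction to the extractor does not go through.

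The paper resolves this by first invoking the simulated-$\ati$ lemma (\cref{lem:simati}) to express both $\ti^{(2)}$ and $\ati^{(3,1)}$ through the \emph{same} measurement $\simati^{(4)}$, parameterized by an explicit list of $t_1(\lambda)$ challenge samples; the two hypotheses become statements about $\simati^{(4)}$ run on two different sample lists (all from $\sampchal(st)$ versus all from $\mathcal{D}^{(2)}_1$). A hybrid over the $t_1(\lambda)$ list positions then isolates a critical index $i^*$ at which swapping a single sample causes a noticeable jump in $\Pr[\simati^{(4)}=1]$. It is \emph{this} gap—on the many-sample $\simati$ outcome, not on a one-shot verification—that furnishes the extractor distinguisher. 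Your sub-hybrid machinery is then precisely what the paper uses to move the sample at position $i^*$ between the honest distribution and the distribution that plants $\ext(se,a_1^{Can})\oplus r_1^*$, but without the $\simati$ step and the list-hybrid there is no usable distinguishing gap to feed it into.
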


 We now prove this claim. Let $\mathcal{M}$ be an algorithm as in the claim statement.
Define $\simati^{(4)}=\simatiwpnop{\delta^*}{\epsilon^*}{\max\{\mathcal{D}(C^*, cp^*), \mathcal{D}^{(2)}_{1}\}}{\frac{116\gamma}{128}}{\alpha_1(\lambda)}$ and let $t_1(\lambda)$ be the number of samples for $\simati^{(4)}$ as in \cref{lem:simati}. Then, by \cref{lem:simati}, we get
 \begin{itemize}
        \item $\Pr[\simati^{(4)}(s_{1,1}, \dots, s_{1,t_1(\lambda)})(\regi_1) = 1 \big|  \mathcal{M}(\regi_2, C^*, cp^*, ek_1) = 0 \wedge b^1_{Test,1} = 1\wedge b^1_{Test,2} = 1] \geq 1 - \frac{3}{2}\cdot \sqrt{6\delta^* + 8q_1(\lambda)\cdot(\alpha_3 + \alpha_2)}\cdot\frac{2}{\extprob}-\alpha_1-4\delta^* =\textcolor{blue}{\xi_1}$,
        \item $\Pr[\simati^{(4)}(s_{2,1}, \dots, s_{2,t_1(\lambda)})(\regi_1) = 0 \big|  \mathcal{M}(\regi_2, C^*, cp^*, ek_1) = 0 \wedge b^1_{Test,1} = 1\wedge b^1_{Test,2} = 1] \geq \frac{1}{4}\extprob - \alpha_1 - 4\delta^*$.
    \end{itemize} 
where $s_{1,1}, \dots, s_{1,t_1(\lambda)} \samp \sampchal(st)$ and $s_{2,1}, \dots, s_{2,t_1(\lambda)} \samp \mathcal{D}_1^{(1)}$.

Now define the following distributions for $j \in [t_1(\lambda)]$.
\begin{mdframed}
\underline{$\mathcal{D}^{(2)}_{1}$}

{\bf Hardcoded: $st, A, ek_0$}
\begin{enumerate}
    \item Sample an extractor seed as $se \samp \zo^{\ell_3(\lambda)}$.
    \item {Sample $r_1^{**} \samp \zo^{p_4(\lambda)}$.}
    \item {Set $pl_1 = 0 || A || se || r_1^{**} || j$.}
    \item Sample $x^* \samp \ace.\stegenc(ek_0, pl_1, \codeof{\sampinp(st)})$.
    \item Sample $ak, ch \samp  \sampchalfrominp(st, x^*)$.
         \item Output $ak, ch$.
\end{enumerate}
 \end{mdframed}

Now we claim the following.
\begin{claim}
    $\Pr[\simati^{(4)}(s_{2,1}, \dots, s_{2,t_1(\lambda)})(\regi_1) = 0 \big|  \mathcal{M}(\regi_2, C^*, cp^*, ek_1) = 0 \wedge b^1_{Test,1} = 1\wedge b^1_{Test,2} = 1] \geq \frac{1}{4}\extprob - \alpha_1 - 4\delta^* - ((\alpha_2 + 2\alpha_3)\cdot t_1(\lambda) - 2\alpha_3)\cdot\frac{1}{\extprob} = \textcolor{blue}{\xi_2}$
    where $s_{2,j} \samp \mathcal{D}_1^{(2,j)}$ for $j \in [t_1(\lambda)]$
\end{claim}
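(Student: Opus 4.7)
The plan is to proceed via a standard hybrid argument over the index $j$ of the sample being rewritten. First I would define intermediate experiments $H_0, H_1, \dots, H_{t_1(\lambda)}$ where in $H_j$ the samples $s_{2,1}, \dots, s_{2,j}$ are drawn from $\mathcal{D}^{(2,1)}_{1}, \dots, \mathcal{D}^{(2,j)}_{1}$ respectively (each carrying its corresponding index in the payload's $\indtag$-slot), and the remaining $s_{2,j+1}, \dots, s_{2,t_1(\lambda)}$ are still drawn from $\mathcal{D}^{(2)}_{1}$ (carrying the default tag $\indtag = 0^{\ell_4(\lambda)}$). By construction, $H_0$ matches the setting of the previous claim (where the lower bound $\tfrac{1}{4}\extprob - \alpha_1 - 4\delta^*$ was established) and $H_{t_1(\lambda)}$ matches the present statement.

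The next step is to bound the indistinguishability of each consecutive pair $H_{j-1} \approx H_j$. The only change between them is that the $j$-th encryption under $ek_0$ swaps the $\indtag$-slot of its payload from $0^{\ell_4(\lambda)}$ to the integer $j$. Crucially, in $\hyb_4'$ the obfuscated program $\obfd{P}$ only contains the fully-punctured decapsulation key $dk' = \ace.\gendk(sk, \outtrue)$, so no successful decapsulation ever happens inside the experiment, and the adversary's view of these stegged ciphertexts reduces to their external appearance. I would therefore invoke the puncture-hiding security of $\ace$ once per transition (cost $\alpha_2$), routed through two iO swaps of $\obfd{P}$ that replace $ek_0$ by an encapsulation key punctured at the two payloads in question and back (cost $2\alpha_3$ per transition). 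Consecutive hybrids can share the outer iO modification, so the total unconditional gap between $H_0$ and $H_{t_1(\lambda)}$ telescopes to $(\alpha_2 + 2\alpha_3)\cdot t_1(\lambda) - 2\alpha_3$, matching the expression in the stated bound.

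Finally, I would convert this unconditional bound into one on the conditional probability appearing in the claim. The conditioning event $\{\mathcal{M}(\regi_2, C^*, cp^*, ek_1) = 0 \wedge b^1_{Test,1} = 1 \wedge b^1_{Test,2} = 1\}$ is independent of which samples are fed to $\simati^{(4)}$, and the earlier analysis shows it has probability on the order of $\extprob$. Thus, by Bayes' rule, the conditional change inherits a factor $1/\extprob$, and subtracting this multiplicative correction from the previous claim's lower bound yields exactly $\xi_2$.

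The main obstacle will be verifying that each iO-swap genuinely preserves the functionality of $\obfd{P}$ across the puncture-hiding transition. One has to lean on the constrained decapsulation safety of $\ace$ together with the fact that $dk'$ already refuses to decapsulate every ciphertext, so switching $ek_0$ to a key punctured at the two relevant payloads cannot perturb $\obfd{P}$'s output on any input, including on the stegged ciphertexts $x^*$ that are actually passed to $\obfd{P}$ via the universal-circuit invocations inside $\simati^{(4)}$. Getting this functional-equivalence check fully airtight — especially the synchronization of the $ek_0$-punctures across the $t_1(\lambda)$ consecutive hybrids so that the $-2\alpha_3$ sharing is valid — is the delicate part; the rest is straightforward bookkeeping.
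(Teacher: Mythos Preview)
Your high-level plan---a hybrid over the sample index, one ACE step plus two iO steps per transition, telescoping the shared iO swaps, then dividing by the probability of the conditioning event---matches the paper's one-line argument and accounts for the error term $(\alpha_2 + 2\alpha_3)\cdot t_1(\lambda) - 2\alpha_3$ exactly as stated.

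There is one point of confusion worth flagging. You write that the iO swaps of $\obfd{P}$ ``replace $ek_0$ by an encapsulation key punctured at the two payloads,'' but $\obfd{P}$ does not contain $ek_0$ at all; it hardwires the decapsulation key $dk'$. The paper's justification is instead that ``the program ignores the $ind$ part of the decrypted messages,'' i.e., the functional-equivalence check for the iO steps goes through because $P^{(1)}$ parses $t \| A' \| se' \| z \| ind$ but never reads $ind$. Your alternative observation---that in $\hyb_4'$ one has $dk' = \ace.\gendk(sk,\outtrue)$, so decapsulation always returns $\bot$---is also correct and is in fact a stronger reason for functional equivalence in this particular hybrid; the paper's phrasing covers the more general situation (e.g.\ after $dk'$ has been un-punctured). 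Either observation suffices here, but the object being rewired inside $\obfd{P}$ across the hybrid chain is $dk'$, not $ek_0$, and you should adjust your description of the iO steps accordingly.
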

\begin{proof}
This claim follows easily by an $\ace$ key puncturing and iO argument, changing each sample one by one, since the program ignores the $ind$ part of the decrypted messages.
\end{proof}

By applying the hybrid lemma to above, we get that there exists some index $i^*\in\{1,\dots,t_1(\lambda)\}$ and values $\tau_1, \tau_2$ such that
\begin{itemize}
    \item $\Pr[\simati^{(4)}((sl_i)_{i \in [t(\lambda)]})(\regi_1) = 1 \big|  \mathcal{M}(\regi_2, C^*, cp^*, ek_1) = 0 \wedge b^1_{Test,1} = 1\wedge b^1_{Test,2} = 1] \geq  \tau_1$,
        \item $\Pr[\simati^{(4)}((sr_i)_{i \in [t(\lambda)]})(\regi_1) = 0 \big|  \mathcal{M}(\regi_2, C^*, cp^*, ek_1) = 0 \wedge b^1_{Test,1} = 1\wedge b^1_{Test,2} = 1] \geq  \tau_2$
        \item $\tau_2 - \tau_1 \geq \frac{ \xi_2 -  \xi_1}{t_1(\lambda)}$
\end{itemize}
where
\begin{itemize}
\item $\regi_1, \regi_2, A, ek_0, ek_1, C^*, cp^*, b^1_{Test,1}, b^1_{Test,2} \samp \hyb_4''$ and
    \item $sl_i \samp  \mathcal{D}_1^{(2,i)}$ for $i < i^*$ and $sl_i \samp \mathcal{D}(cp^*, C^*)$ otherwise,
     \item $sr_i \samp  \mathcal{D}_1^{(2,i)}$ for $i \leq i^*$ and $sr_i \samp \mathcal{D}(cp^*, C^*)$ otherwise.
\end{itemize}

Then we define the following distribution and claim the following.
\begin{mdframed}
\underline{$\mathcal{D}_{1}^{(1,j)}$}

{\bf Hardcoded: $cp^*, C^*, A, a_1^{Can}, ek, r^*_1$}
\begin{enumerate}
    \item Sample an extractor seed as $se \samp \zo^{\ell_3(\lambda)}$.
\item Set $pl_1 = 0 || A || se || (\ext(se, a_1^{Can}) \oplus r^*_1) || j$ 

\item Set $x^* \samp \ace.\stegenc(ek_0, pl_1)$.

\item Sample $ak,ch \samp \sampchalfrominp(st, x^*)$.
    
    \item Output $ak, ch$.
\end{enumerate}
\end{mdframed}
\begin{claim}
    \item $\Pr[\simati^{(4)}((sl_i)_{i \in [t(\lambda)]})(\regi_1) = 1 \big|  \mathcal{M}(\regi_2, C^*, cp^*, ek_1) = 0 \wedge b^1_{Test,1} = 1\wedge b^1_{Test,2} = 1] \geq  \tau_1  - (2\alpha_2 + \alpha_3 + 2\alpha_4)\cdot \frac{4}{\extprob}$
    where 
    \begin{itemize}
    \item $\regi_1, \regi_2, A, ek_0, ek_1, C^*, cp^*, b^1_{Test,1}, b^1_{Test,2} \samp \hyb_4'$
        \item  $sl_i \samp  \mathcal{D}_1^{(2,i)}$ for $i < i^*$
        \item $sl_i \samp \mathcal{D}_1^{(1,i)}$ if $i = i^*$,
        \item $sl_i \samp \mathcal{D}(cp^*, C^*)$ for $i > i^*$,
    \end{itemize}
\end{claim}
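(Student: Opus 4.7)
The plan is to prove this via a short sequence of sub-hybrids that replace, at the single index $i^*$, the random extra-randomness string $r_1^{**}$ by the pseudorandom string $\ext(se, a_1^{Can}) \oplus r^*_1$ inside the payload $pl_1$. All the other samples remain in their $\mathcal{D}_1^{(2,j)}$ form, so it suffices to bound the effect of altering this single sample. The final additive loss $(2\alpha_2 + \alpha_3 + 2\alpha_4)\cdot \tfrac{4}{\extprob}$ already indicates the steps: two $\ace$ operations, one $\io$ step, and two invocations of \cref{lem:shobased}, with a factor $4/\extprob$ coming from conditioning on the low-probability event $b^1_{Test,1} = 1 \wedge b^1_{Test,2} = 1 \wedge \mathcal{M}(\regi_2, C^*, cp^*, ek_1) = 0$.

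First, I would invoke \cref{lem:shobased} to pass from the $\hyb_4'$-style setup (with $\obfd{M}$ hardcoding $A, a_1, a_2$) to a setup where $\obfd{M}$ only depends on the superspace/subspace data $(B_1, B_2, t, t')$; this strips out the explicit dependence on $a_1^{Can}$ inside $\obfd{M}$. Next, using the puncture-hiding property of $\ace$ together with $\io$ security, I would switch $\obfd{P}$ so that its behavior on the single $\ace$-ciphertext corresponding to the $i^*$-payload is "hardwired", and so that the effective extra-randomness string used in the $t=0$ branch at index $i^*$ can equivalently be regarded as a fresh random string $\ext(se, a_1^{Can}) \oplus r^*_1$ (because $r^*_1$ only shows up in the view through $ct^*_1 = C^* \oplus G(r^*_1)$, which contributes no information beyond the PRG seed). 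Finally I would undo the $\lem:shobased$ step to return to the original $\obfd{M}$ format so that the resulting distribution is exactly $\mathcal{D}_1^{(1,i^*)}$ at index $i^*$.

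The key conceptual step is the middle one: switching the random $r_1^{**}$ to $\ext(se, a_1^{Can}) \oplus r^*_1$. Here I would appeal to the reconstructive quantum-proof extractor (\cref{sec:ext}). From the $\mathcal{M}$-side information we have at most a single bit about $\regi_2$; combined with the fact that $\obfd{M}$ in its $M^1$-form does not hardcode $a_1^{Can}$, the coset representative $a_1^{Can}$ retains sufficient min-entropy given the full view of the experiment. Were the change in $\simati^{(4)}$ probability larger than $\epsilon_\ext$, the reconstructive property would yield an unbounded algorithm predicting $a_1^{Can}$ with probability $\extprob$ from the rest of the view; together with a symmetric reconstructor for $a_2^{Can}$ this would violate the monogamy-of-entanglement bound $2^{-\cmoe \cdot d(\lambda)}$ from \cref{lem:moelast} once $d(\lambda)$ is chosen large enough (this is ultimately how one sets parameters $\delta^*, \epsilon^*, d(\lambda)$).

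The main obstacle is the bookkeeping required to make the reconstructive-extractor reduction go through inside the conditional on $\mathcal{M}(\regi_2, C^*, cp^*, ek_1) = 0 \wedge b^1_{Test,1} = 1 \wedge b^1_{Test,2} = 1$. Concretely, one must show that the quantum side information handed to the extractor distinguisher (namely $\regi_1, \regi_2, A, ek_0, ek_1, C^*, cp^*$ together with the coset-membership program $\obfd{M}$ and the adversary's setup-phase state) still leaves $a_1^{Can}$ sufficiently unpredictable, which is exactly what the $M^1$-form of $\obfd{M}$ and the high-min-entropy assumption on $\sampinp(st)$ are designed to ensure. Once this is handled, standard conditioning-loss arguments yield the stated factor of $4/\extprob$ and the proof is complete.
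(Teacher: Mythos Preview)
You have misread what is being changed at index $i^*$, and as a result you are deploying the wrong tool. Before this claim, $sl_{i^*}$ is drawn from $\mathcal{D}(cp^*, C^*)=\sampchal(st)$, i.e.\ a \emph{fresh} sample $x^*\samp\sampinp(st)$ with no $\ace$ payload at all; after this claim, $sl_{i^*}$ is drawn from $\mathcal{D}_1^{(1,i^*)}$, i.e.\ a steganographic $\ace$ encryption of the specific payload $pl^*=0\|A\|se^*\|(\ext(se^*,a_1^{Can})\oplus r_1^*)\|i^*$. There is no ``random $r_1^{**}$'' in the starting distribution to replace, and the extractor/reconstructive argument does not enter this claim at all --- that argument is the \emph{next} step in the proof of \cref{claim:moeclaimp14}, applied only \emph{after} the present claim has aligned the $i^*$-th sample so that the gap $\mathcal{D}_1^{(1,i^*)}$ vs.\ $\mathcal{D}_1^{(2,i^*)}$ is exactly ``extractor output vs.\ random''.

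The paper's proof of the claim uses only computational indistinguishability steps, and the count $2\alpha_4+2\alpha_2+\alpha_3$ reflects exactly these: (i) invoke \cref{lem:shobased} to revert $\obfd M$ from $\io(M')$ to $\io(M)$ (so that any $v$ passing the membership check satisfies $\canonvec{A}{v}=a_1^{Can}$); (ii) use $\ace$ puncture-hiding to replace $dk'$ by a key punctured at $pl^*$ (legal since no encryption of $pl^*$ is yet in the view); (iii) use $\ace$ \emph{steganographic ciphertext security} to replace the fresh sample $x^*\samp\sampinp(st)$ at index $i^*$ by $\ace.\stegenc(ek_0,pl^*,\codeof{\sampinp(st)})$; (iv) use $\io$ security to restore the unpunctured $dk'$ --- here the point is that $pl^*$ is chosen so that on inputs decrypting to $pl^*$ the program $P^{(1)}$ recovers $r'=r_1^*$ and hence $C'=ct_1^*\oplus G(r_1^*)=C^*$, i.e.\ behaves exactly as in the $pl=\fail$ branch, so the two programs are functionally equivalent; (v) invoke \cref{lem:shobased} again to restore $\obfd M=\io(M')$. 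The $4/\extprob$ factor comes from conditioning on the event $b^1_{Test,1}=b^1_{Test,2}=1\wedge\mathcal{M}(\regi_2,\ldots)=0$. Your proposed use of the reconstructive extractor here is circular: the whole purpose of this claim is to \emph{set up} the extractor distinguisher, not to invoke it.
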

\begin{proof}
We will define a sequence of modifications for $\hyb_4'$.

We define the following modification $\hyb_4^{'(1)}$ of $\hyb_4^{'}$: The challenger samples $\obfd{M}$ as $\obfd{M} \samp \io(M)$ instead of $\obfd{M} \samp \io(M')$. This essentially undoes the change from $\hyb_1$ to $\hyb_2$.

By \cref{lem:shobased}, we have 
$\Pr[\simati^{(4)}((sl_i)_{i \in [t(\lambda)]})(\regi_1) = 1 \big|  \mathcal{M}(\regi_2, C^*, cp^*, ek_1) = 0 \wedge b^1_{Test,1} = 1\wedge b^1_{Test,2} = 1] \geq  \tau_1  - (\alpha_4)\cdot \frac{4}{\extprob}$
where  \begin{itemize}
    \item $\regi_1, \regi_2, A, ek_0, ek_1, C^*, cp^*, b^1_{Test,1}, b^1_{Test,2} \samp \hyb_4^{'(1)}$
        \item  $sl_i \samp  \mathcal{D}_1^{(2,i)}$ for $i < i^*$
        \item $sl_i \samp \mathcal{D}(cp^*, C^*)$ for $i \geq i^*$,
    \end{itemize}

We define $\hyb_4^{'(2)}$ so that when the challenger is preparing $\obfd{P}$, it now embeds $dk' = \ace.\mathsf{GenDK}(sk, T_{pl^*})$ instead of $dk' = \ace.\mathsf{GenDK}(sk, \outfalse)$ where \begin{itemize}
    \item $T_{pl^*}(m)$ is the circuit that outputs $\outtrue$ if $m = pl^*$,
    \item $pl^* = \ace.\enc(ek_0, 0 || A || se^* || (\ext(se^*, a_1^{Can}) \oplus r^*_1) || i^*)$ where $se^*, r_1^*$ are the values sampled during sampling $sl_{i^*} \samp \mathcal{D}_1^{(1,i^*)}$
\end{itemize}  
Since the experiments do not use an encryption of $pl^*$, we can apply the puncture-hiding security of $\ace$ to get
$\Pr[\simati^{(4)}((sl_i)_{i \in [t(\lambda)]})(\regi_1) = 1 \big|  \mathcal{M}(\regi_2, C^*, cp^*, ek_1) = 0 \wedge b^1_{Test,1} = 1\wedge b^1_{Test,2} = 1] \geq  \tau_1  - (\alpha_2+\alpha_4)\cdot \frac{4}{\extprob}$
where  \begin{itemize}
    \item $\regi_1, \regi_2, A, ek_0, ek_1, C^*, cp^*, b^1_{Test,1}, b^1_{Test,2} \samp \hyb_4^{'(2)}$
        \item  $sl_i \samp  \mathcal{D}_1^{(2,i)}$ for $i < i^*$
        \item $sl_i \samp \mathcal{D}(cp^*, C^*)$ for $i \geq i^*$,
    \end{itemize}

Now, we claim $\Pr[\simati^{(4)}((sl_i)_{i \in [t(\lambda)]})(\regi_1) = 1 \big|  \mathcal{M}(\regi_2, C^*, cp^*, ek_1) = 0 \wedge b^1_{Test,1} = 1\wedge b^1_{Test,2} = 1] \geq  \tau_1  - (2\alpha_2+\alpha_4)\cdot \frac{4}{\extprob}$
where  \begin{itemize}
    \item $\regi_1, \regi_2, A, ek_0, ek_1, C^*, cp^*, b^1_{Test,1}, b^1_{Test,2} \samp \hyb_4^{'(2)}$
        \item  $sl_i \samp  \mathcal{D}_1^{(2,i)}$ for $i < i^*$
        \item $sl_i \samp \mathcal{D}_1^{(1,i)}$ if $i = i^*$,
        \item $sl_i \samp \mathcal{D}(cp^*, C^*)$ for $i > i^*$,
    \end{itemize}
   This follows by steganographic ciphertext property of $\ace$ since $dk'$ is punctured at $pl^*$.

   We define $\hyb_4^{'(3)}$ so that when the challenger is preparing $\obfd{P}$, it now embeds $dk' = \ace.\mathsf{GenDK}(sk, \outfalse)$.
    We claim $\Pr[\simati^{(4)}((sl_i)_{i \in [t(\lambda)]})(\regi_1) = 1 \big|  \mathcal{M}(\regi_2, C^*, cp^*, ek_1) = 0 \wedge b^1_{Test,1} = 1\wedge b^1_{Test,2} = 1] \geq  \tau_1  - (\alpha_3+\alpha_4+2\alpha_2)\cdot \frac{4}{\extprob}$
where  \begin{itemize}
    \item $\regi_1, \regi_2, A, ek_0, ek_1, C^*, cp^*, b^1_{Test,1}, b^1_{Test,2} \samp \hyb_4^{'(3)}$
        \item  $sl_i \samp  \mathcal{D}_1^{(2,i)}$ for $i < i^*$
        \item $sl_i \samp \mathcal{D}_1^{(1,i)}$ if $i = i^*$,
        \item $sl_i \samp \mathcal{D}(cp^*, C^*)$ for $i > i^*$,
    \end{itemize}
    This follows by the security of $\io$ since by construction of $pl^*$, the programs have the same functionality in both cases.

    Finally, we claim  $\Pr[\simati^{(4)}((sl_i)_{i \in [t(\lambda)]})(\regi_1) = 1 \big|  \mathcal{M}(\regi_2, C^*, cp^*, ek_1) = 0 \wedge b^1_{Test,1} = 1\wedge b^1_{Test,2} = 1] \geq  \tau_1  - (\alpha_3+2\alpha_4+2\alpha_2)\cdot \frac{4}{\extprob}$
where  \begin{itemize}
    \item $\regi_1, \regi_2, A, ek_0, ek_1, C^*, cp^*, b^1_{Test,1}, b^1_{Test,2} \samp \hyb_4^{'(3)}$
        \item  $sl_i \samp  \mathcal{D}_1^{(2,i)}$ for $i < i^*$
        \item $sl_i \samp \mathcal{D}_1^{(1,i)}$ if $i = i^*$,
        \item $sl_i \samp \mathcal{D}(cp^*, C^*)$ for $i > i^*$,
    \end{itemize}
    This follows directly by \cref{lem:shobased}.
\end{proof}
The claim above shows that we can distinguish between $\mathcal{D}_1^{(1,i^*)}$ and $ \mathcal{D}_1^{(2,i^*)}$ with advantage $\geq  \frac{ \xi_2 -  \xi_1}{t_1(\lambda)} - \tau_1  - (2\alpha_2 + \alpha_3 + 2\alpha_4)\cdot \frac{4}{\extprob}$. Since this is larger than $\epsilon_{\ext}$ by our choice of parameters, we get by the reconstructive property of $\ext$ that there exists an algorithm $\mathcal{E}_1^*$ such that \begin{equation*}
        \Pr[\mathcal{E}^*_1(\regi_1, A, ek_0, C^*, cp^*) = a_1^{Can} \big| b^1_{Test,1}=1\wedge b^1_{Test,2}=1 \wedge \mathcal{M}(\regi_2) = 0] \geq \extprob
    \end{equation*}
     where $\regi_1, \regi_2, A, ek_0, ek_1, C^*, cp^*, b^1_{Test,1}, b^1_{Test,2} \samp \hyb_4'$.

     This completes the proof.

\section{Steganographic Asymmetrically Constrained Encapsulation}
In this section, we show how to construct an \emph{asymmetrically constrained encapsulation} (ACE) scheme (\cite{STOC:CHJV15}) that satisfies an additional steganographic security property: $\ace.\stegenc(ek, m, \mathcal{D})$ produces an encapsulation that is indistinguishable from a fresh sample from the distribution $\mathcal{D}$, to any adversary that has keys punctured at $m$.

\subsection{Definition}
We first recall the definition of ACE (\cite{STOC:CHJV15}).
The definition we write below is mostly verbatim from \cite{ccakan2025copy}.
\begin{definition}[Asymmetrically Constrained Encapsulation]\label{def:ace}
    Let $\mathcal{M} = \{0,1\}^\acemeslen$ denote the message space, where $\acemeslen = \poly(\lambda)$. Let $\acectcount$ and $\acecirclen$ be parameters that are polynomials in $\lambda$. We define an admissible puncturing circuit to be a circuit of size at most $\acecirclen$ and we define an admissible message to be a message in $\mathcal{M}$. Unless otherwise specified, when we consider a puncturing circuit we will always implicitly consider admissible circuits and we will always consider admissible messages.
    
    An \textbf{asymmetrically constrained encapsulation scheme} $\ace$ is parametrized by $\acecirclen, \acectcount, \acemeslen$ and consists of the following \textbf{efficient} algorithms.

\begin{itemize}
    \item $\Setup(1^\lambda)$: Takes as input the security parameter and outputs a secret key $sk$.
    \item $\genek(sk, C)$: Takes as input the secret key and a (puncturing) circuit $C$, and outputs a punctured encapsulation key $ek$.
    \item $\gendk(sk, C)$: Takes as input the secret key and a circuit $C$, and outputs a punctured decapsulation key $dk$.
    \item $\mathsf{Enc}(ek, m)$: Takes as input an encapsulation key $ek$ and a message $m$, outputs a ciphertext $ct$ or the symbol $\bot$.
    \item $\mathsf{Dec}(dk, m)$: Takes as input a decapsulation key $dk$ and a ciphertext $ct$, outputs a message $m$ or the symbol $\bot$.
\end{itemize}    
Note that all of the algorithms, except $\mathsf{Setup}$, are deterministic. Further, the key and ciphertext sizes and the runtimes of the scheme will be a polynomial $\acemeslen, \acecirclen, \acectcount$.

We require the following correctness guarantees and puncture-hiding security (defined later) and pseudorandom ciphertext security (defined later).
\paragraph{Correctness of Decapsulation:} For all (admissible) circuits $C, C'$ and messages $m$ such that $C(m) = C'(m) = \outfalse$,
\begin{equation*}
    \Pr[\mathsf{Dec}(\gendk(sk, C), \mathsf{Enc}(\genek(sk, C'), m)) = m : sk \samp \mathsf{Setup}(1^\lambda)] = 1
\end{equation*}

\paragraph{Equivalence of Constrained Encapsulation:} For all circuits $C$ and messages $m$ such that $C(m) =\outfalse$,
\begin{equation*}
    \Pr[ \mathsf{Enc}(\genek(sk, C), m) = \mathsf{Enc}(\genek(sk, \outfalse), m) : sk \samp \mathsf{Setup}(1^\lambda)] = 1
\end{equation*}
Here, the notation $\outfalse$ is overloaded to also denote the circuit that outputs $\outfalse$ everywhere.

\paragraph{Safety of Constrained Decapsulation:} For all circuits $C$ and strings $str$,
\begin{equation*}
    \Pr[ \mathsf{Dec}(dk, str) = \perp \vee~C(\mathsf{Dec}(dk, str))= \outfalse :  \begin{array}{c}
         sk \samp \mathsf{Setup}(1^\lambda)  \\
         dk = \gendk(sk, C)
    \end{array}] = 1
\end{equation*}

\paragraph{Equivalence of Constrained Decapsulation:} For all circuits $C$ and strings $str$,
\begin{equation*}
    \Pr[m_1 = m_2 \vee~ C(m_1) = \outtrue : \begin{array}{c}
         sk \samp \mathsf{Setup}(1^\lambda)  \\
         dk_2 = \gendk(sk, \outfalse) \\
         dk_1 = \gendk(sk, C)\\
         m_2 = \mathsf{Dec}(dk_2, str)\\
         m_1 = \mathsf{Dec}(dk_1, str)\\
    \end{array}] = 1
\end{equation*}
\paragraph{Unique Encapsulations:} There exists a (deterministic) mapping $\mathcal{F}$ from the secret keys and messages to ciphertexts such that 
\begin{itemize}
    \item For any $str \neq \mathcal{F}(sk, m)$, $\Pr[\mathsf{Dec}(\gendk(sk, \outfalse), str) = m] = 0$.
    \item $\Pr[\mathsf{Dec}(\gendk(sk, \outfalse), \mathcal{F}(sk, m)) = m] = 1$.
\end{itemize}
\end{definition}

Now we recall security of constrained decapsulation, also known as puncture-hiding security.

\begin{definition}[Security of Constrained Decapsulation (Puncture-Hiding Security)]
    Consider the following game between a challenger and an adversary.

    \paragraph{\underline{$\acepunchidinggame$}}
\begin{enumerate}
\item The adversary $\adve$ outputs three (admissible) circuits $C, C_0, C_1$.
\item The challenger samples $sk \samp \ace.\Setup(1^\lambda)$ and computes $ek = \ace.\genek(sk, C)$, $dk_0 = \ace.\gendk(sk, C_0)$ and $dk_1 = \ace.\gendk(sk, C_1)$.
\item The challenger samples $b \samp \zo$ and sends $ek, dk_b$ to the adversary $\adve$.
\item The adversary outputs a bit $b'$.
\item The challenger outputs  $1$ if $b=b'$, and outputs $0$ otherwise.
\end{enumerate}
    
    An asymmetrically constrained encapsulation scheme $\ace$ is said to satisfy \emph{security of constrained decapsulation} if for any QPT adversary $\adve$,
    \begin{equation*}
        \Pr[\acepunchidinggame = 1] \leq \frac{1}{2}+\negl(\lambda).
    \end{equation*}
\end{definition}

Now we recall $\acectcount$-pseudorandom ciphertext security, introduced by \cite{ccakan2025copy} to generalize $\acectcount$-ciphertext indistinguishability security requirement originally considered by \cite{STOC:CHJV15}.

\begin{definition}[Pseudorandom Ciphertext Security]
    Consider the following game between a challenger and an adversary.

    \paragraph{\underline{$\aceprct$}}
\begin{enumerate}
\item The adversary $\adve$ outputs two (admissible) circuits $C_1, C_2$ and $\acectcount$ messages $(m_1, \dots, m_{\acectcount})$ such that $C_1(m_i) = C_2(m_i) = \outtrue$ for all $i \in [\acectcount]$.
\item The challenger samples $sk \samp \ace.\Setup(1^\lambda)$ and computes $ek = \ace.\genek(sk, \outfalse)$ $ek' = \ace.\genek(sk, C_1)$, $dk' = \ace.\gendk(sk, C_2)$.
\item The challenger computes $ct_i = \ace.\mathsf{Enc}(ek, m^b_i)$ for $i \in [\acectcount]$.
\item The challenger samples strings $r_i$ for all $i \in [\acectcount]$, each the same length as  the ciphertext length of $\ace$ and each is sampled uniformly at random.
\item The challenger samples $b \samp \zo$. If $b=0$, it submits $ct_1, \dots, ct_\acectcount$ to the adversary. If $b=1$, it submits $r_1, \dots, r_\acectcount$ to the adversary.
\item The adversary outputs a bit $b'$.
\item The challenger outputs  $1$ if $b=b'$, and outputs $0$ otherwise.
\end{enumerate}
    
    An asymmetrically constrained encapsulation scheme $\ace$ is said to satisfy \emph{$\acectcount$-pseudorandom ciphertext security} if for any QPT adversary $\adve$,
    \begin{equation*}
        \Pr[\aceprct = 1] \leq \frac{1}{2}+\negl(\lambda).
    \end{equation*}
\end{definition}

Finally, we introduce our new security notion, \emph{steganographic ciphertext security}.

\begin{definition}[Steganographic Asymmetrically Constrained Encapsulation]
Let $\acectcount$, $\acecirclen$, $\acemeslen$ (as in \cref{def:ace}), $\acedistsize$, be parameters that are polynomials in $\lambda$. We define an admissible distribution $\mathcal{D}$\footnote{We will overload the notation $\mathcal{D}$ to denote both the distribution itself and the sampling circuit for this distribution.}  to be a distribution with a sampler circuit of size at most $\acedistsize$ and with min-entropy at least $\aceminent \geq 4\cdot \acemeslen$. Unless otherwise specified, we will implicitly mean an admissible distribution when we are considering a distribution for steganography.

A \textbf{steganographic asymmetrically constrained encapsulation scheme} $\ace$ is an ACE scheme parametrized by $\acedistsize$, $\acectcount$, $\acecirclen$, $\acemeslen$, $\aceminent$ with the following additional algorithms and guarantees.

\begin{itemize}
    \item $\stegenc(ek, m, \codeof{\mathcal{D}})$: Takes as input an encapsulation key $ek$, a message $m$ and a sampler circuit $\mathcal{D}$, and outputs a ciphertext.
    \item $\stegdec(dk, ct)$: An \textbf{efficient} algorithm that takes as input a decapsulation key $dk$ and a ciphertext $ct$, outputs a message $m$ or $\bot$.
\end{itemize}

We require the usual correctness requirements of an ACE scheme. Additionally, we require the following.

\paragraph{Correctness of Steganographic Encryption:} We require that for all (admissible) distributions $\mathcal{D}$ and all messages $m$, $\ace.\stegenc(ek, m, \mathcal{D})$ runs in time $\acestegbound$ (which is a function that depends on $\acemeslen, |\supp(\mathsf{D})|$ and the desired security level) and satisfies the following.
\begin{equation*}
    \Pr[\ace.\stegdec(dk, \ace.\stegenc(ek, m, \codeof{\mathcal{D}})) = m : \begin{array}{c}
           sk \samp \ace.\Setup(1^\lambda)\\
           ek = \ace.\genek(sk, \outfalse) \\
         dk = \ace.\gendk(sk, \outfalse)
    \end{array}] \geq 1 - \negl(\lambda).
\end{equation*}

\paragraph{Security:} We require that $\ace$ satisfies puncture-hiding security and $\acectcount$-pseudorandom ciphertext security against any quantum adversary that runs in time $\poly(\lambda)\cdot\acestegbound$.

\paragraph{Steganographic Ciphertext Security:} We require that $\ace$ satisfies steganographic ciphertext security (defined later).
\end{definition}

\begin{definition}[Steganographic Ciphertext Security]
  Consider the following game between a challenger and an adversary.

    \paragraph{\underline{$\acesteg$}}
\begin{enumerate}
\item The adversary $\adve$ outputs two admissible circuits $C_1, C_2$ and $\acectcount$ messages $(m_1, \dots, m_{\acectcount})$ such that $C_1(m_i) = C_2(m_i) = \outtrue$ for all $i \in [\acectcount]$.
\item The adversary outputs an (admissible) sampler $\mathcal{D}$.
\item The challenger samples $sk \samp \ace.\Setup(1^\lambda)$ and computes $ek = \ace.\genek(sk, \outfalse)$ $ek' = \ace.\genek(sk, C_1)$, $dk' = \ace.\gendk(sk, C_2)$.
\item The challenger computes $ct_i = \ace.\stegenc(ek, m_i, \codeof{\mathcal{D}})$ for $i \in [\acectcount]$.
\item The challenger samples strings $samp_i \samp \mathcal{D}$ for all $i \in [\acectcount]$.
\item The challenger samples $b \samp \zo$. If $b=0$, it submits $ct_1, \dots, ct_\acectcount$ to the adversary. If $b=1$, it submits $samp_1, \dots, samp_\acectcount$ to the adversary.
\item The adversary outputs a bit $b'$.
\item The challenger outputs  $1$ if $b=b'$, and outputs $0$ otherwise.
\end{enumerate}
    
     An asymmetrically constrained encapsulation scheme $\ace$ is said to satisfy \emph{steganographic ciphertext security} if for any QPT adversary $\adve$,
    \begin{equation*}
        \Pr[\acesteg = 1] \leq \frac{1}{2}+\negl(\lambda).
    \end{equation*}
\end{definition}

\begin{theorem}
Assuming the existence of subexponentially secure indistinguishability obfuscation and one-way functions, for any polynomials $\acemeslen, \acecirclen, \acedistsize, \acectcount$, there exits a subexponentially secure steganographic asymmetrically constrainable encryption scheme.
\end{theorem}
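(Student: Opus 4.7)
The plan is to build on the non-steganographic ACE scheme of \cite{ccakan2025copy} (call it $\ace_0$), which already satisfies puncture-hiding and $\acectcount$-pseudorandom-ciphertext security, and to augment it with iO-obfuscated $\stegenc$ and $\stegdec$ routines using a (subexponentially secure) puncturable PRF $F$. In the augmented scheme, $\ace.\Setup$ additionally samples a PRF key $K$, and $\ace.\genek(sk, C)$ and $\ace.\gendk(sk, C)$ output iO-obfuscations of programs that wrap both the standard $\ace_0$ branch (punctured by $C$) and the new steganographic branch. To encode $m$ into $\codeof{\mathcal{D}}$, the obfuscated $\stegenc$ computes $ct_0 = \ace_0.\enc(ek_0, m)$, then repeatedly samples $x \samp \mathcal{D}$ (with PRF-derived randomness) and outputs the first $x$ such that $F(K, x) = ct_0 \| 0^\lambda$; the obfuscated $\stegdec$ computes $F(K, x)$, rejects if the low-order $\lambda$ bits are nonzero or if the high-order bits do not decapsulate to an unpunctured message, and otherwise returns the decapsulated message.

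Correctness of $\stegenc$ follows by setting $T = 2^{|ct_0| + \lambda} \cdot \poly(\lambda)$; with $\mathcal{D}$'s min-entropy at least $4\acemeslen$ and by PRF pseudorandomness, rejection sampling finds a match within this bound with overwhelming probability, yielding $\acestegbound = T \cdot \poly(\lambda)$. Correctness of decapsulation then follows from that of $\ace_0$ together with the padding check. The inherited puncture-hiding and $\acectcount$-PRC security of $\ace$ reduce to those of $\ace_0$ via iO indistinguishability, since the obfuscated programs' non-steganographic branches transparently invoke $\ace_0$ with matching puncturing.

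The main new content is steganographic ciphertext security, which I would prove by a hybrid argument. Hybrid $0$ is the real game. In Hybrid $1$, I would modify the obfuscated $\stegdec$ inside $dk'$ (via iO) to hardcode the response ``$\bot$'' on the challenge inputs $ct_i$; this is functionality-preserving since in the real scheme $\stegdec(dk', ct_i)$ already outputs $\bot$ (the decapsulated message is $m_i$, which is punctured by $C_2$). In Hybrid $2$, using $\acectcount$-PRC security of $\ace_0$, I would replace each internal $ct_0^{(i)} = \ace_0.\enc(ek_0, m_i)$ in the rejection-sampling target with a uniformly random string $u_i$. In Hybrid $3$, I would invoke PRF puncturing at the challenge ciphertexts $ct_i$ (hardcoded via iO) to effectively treat $F(K, ct_i)$ as independently uniform. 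At this point, a direct combinatorial calculation shows that, averaging over uniform $u_i$ and the random behavior of $F$ at $ct_i$, the marginal distribution of each $ct_i$ is exactly $\mathcal{D}$ up to the negligible failure probability of rejection sampling, matching the random-sample branch of the game.

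The main obstacle will be the Hybrid $2 \to$ Hybrid $3$ transition: the $\stegenc$ program evaluates $F(K, x)$ on many points during rejection sampling, not just on $ct_i$, and we can only efficiently puncture $F$ at polynomially many points. The key observation is that the adversary's view depends only on the output $ct_i$, not on the intermediate PRF evaluations during rejection sampling, so puncturing at the $\acectcount$ output points suffices. Carrying this out cleanly requires a nested iO hybrid that first freezes the randomness used by the challenger's rejection sampling to determine $ct_i$ in advance, then punctures $F$ at each $ct_i$, and finally invokes the subexponential security of the punctured PRF to replace the punctured values with uniform strings, all while relying on subexponential iO to preserve indistinguishability since $\acestegbound$ is superpolynomial.
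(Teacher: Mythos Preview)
Your construction and proof take a genuinely different route from the paper's, and the step you yourself flag as the main obstacle is where the argument actually breaks.

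The paper does \emph{not} layer a secret-keyed PRF on top of $\ace_0$. Instead, $\ace.\Setup$ samples a \emph{public} extractor seed $seed$; $\stegenc$ computes the inner ciphertext $ict = \ace.\enc(ek,m)$ and rejection-samples $s'\samp\mathcal{D}$ until $\ext(seed,s')=ict$; and $\stegdec(dk,ct)$ just outputs $\ace.\dec(dk,\ext(seed,ct))$. There is no $0^\lambda$ padding and no extra iO around the steganographic branch. The security proof hinges on a ``Reverse Resampling Lemma'' proved in the paper: for any $f$, if one samples $s\samp\mathcal{D}$, sets $y=f(s)$, and then rejection-samples $s'\samp\mathcal{D}$ until $f(s')=y$, the output is distributed exactly as $\mathcal{D}$ (up to truncation). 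The hybrid argument first uses the inner ACE's pseudorandom-ciphertext property to replace $ict$ by a uniform string $r$, then uses extractor security (which holds even though the adversary knows $seed$) to replace $r$ by $\ext(seed,s^*)$ for a fresh $s^*\samp\mathcal{D}$, at which point the reverse-resampling lemma applies directly with $f=\ext(seed,\cdot)$.

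Your PRF-based route has no analogue of that middle step, and your Hybrid~3 does not supply one. After your Hybrid~2, the rejection target is $u_i\|0^\lambda$ with $u_i$ uniform; for any fixed $K$, the output $ct_i$ is then distributed as $\mathcal{D}$ \emph{conditioned on the $K$-dependent set} $\{x:F(K,x)\text{ has high bits }u_i\text{ and low bits }0^\lambda\}$, which is not $\mathcal{D}$. Puncturing $F$ at the single point $ct_i$ and replacing $F(K,ct_i)$ by a fresh random value does nothing to the distribution of $ct_i$ itself (that was already fixed by the true $F$), and it does not decorrelate $ct_i$ from the rest of $F$, which the adversary still holds inside $ek',dk'$. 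So your ``direct combinatorial calculation'' that the marginal of $ct_i$ is exactly $\mathcal{D}$ is simply false as stated, and even if the marginal were right you would still owe an argument about the joint with the punctured key. There is also the circularity you mention but do not resolve: to iO-puncture $K$ at $ct_i$ inside $ek',dk'$ you must know $ct_i$, but $ct_i$ is determined by evaluating $F(K,\cdot)$ during the challenger's (super-polynomially long) rejection loop, and freezing the sampling randomness does not break this dependence. The paper sidesteps all of this precisely by using a public-seed extractor rather than a secret-keyed PRF, so that no puncturing of the hashing function is needed and the reverse-resampling lemma applies verbatim for every fixed $seed$.
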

\begin{proof}
    We give our construction in \cref{sec:acecons}. The proof of steganographic ciphertext security is given in \cref{sec:aceproof}. Steganographic correctness is immediate. All the other correctness and security notions follow from \cite{ccakan2025copy}. All the assumed primitives can be instantiated assuming existence of subexponentially secure indistinguishability obfuscation and one-way functions for any subexponential function, as explained in \cref{sec:acecons}, \cref{sec:aceproof} and \cite{ccakan2025copy}.
\end{proof}

\subsection{Construction}\label{sec:acecons}
In this section, we give our construction for a steganographic ACE scheme. Our construction is similar to the construction \cite{STOC:CHJV15}, except for two key differences: (i) addition of the steganographic encryption/decryption algorithms (which naturally also requires a novel security proof), (ii) choosing the underlying primitives and their parameters in a delicate way that is required for our new security notions.

\paragraph{\underline{$\ace.\Setup(1^\lambda)$}}
\begin{enumerate}
\item Sample $K_1 \samp \prf_1.\keygen(1^\lambda)$ and $K_2 \samp \prf_2.\keygen(1^\lambda)$.
\item Sample $seed \samp \zo^{\aceselen}$.
\item Output $sk = (K_1, K_2, seed)$.
\end{enumerate}

\paragraph{\underline{$\ace.\genek(sk, C)$}}
\begin{enumerate}
\item Parse $(K_1, K_2, seed) = sk$.
\item Sample $\obfd{P}\samp \io(P)$ where $P$ is the following program.
\begin{mdframed}
    {\bf \underline{$P(m)$}}
    
     {\bf Hardcoded: $K_1, K_2, C$}
    \begin{enumerate}[label=\arabic*.]
        \item Check if $C(m) = \outtrue$. If so, output $\bot$ and terminate.
        \item Compute $\alpha = \prf_1.\ceval(K_1, m)$.
        \item Compute $\beta = \prf_2.\ceval(K_2, \alpha) \oplus m$.
        \item Output $\alpha || \beta$.
    \end{enumerate}
    \end{mdframed}
    \item Output $(\obfd{P}, seed)$.
\end{enumerate}

\paragraph{\underline{$\ace.\gendk(sk, C)$}}
\begin{enumerate}
\item Parse $(K_1, K_2, seed) = sk$.
\item Sample $\obfd{P}\samp \io(P)$ where $P$ is the following program.
\begin{mdframed}
    {\bf \underline{$P(ct)$}}
    
     {\bf Hardcoded: $K_1, K_2, C$}
    \begin{enumerate}[label=\arabic*.]
        \item Check if $C(m) = \outtrue$. If so, output $\bot$ and terminate.
        \item Parse $\alpha || \beta = ct$.
        \item Compute $m = \prf_2.\ceval(K_2, \alpha)\oplus \beta$.
        \item Check if $\alpha = \prf_1.\ceval(K_1, m)$. If not, output $\bot$ and terminate.
        \item Output $m$.
    \end{enumerate}
    \end{mdframed}
    \item Output $(\obfd{P}, seed)$.
\end{enumerate}

\paragraph{\underline{$\ace.\enc(ek, m)$}}
\begin{enumerate}
\item Parse $(\obfd{P}, seed) = ek$.
\item Output $\obfd{P}(m)$.
\end{enumerate}

\paragraph{\underline{$\ace.\dec(dk, ct)$}}
\begin{enumerate}
\item Parse $(\obfd{P}, seed) = dk$.
\item Output $\obfd{P}(ct)$.
\end{enumerate}

\paragraph{\underline{$\ace.\stegenc(ek, m)$}}
\begin{enumerate}
\item Parse $(\obfd{P}, seed) = ek$.
\item Set $ict = \ace.\enc(ek, m)$.
\item Set $cnt = 1$.
\item Repeat the following as long as $cnt \leq \ell$: Sample $s' \samp \mathcal{D}$. If $\ext(seed, s') = ict$, output $s'$ and terminate, otherwise increase $cnt$ by one and continue.
\item Output $\bot$ if not already terminated.
\end{enumerate}

\paragraph{\underline{$\ace.\stegdec(dk, ct)$}}
\begin{enumerate}
\item Parse $(\obfd{P}, seed) = dk$.
\item Output $\obfd{P}(\ext(seed, ct)).$
\end{enumerate}

\subsection{Proof of Steganographic Ciphertext Security}\label{sec:aceproof}
We first start with two technical lemmata.
\begin{lemma}[Infinite Reverse Resampling Lemma]\label{lem:infrev}
    Let $\mathcal{D}$ be any distribution, $S$ be any set and $f:\supp(\mathcal{D})\to S$ be any function. Consider the distribution $\mathcal{D}'$ defined as follows.
    \begin{mdframed}
    {\bf \underline{$\mathcal{D}'$}}
    \begin{enumerate}[label=\arabic*.]
        \item Sample $s \samp \mathcal{D}$.
        \item Set $y = f(s)$.
        \item Repeat the following until success: Sample $s' \samp \mathcal{D}$. If $f(s') = y$, output $s'$ and terminate, otherwise continue.
    \end{enumerate}
    \end{mdframed}
    Then, $\mathcal{D} \equiv \mathcal{D}'$.
\end{lemma}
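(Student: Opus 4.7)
The plan is to show that for every $s_0 \in \supp(\mathcal{D})$, the probability that $\mathcal{D}'$ outputs $s_0$ equals $\Pr_{\mathcal{D}}[s_0]$. I will compute $\Pr[\mathcal{D}' = s_0]$ directly by conditioning on the intermediate value $y = f(s)$ produced in step~2.

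First I will fix some $s_0 \in \supp(\mathcal{D})$ and write $p := \Pr_{s \sim \mathcal{D}}[s = s_0]$ and $q_y := \Pr_{s \sim \mathcal{D}}[f(s) = y]$ for each $y \in S$. The value $y$ output by step~2 is distributed as $f(\mathcal{D})$, so it takes any specific value $y \in S$ with probability $q_y$. Note that when $y \neq f(s_0)$ the probability that the loop in step~3 outputs $s_0$ is zero, so only $y = f(s_0)$ contributes.

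Next I will analyze the resampling loop (step~3) for a fixed $y$. Each iteration draws $s' \sim \mathcal{D}$ independently and halts iff $f(s') = y$. Since $y$ was itself produced from step~1 as $y = f(s)$ with $s \in \supp(\mathcal{D})$, we have $q_y > 0$ almost surely over the choice of $y$, so the loop terminates in finite time with probability $1$ and the output is distributed as $\mathcal{D}$ conditioned on $\{f = y\}$. In particular, conditioned on $y = f(s_0)$, the loop outputs $s_0$ with probability $p / q_{f(s_0)}$ (for $s_0 \in \supp(\mathcal{D})$ this ratio is well-defined since $q_{f(s_0)} \geq p > 0$).

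Putting this together,
\begin{equation*}
\Pr[\mathcal{D}' = s_0] = \sum_{y \in S} q_y \cdot \Pr[\text{loop outputs } s_0 \mid y] = q_{f(s_0)} \cdot \frac{p}{q_{f(s_0)}} = p = \Pr_{\mathcal{D}}[s_0],
\end{equation*}
which gives $\mathcal{D}' \equiv \mathcal{D}$ as desired. The only point worth checking carefully is termination of the unbounded loop, which I handle above by noting that $y$ is always in the image of $f$ restricted to $\supp(\mathcal{D})$, ensuring $q_y > 0$. No other step should present a difficulty.
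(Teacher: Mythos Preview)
Your proposal is correct and follows essentially the same approach as the paper: fix an element of $\supp(\mathcal{D})$ and compute its probability under $\mathcal{D}'$ by expanding over the intermediate value, using that the rejection loop yields $\mathcal{D}$ conditioned on $\{f=y\}$. The paper conditions on the initial sample $s$ and writes out the geometric series explicitly before regrouping by $y=f(s)$, whereas you condition on $y$ directly and invoke the standard rejection-sampling fact; this is the same argument organized slightly more cleanly.
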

\begin{proof}
    Fix any $x \in \supp(\mathcal{D})$ and we will show that $\Pr_{\mathcal{D}'}[x] = \Pr_{\mathcal{D}}[x]$, which implies $\mathcal{D} \equiv \mathcal{D}'$.

    We write \begin{align*}
        \Pr_{\mathcal{D}'}[x] &= \sum_{s \in \supp(\mathcal{D})} \Pr_{\mathcal{D}}[s]\cdot\lim_{k\to\infty}\left(\sum^k_{i = 0}(\Pr_{s_1\samp \mathcal{D}}[f(s_1) \neq f(s)])^{i}\Pr_{s_2\samp D}[s_2 = x \wedge f(x) = f(s)]\right) \\ &=\sum_{s \in \supp(\mathcal{D})} \Pr_{\mathcal{D}}[s]\cdot\Pr_{s_2\samp D}[s_2 = x \wedge f(x) = f(s)]\cdot\frac{1}{1-\Pr_{s_1\samp \mathcal{D}}[f(s_1) \neq f(s)]}
         \\ &=\sum_{s \in \supp(\mathcal{D})} \Pr_{\mathcal{D}}[s]\cdot\Pr_{s_2\samp D}[s_2 = x \big| f(s_2) = f(s)]
         \\ &=\sum_{y \in f(\supp(\mathcal{D}))}\sum_{\substack{s \in \supp(\mathcal{D}):\\ f(s) = y}} \Pr_{\mathcal{D}}[s]\cdot\Pr_{s_2\samp D}[s_2 = x \big| f(s_2) = f(s)]
         \\ &=\sum_{y \in f(\supp(\mathcal{D}))}\sum_{\substack{s \in \supp(\mathcal{D}):\\ f(s) = y}} \Pr_{\mathcal{D}}[s]\cdot\Pr_{s_2\samp D}[s_2 = x \big| f(s_2) = y]
         \\ &=\sum_{y \in f(\supp(\mathcal{D}))} \Pr_{s_2\samp D}[s_2 = x \big| f(s_2) = y] \cdot\left(\sum_{\substack{s \in \supp(\mathcal{D}):\\ f(s) = y}} \Pr_{\mathcal{D}}[s]\right)
         \\ &=\sum_{y \in f(\supp(\mathcal{D}))} \Pr_{s_2\samp D}[s_2 = x \big| f(s_2) = y] \cdot \Pr_{s \samp\mathcal{D} }[f(s) = y] = \Pr_{s_2 \samp \mathcal{D}}[s_2 = x]
    \end{align*}
    The first line is by the Law of Total probability and the second line is by the power series solution and the other lines should be self-evident.
\end{proof}

\begin{lemma}[Truncated Reverse Resampling Lemma]\label{lem:trunc}
 Let $\mathcal{D}$ be any distribution, $S$ be any set and $f:\supp(\mathcal{D})\to S$ be any function. Let $\eps > 0$ and $\acelimit = \lceil\frac{2\left(\log(4)n + \log(\frac{1}{\epsilon})\right)\cdot |\supp(\mathcal{D})|)}{\eps}\rceil$. Consider the distribution $\mathcal{D}'$ defined as follows.
    \begin{mdframed}
    {\bf \underline{$\mathcal{D}'$}}
    \begin{enumerate}[label=\arabic*.]
        \item Sample $s \samp \mathcal{D}$.
        \item Set $y = f(s)$.
        \item Set $cnt = 1$.
        \item Repeat the following as long as $cnt \leq \acelimit$: Sample $s' \samp \mathcal{D}$. If $f(s') = y$, output $s'$ and terminate, otherwise increase $cnt$ by one and continue.
        \item Output $\bot$ if not already terminated.
    \end{enumerate}
    \end{mdframed}
    Then, $\mathcal{D} \approx_\eps \mathcal{D}'$ and $\Pr_{\mathcal{D}'}[f(s) = y] \geq 1 - \eps$.
\end{lemma}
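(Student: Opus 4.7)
The plan is to derive the statement as a corollary of the preceding Infinite Reverse Resampling Lemma via a simple coupling argument. Let $\mathcal{D}'_\infty$ denote the infinite-loop distribution of that previous lemma, which we have already shown to be identically distributed to $\mathcal{D}$. I would couple $\mathcal{D}'$ and $\mathcal{D}'_\infty$ by using the same initial sample $s \samp \mathcal{D}$ (and hence the same $y = f(s)$) and by feeding both procedures the same stream of resampling draws $s'_1, s'_2, \ldots \samp \mathcal{D}$. Under this coupling, $\mathcal{D}'$ agrees with $\mathcal{D}'_\infty$ unless the first $\acelimit$ draws all miss $f^{-1}(y)$; call this event $E$. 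We then get $\mathrm{TV}(\mathcal{D}, \mathcal{D}') = \mathrm{TV}(\mathcal{D}'_\infty, \mathcal{D}') \leq \Pr[E]$ and $\Pr_{\mathcal{D}'}[f(s) = y] = 1 - \Pr[E]$, so both conclusions reduce to showing $\Pr[E] \leq \eps$.

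The second step is to bound $\Pr[E]$ directly. Writing $p_y = \Pr_{s' \samp \mathcal{D}}[f(s') = y]$ for $y \in f(\supp(\mathcal{D}))$, the tower rule gives
\begin{equation*}
\Pr[E] \;=\; \E_{s \samp \mathcal{D}}\!\left[(1 - p_{f(s)})^{\acelimit}\right] \;=\; \sum_{y \in f(\supp(\mathcal{D}))} p_y \cdot (1 - p_y)^{\acelimit}.
\end{equation*}
A routine calculus exercise (differentiate and set to zero) shows that $p \mapsto p(1-p)^{\acelimit}$ is maximized on $[0,1]$ at $p = 1/(\acelimit+1)$, and its maximum value is at most $1/(e\,\acelimit)$; so each summand is bounded by $1/(e\,\acelimit)$. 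Since $|f(\supp(\mathcal{D}))| \leq |\supp(\mathcal{D})|$, there are at most $|\supp(\mathcal{D})|$ nonzero summands, so
\begin{equation*}
\Pr[E] \;\leq\; \frac{|\supp(\mathcal{D})|}{e\,\acelimit}.
\end{equation*}
Plugging in the stated value $\acelimit = \lceil 2(\log(4)n + \log(1/\eps))\cdot|\supp(\mathcal{D})|/\eps \rceil$ gives $\Pr[E] \leq \eps/(2e(\log(4)n + \log(1/\eps))) \leq \eps$, as required.

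I do not expect any real obstacle; the coupling is immediate since both procedures draw independent fresh samples from $\mathcal{D}$ at each step, and the elementary inequality $p(1-p)^k \leq 1/(ek)$ is the only analytic fact needed. If anything, the only point to be careful about is that the TV-closeness and the $1 - \eps$ bound on the success probability must be derived from the \emph{same} event $E$ under the coupling, which makes both statements follow at once rather than requiring separate arguments.
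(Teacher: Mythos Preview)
Your argument is correct and follows the same skeleton as the paper: couple the truncated sampler with the infinite one from the preceding lemma, reduce both conclusions to bounding the failure event $E$, and then bound $\Pr[E]=\sum_y p_y(1-p_y)^{\acelimit}$. The difference is in that last step. The paper introduces a threshold $th=\eps/(2|\supp(\mathcal{D})|)$ and splits the sum into ``light'' terms (each contributing at most $th$) and ``heavy'' terms (where $(1-p)^{\acelimit}\le e^{-\acelimit\cdot th}$), then uses the specific value of $\acelimit$ to make both pieces at most $\eps/2$. You instead use the one-line calculus bound $p(1-p)^k\le 1/(ek)$ uniformly on every summand, getting $\Pr[E]\le |\supp(\mathcal{D})|/(e\,\acelimit)$ directly. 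Your route is cleaner and in fact shows the stated $\acelimit$ is generous by a factor of roughly $\log(4)n+\log(1/\eps)$; the paper's threshold trick is a more generic ``split into tail and bulk'' maneuver that does not rely on knowing the optimizer of $p(1-p)^k$, but it is messier and needs the extra logarithmic slack in $\acelimit$ to close. Either way the conclusion follows; your write-up also has the virtue of summing over $y\in f(\supp(\mathcal{D}))$ with weights $p_y$, which is the correct grouping, whereas the paper's displayed sum over $x\in\supp(\mathcal{D})$ with $(1-\Pr_{\mathcal{D}}[x])^{\acelimit}$ implicitly assumes $f$ is injective.
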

\begin{proof}
    Define $E$ to be the event that during execution of the sampler $\mathcal{D}'$, the preimage search succeeds, that is, the counter does not reach $\acelimit + 1$. We write \begin{align*}
       |\mathcal{D} - \mathcal{D}'|_1 &= \sum_{x \in \supp(\mathcal{D})} |\Pr_{D}[x] - \Pr_{D'}[x]| \\ &=
       \sum_{x \in \supp(\mathcal{D})} |\Pr_{D}[x] - \Pr_{x' \samp D'}[x' = x | E]\cdot\Pr_{\mathcal{D}'}[E] - \Pr_{x' \samp D'}[x' = x | \overline{E}]\cdot\Pr_{\mathcal{D}'}[\overline{E}]|
       \\ &=
       \sum_{x \in \supp(\mathcal{D})} |\Pr_{D}[x] - \Pr_{x' \samp D'}[x = x' | E]\cdot\Pr_{\mathcal{D}'}[E]|
       \\ &=
       \sum_{x \in \supp(\mathcal{D})} |\Pr_{D}[x] - \Pr_{x' \samp D}[x = x']\cdot\Pr_{\mathcal{D}'}[E]|
       \\ &=
       (1-\Pr_{\mathcal{D}'}[E])\cdot\sum_{x \in \supp(\mathcal{D})} \Pr_{D}[x]
    \end{align*}
    The third line is due to the fact that when $\overline{E}$ occurs, output is $\bot$. The fourth line is by \cref{lem:infrev}.

    Set $th = \frac{\eps}{2\cdot |\supp(\mathcal{D})|}$. Now we have \begin{align*}
        \Pr_{\mathcal{D}'}[\overline{E}] &\leq \sum_{x \in \supp(\mathcal{D})} \Pr_{\mathcal{D}}[x]\cdot (1-\Pr_{\mathcal{D}}[x])^{\acelimit} \\ &= \sum_{\substack{x \in \supp(\mathcal{D}):\\ \Pr_{\mathcal{D}}[x]  \leq th}} \Pr_{\mathcal{D}}[x]\cdot (1-\Pr_{\mathcal{D}}[x])^{\acelimit} + \sum_{\substack{x \in \supp(\mathcal{D}):\\ \Pr_{\mathcal{D}}[x]  > th}} \Pr_{\mathcal{D}}[x]\cdot (1-\Pr_{\mathcal{D}}[x])^{\acelimit}  \\ &\leq |\supp(\mathcal{D})|\cdot (th + e^{-\acelimit\cdot th}) \leq \eps
    \end{align*}
\end{proof}

Now we prove security through a sequence of hybrids, each of which is obtained by modifying the previous one. For simplicity we will only prove the case $\ell = 1$, but the general case follows similarly (given that obfuscated program sizes will depend on $\ell$). We first define our hybrids and then we will show their indistinguishability.

\paragraph{\underline{$\hyb_{0}$:}} The original game $\acesteg$.

\paragraph{\underline{$\hyb_{1}$:}} We modify the way the challenger computes the punctured key $ek'$ that will be submitted to the adversary: Instead of setting $ek = \ace.\genek(sk, C_1)$, the challenger now sets $ek' = (\obfd{P}_{enc}, seed)$ where \begin{itemize}
\item $K'_1 \samp \prf_1.\punc(K_1, m^*)$.
\item $\alpha^* = \prf_1.\ceval(K_1, m^*)$.
\item $\beta^* = \prf_1.\ceval(K_2, \alpha^*)$. (Note that $\ace.\enc(ek,m)$ is $\alpha^* || \beta^*$ during execution of $\stegenc$)
     \item $\obfd{P} \samp \io(P'_{enc})$ where $P'_{enc}$ is the following program.
    \begin{mdframed}
    {\bf \underline{$P'_{enc}(m)$}}
    
     {\bf Hardcoded: $\textcolor{red}{K_1'}, K_2, C_1$}
    \begin{enumerate}[label=\arabic*.]
        \item Check if $C_1(m) = \outtrue$. If so, output $\bot$ and terminate.
        \item Compute $\alpha = \prf_1.\ceval(\textcolor{red}{K_1'}, m)$.
        \item Compute $\beta = \prf_2.\ceval(K_2, \alpha) \oplus m$.
        \item Output $\alpha || \beta$.
    \end{enumerate}
    \end{mdframed}
\end{itemize}

\paragraph{\underline{$\hyb_{2}$:}} We modify the way the challenger computes the punctured key $dk'$ that will be submitted to the adversary: Instead of setting $dk = \ace.\gendk(sk, C_2)$, the challenger now sets $dk' = (\obfd{P}_{dec}, seed)$ where  $\obfd{P} \samp \io(P'_{dec})$ where $P'_{dec}$ is the following program.
   \begin{mdframed}
    {\bf \underline{$P'_{dec}(ct)$}}
    
     {\bf Hardcoded: $\textcolor{red}{K_1'}, K_2, C$}
    \begin{enumerate}[label=\arabic*.]
        \item Check if $C(m) = \outtrue$. If so, output $\bot$ and terminate.
        \item Parse $\alpha || \beta = ct$.
        \item Compute $m = \prf_2.\ceval(K_2, \alpha)\oplus \beta$.
        \item Check if $\alpha = \prf_1.\ceval(\textcolor{red}{K_1'}, m)$. If not, output $\bot$ and terminate.
        \item Output $m$.
    \end{enumerate}
    \end{mdframed}

\paragraph{\underline{$\hyb_{3}$:}} We now sample $\alpha^* \samp \zo^{3\acemeslen} \setminus \mathsf{Img}(\prf_1.\ceval(K_1, \cdot))$.

\paragraph{\underline{$\hyb_{4}$:}} When sampling $ek'$, we now sample $K_2' \samp \prf_2.\punc(K_2, \{\alpha^*, r_1^*\})$ where $r_1^* \samp \zo^{3\acemeslen}\setminus \mathsf{Img}(\prf_1.\ceval(K_1, \cdot))$ and use $K_2'$ instead of $K_2$.

\paragraph{\underline{$\hyb_{5}$:}}  We modify the way the challenger computes the punctured key $dk'$ that will be submitted to the adversary: Instead of setting $dk = \ace.\gendk(sk, C_2)$, the challenger now sets $dk' = (\obfd{P}_{dec}, seed)$ where  $\obfd{P} \samp \io(P''_{dec})$ where $P''_{dec}$ is the following program.
   \begin{mdframed}
    {\bf \underline{$P''_{dec}(ct)$}}
    
     {\bf Hardcoded: ${K_1'}, K_2, C,\textcolor{red}{\alpha^*, r_1^*}$}
    \begin{enumerate}[label=\arabic*.]
        \item Check if $C(m) = \outtrue$. If so, output $\bot$ and terminate.
        \item \textcolor{red}{Check if $\alpha = \alpha^*$ or $r_1^*$. If so, output $\bot$ and terminate.}
        \item Parse $\alpha || \beta = ct$.
        \item Compute $m = \prf_2.\ceval(K_2, \alpha)\oplus \beta$.
        \item Check if $\alpha = \prf_1.\ceval({K_1'}, m)$. If not, output $\bot$ and terminate.
        \item Output $m$.
    \end{enumerate}
    \end{mdframed}

    \paragraph{\underline{$\hyb_{6}$:}} When sampling $dk'$, we now sample $K_2' \samp \prf_2.\punc(K_2, \{\alpha^*, r_1^*\})$ where $r_1^* \samp \zo^{3\acemeslen}$ and use $K_2'$ instead of $K_2$.

    \paragraph{\underline{$\hyb_{7}$:}}  We now sample $\beta^* \samp \zo^{\acemeslen}$.

    \paragraph{\underline{$\hyb_{8}$:}}  We now sample $r_1^* \samp \zo^{3\acemeslen}$.
    
\begin{lemma}
    $\hyb_0\approx \hyb_1$.
\end{lemma}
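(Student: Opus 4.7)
The only difference between $\hyb_0$ and $\hyb_1$ is the classical circuit obfuscated inside the punctured encapsulation key $ek'$ handed to the adversary: in $\hyb_0$ it is the program $P$ from $\ace.\genek(sk, C_1)$, hardcoded with the full PRF key $K_1$, whereas in $\hyb_1$ it is $P'_{enc}$ hardcoded with $K_1' \samp \prf_1.\punc(K_1, m^*)$. The ancillary values $\alpha^*, \beta^*$ that $\hyb_1$ introduces are computed by the challenger using the unpunctured $K_1, K_2$ outside the obfuscation and are only used later to form the challenge ciphertext; they are not embedded in $ek'$ in this hybrid, so they do not affect the reduction that we set up here.

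The plan is to reduce directly to the security of $\io$. The only nontrivial obligation is to verify that $P$ and $P'_{enc}$ compute the same function on every input $m$. This I would check by a two-case analysis. If $C_1(m) = \outtrue$, then both programs output $\bot$ at step~1 and agree; this covers in particular $m = m^*$, since the security game guarantees $C_1(m^*) = \outtrue$. If $C_1(m) = \outfalse$, then $m \neq m^*$, so by the punctured-key correctness of $\prf_1$ we have $\prf_1.\ceval(K_1', m) = \prf_1.\ceval(K_1, m)$; the subsequent computation of $\beta = \prf_2.\ceval(K_2, \alpha) \oplus m$ is literally identical in the two programs, so the outputs $\alpha \| \beta$ coincide.

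Hence $P$ and $P'_{enc}$ are functionally equivalent (padded to the same size, as is implicit throughout the paper). To package the reduction in the form required by the $\io$ definition, I would have the sampler $\mathcal{B}(1^\lambda)$ output $(C_0, C_1, \regis{aux}) := (P, P'_{enc}, \regis{aux})$, where $\regis{aux}$ contains everything the challenger needs to finish simulating the experiment given an obfuscation of the selected circuit: the secret key $sk$ (for producing $dk = \ace.\gendk(sk, C_2)$ and the challenge ciphertexts via the unmodified $ek = \ace.\genek(sk, \outfalse)$), the seed, the adversary's state, the chosen messages, and the challenge bit $b$. Plugging $\io(1^\lambda, P)$ in for $ek'$ reproduces $\hyb_0$ exactly and plugging $\io(1^\lambda, P'_{enc})$ reproduces $\hyb_1$, so $\io$ security gives $\hyb_0 \approx_{\alpha_3} \hyb_1$.

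The only step requiring any care is the functional-equivalence check above, and that is essentially immediate from the game's assumption $C_1(m^*) = \outtrue$; no quantum or extractor-based argument is needed for this hybrid.
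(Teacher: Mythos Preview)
Your proposal is correct and matches the paper's own proof, which simply states that the two circuits are functionally equivalent by punctured-key correctness of the PRF and then invokes $\io$ security. You have just spelled out the case analysis (in particular, that $C_1(m^*)=\outtrue$ ensures the punctured point is never reached past step~1) and the form of the reduction in more detail than the paper does.
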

\begin{proof}
    The circuits in the two hybrids are equivalent by punctured key correctness of the PRF scheme. The result follows by $\io$ security.
\end{proof}

\begin{lemma}
    $\hyb_1\approx \hyb_2$.
\end{lemma}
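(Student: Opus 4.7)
The plan is to reduce $\hyb_1 \approx \hyb_2$ to the security of $\io$, by showing that the two decryption circuits the challenger obfuscates differ only in whether they hardcode the full PRF key $K_1$ or the punctured key $K_1' \samp \prf_1.\punc(K_1, m^*)$, and that this difference is invisible because the circuits are functionally equivalent.

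The heart of the argument is the functional-equivalence claim, which I would establish by fixing an arbitrary ciphertext $ct = \alpha \| \beta$ and tracing its execution through both circuits. Parsing and the computation $m = \prf_2.\ceval(K_2, \alpha) \oplus \beta$ depend only on $K_2$ and on the input, so they produce identical intermediate values. There are then two cases for the recovered plaintext. If $m \neq m^*$, punctured-key correctness of $\prf_1$ gives $\prf_1.\ceval(K_1', m) = \prf_1.\ceval(K_1, m)$, so the consistency check $\alpha \stackrel{?}{=} \prf_1.\ceval(\cdot, m)$ and the final output agree in both circuits. If instead $m = m^*$, then by the precondition in $\acesteg$ that $C_2(m_i) = \outtrue$ for every challenge message (in particular $m^*$), the guard $C_2(m) \stackrel{?}{=} \outtrue$ triggers and both circuits output $\bot$ before $K_1$ or $K_1'$ is ever consulted. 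Hence the two circuits agree on every input.

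Given functional equivalence, $\io$ security delivers $\hyb_1 \approx \hyb_2$ immediately: the reduction packages the two decryption circuits (padded to the same size, as is standard) together with the rest of the challenger's state — the encapsulation key $ek'$ from $\hyb_1$, which is unchanged across the two hybrids, the challenge ciphertexts or fresh samples, and the bit $b$ — as the auxiliary information consumed by the $\io$ adversary, and forwards the resulting guess to $\adve$.

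I do not expect a significant obstacle. The only nontrivial observation is that the $C_2$-guard in the decryption program catches exactly the single input $m = m^*$ at which $K_1$ and $K_1'$ might disagree, precisely the interface that puncture-hiding of $\ace$ is designed to exploit; everything else is a direct invocation of $\io$ security.
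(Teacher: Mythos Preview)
Your proposal is correct and follows the same approach as the paper: show that the two decryption circuits are functionally equivalent and then invoke $\io$ security. The paper's own proof is a one-line sketch (``equivalent by punctured key correctness of the PRF scheme''), whereas you supply the case analysis that makes this precise---in particular the observation that when the recovered plaintext equals $m^*$, the guard $C_2(m^*)=\outtrue$ forces both circuits to output $\bot$ before the PRF is ever consulted, which is exactly the point at which punctured-key correctness alone would not suffice.
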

\begin{proof}
    The circuits in the two hybrids are equivalent by punctured key correctness of the PRF scheme. The result follows by $\io$ security.
\end{proof}

\begin{lemma}
    $\hyb_2\approx_{2^{-2\acemeslen}} \hyb_3$.
\end{lemma}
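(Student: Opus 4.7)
The plan is to introduce an intermediate hybrid $\hyb_{2.5}$ in which $\alpha^*$ is sampled uniformly at random from $\zo^{3\acemeslen}$, and bound each of the two transitions separately. This splits the argument into a computational step (replacing the value of the PRF at the punctured point by a uniform string) and a purely statistical step (switching from uniform on $\zo^{3\acemeslen}$ to uniform on the complement of the PRF image).

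For $\hyb_2 \approx \hyb_{2.5}$, the key observation is that, after the changes made in $\hyb_1$ and $\hyb_2$, the only place where the unpunctured key $K_1$ enters the adversary's view is through the hardwired value $\alpha^* = \prf_1.\ceval(K_1, m^*)$ used in $\stegenc$ and (implicitly) in the validity check of the decryption program. Both $\obfd{P}_{enc}$ and $\obfd{P}_{dec}$ hardcode only the punctured key $K_1' \samp \prf_1.\punc(K_1, m^*)$. Puncturing security of $\prf_1$ therefore implies that, given $K_1'$, the value $\alpha^*$ is computationally indistinguishable from a fresh uniform sample in $\zo^{3\acemeslen}$; hence $\hyb_2$ and $\hyb_{2.5}$ are indistinguishable up to a negligible (in fact, subexponentially small, if we instantiate $\prf_1$ with the appropriate security parameter) advantage that is dominated by the statistical term below.

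For $\hyb_{2.5} \approx_{2^{-2\acemeslen}} \hyb_3$, I would use a purely statistical argument. Since the domain of $\prf_1.\ceval(K_1,\cdot)$ has size at most $2^{\acemeslen}$, its image has cardinality at most $2^{\acemeslen}$; the codomain $\zo^{3\acemeslen}$ has size $2^{3\acemeslen}$. Consequently, a uniform sample from $\zo^{3\acemeslen}$ lies in $\mathsf{Img}(\prf_1.\ceval(K_1,\cdot))$ with probability at most $2^{\acemeslen}/2^{3\acemeslen} = 2^{-2\acemeslen}$, and conditioned on the complementary event the sample is distributed exactly as a uniform draw from $\zo^{3\acemeslen}\setminus\mathsf{Img}(\prf_1.\ceval(K_1,\cdot))$. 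A standard coupling then gives total variation distance at most $2^{-2\acemeslen}$ between the two distributions of $\alpha^*$, and this distance transfers directly to a distinguishing bound between $\hyb_{2.5}$ and $\hyb_3$ since $\alpha^*$ is the only object whose sampling differs.

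I do not expect a real obstacle here: the only mildly delicate point is to make sure that when invoking PRF puncturing security we correctly model all appearances of $K_1$ in the adversary's view (the two obfuscated programs and the hardwired $\alpha^*$, $\beta^*$ inside $\stegenc$), but after the rewrites in $\hyb_1$ and $\hyb_2$ these are all phrased in terms of $K_1'$ plus $\alpha^*$, which is exactly the shape demanded by puncturing security. The statistical step then cleanly delivers the claimed $2^{-2\acemeslen}$ bound.
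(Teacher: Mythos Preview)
Your proposal is correct and takes essentially the same approach as the paper: the paper's one-line proof invokes exactly the two ingredients you use, namely punctured-key security of $\prf_1$ (your $\hyb_2 \to \hyb_{2.5}$ step) and the cardinality bound $|\mathsf{Img}(\prf_1.\ceval(K_1,\cdot))| \leq 2^{\acemeslen}$ against a codomain of size $2^{3\acemeslen}$ (your $\hyb_{2.5} \to \hyb_3$ step). Your explicit intermediate hybrid just unpacks what the paper leaves implicit.
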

\begin{proof}
 Follows by punctured key security of the PRF scheme  and the fact that the image set of $\prf_1.\ceval_1(K_1,\cdot)$ has size $2^{\acemeslen}$.
\end{proof}

\begin{lemma}
    $\hyb_3\approx \hyb_4$.
\end{lemma}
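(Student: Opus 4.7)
The plan is to reduce $\hyb_3 \approx \hyb_4$ directly to iO security by showing that the two versions of the obfuscated program $P'_{enc}$ are functionally equivalent. The only syntactic change between the two hybrids is localized to the PRF key hardcoded inside $P'_{enc}$: in $\hyb_3$ it is the full $K_2$, while in $\hyb_4$ it is $K_2' \samp \prf_2.\punc(K_2, \{\alpha^*, r_1^*\})$ with $r_1^*$ freshly sampled from $\zo^{3\acemeslen} \setminus \mathsf{Img}(\prf_1.\ceval(K_1, \cdot))$. Nothing else in the experiment that depends on $K_2$ is touched (e.g., the externally computed $\beta^* = \prf_2.\ceval(K_2, \alpha^*)$ stays as-is, and the decryption obfuscation in $dk'$ still uses the full $K_2$), so no parallel changes are needed.

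The main step is to verify that $\prf_2$ is never evaluated on either $\alpha^*$ or $r_1^*$ during any execution of $P'_{enc}$. On an input $m$ with $C_1(m) = \outtrue$ the program aborts before reaching the $\prf_2$ call, so in particular it aborts on $m^*$ (recall $C_1(m^*) = \outtrue$ by the game setup). For any surviving input $m$, the program computes $\alpha = \prf_1.\ceval(K_1', m)$, and since $m \ne m^*$, punctured key correctness of $\prf_1$ gives $\alpha = \prf_1.\ceval(K_1, m) \in \mathsf{Img}(\prf_1.\ceval(K_1, \cdot))$. By construction, both $\alpha^*$ and $r_1^*$ lie outside this image, so $\alpha \notin \{\alpha^*, r_1^*\}$. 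Punctured key correctness of $\prf_2$ then yields $\prf_2.\ceval(K_2', \alpha) = \prf_2.\ceval(K_2, \alpha)$ on every reachable point, and the two programs agree on every input.

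Once this functional equivalence is in hand, iO security closes the gap $\hyb_3 \approx \hyb_4$ with negligible advantage. There is no substantial obstacle; the only subtlety worth flagging is that the image-avoidance invariant on $\alpha^*$ and $r_1^*$ is doing all the work, and this is precisely what the restricted sampling sets in $\hyb_3$ (for $\alpha^*$) and $\hyb_4$ (for $r_1^*$) were designed to preserve.
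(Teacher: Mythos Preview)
Your proposal is correct and matches the paper's approach exactly: the paper's proof is the single line ``Follows by PRF punctured key correctness and $\io$ security,'' and your argument spells out precisely why these two ingredients suffice, via the image-avoidance invariant on $\alpha^*$ and $r_1^*$ together with $C_1(m^*)=\outtrue$.
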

\begin{proof}
    Follows by PRF punctured key correctness and $\io$ security.
\end{proof}

\begin{lemma}
    $\hyb_4\approx \hyb_5$.
\end{lemma}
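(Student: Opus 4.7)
The plan is to reduce $\hyb_4 \approx \hyb_5$ to $\io$ security by arguing that the only modified object, the decryption circuit, computes the same function in both hybrids. Concretely, $P'_{dec}$ and $P''_{dec}$ differ only in that $P''_{dec}$ inserts an early check that outputs $\bot$ whenever the $\alpha$-component of the input equals $\alpha^*$ or $r_1^*$. So the crux of the argument is to show that $P'_{dec}$ already outputs $\bot$ on every such input, making the circuits functionally equivalent; standard padding and $\io$ security then yield the claim.

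To verify this, fix an input $ct = \alpha \| \beta$ with $\alpha \in \{\alpha^*, r_1^*\}$. By the sampling rules introduced in $\hyb_3$ and $\hyb_4$, both $\alpha^*$ and $r_1^*$ are drawn from $\zo^{3\acemeslen} \setminus \mathsf{Img}(\prf_1.\ceval(K_1, \cdot))$, so in particular $\alpha \notin \mathsf{Img}(\prf_1.\ceval(K_1, \cdot))$. Let $m = \prf_2.\ceval(K_2, \alpha) \oplus \beta$ be the message that $P'_{dec}$ recovers. I would split into two cases. First, if $m \ne m^*$, then punctured-key correctness of $\prf_1$ gives $\prf_1.\ceval(K_1', m) = \prf_1.\ceval(K_1, m)$, which lies in the image of $\prf_1.\ceval(K_1, \cdot)$ and is therefore distinct from $\alpha$; the verification check in step 5 of $P'_{dec}$ fails and the program outputs $\bot$. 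Second, if $m = m^*$, the adversary's constraint $C_2(m^*) = \outtrue$ ensures that the $C(m) = \outtrue$ check in $P'_{dec}$ triggers and again produces $\bot$. In either case $P'_{dec}$ agrees with $P''_{dec}$ on this input, and on all inputs with $\alpha \notin \{\alpha^*, r_1^*\}$ they agree trivially.

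The only potential obstacle is what happens at the punctured point: punctured-key correctness gives no guarantee on the value of $\prf_1.\ceval(K_1', m^*)$, which is why a naive direct equivalence between a circuit using $K_1$ and one using $K_1'$ could fail. This is precisely what the earlier hybrid transition to $P'_{dec}$ already handled via the $C_2(m^*) = \outtrue$ constraint, and the same constraint rescues us here in the $m = m^*$ branch, independently of what $\prf_1.\ceval(K_1', m^*)$ happens to be. Once functional equivalence is in hand, a single invocation of $\io$ security (with the two programs appropriately padded to equal size, as implicitly assumed throughout the proof) closes the gap.
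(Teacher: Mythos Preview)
Your proposal is correct and follows the same approach as the paper: the paper's proof is simply ``Follows by $\io$ security,'' and your argument supplies precisely the functional-equivalence verification between $P'_{dec}$ and $P''_{dec}$ that this invocation requires. Your case split on $m \ne m^*$ (using punctured-key correctness and $\alpha^*, r_1^* \notin \mathsf{Img}(\prf_1.\ceval(K_1,\cdot))$) versus $m = m^*$ (using the game constraint $C_2(m^*) = \outtrue$) is exactly the right justification.
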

\begin{proof}
    Follows by $\io$ security.
\end{proof}

\begin{lemma}
    $\hyb_5\approx \hyb_6$.
\end{lemma}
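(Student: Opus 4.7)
The plan is to reduce $\hyb_5 \approx \hyb_6$ directly to $\io$ security, arguing that the only change between the two hybrids—replacing the full PRF key $K_2$ inside the obfuscated decapsulation program $\obfd{P}_{dec}$ by the punctured key $K_2' \samp \prf_2.\punc(K_2, \{\alpha^*, r_1^*\})$—is a swap between two circuits that compute identical functions on every input.

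The critical observation is that $P''_{dec}$ was introduced in $\hyb_5$ precisely to short-circuit on the two points that will be punctured in $\hyb_6$: after parsing $\alpha \| \beta = ct$, the program explicitly tests whether $\alpha \in \{\alpha^*, r_1^*\}$ and outputs $\bot$ if so, \emph{before} ever invoking $\prf_2.\ceval(K_2, \alpha)$. Hence every execution path that reaches the PRF evaluation does so with $\alpha$ outside the punctured set, and punctured key correctness of $\prf_2$ gives $\prf_2.\ceval(K_2, \alpha) = \prf_2.\ceval(K_2', \alpha)$ for all such $\alpha$. On the rejected inputs, both variants output $\bot$ regardless of which key is hardwired. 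Thus the two versions of $P''_{dec}$, hardcoding $K_2$ and $K_2'$ respectively, are functionally identical, and we pad them to a common size in the usual way.

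Applying the $\io$ security guarantee to this pair of functionally equivalent circuits yields computational indistinguishability of the two obfuscations $\obfd{P}_{dec}$, and nothing else in the hybrids changes (the encapsulation key $ek'$, the seed, the challenge ciphertexts, and the distribution of $r_1^*$ are all identical). This pattern is exactly the same as the earlier transitions $\hyb_3 \to \hyb_4$ and $\hyb_4 \to \hyb_5$: we first insert a gate that dominates any behavior difference at the punctured points (done in $\hyb_5$), then swap in the punctured key (done now). I do not anticipate any substantive obstacle; the only care needed is the bookkeeping to confirm that the early-exit check really does cover \emph{every} use of $K_2$ inside $P''_{dec}$, which is immediate from the code of $P''_{dec}$ as written.
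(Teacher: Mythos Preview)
Your proposal is correct and matches the paper's own proof, which simply states ``Follows by PRF punctured key correctness and $\io$ security.'' You have spelled out exactly the intended argument: the early-exit check on $\alpha \in \{\alpha^*, r_1^*\}$ inserted in $\hyb_5$ ensures $\prf_2$ is never evaluated at a punctured point, so punctured-key correctness makes the two versions of $P''_{dec}$ functionally identical, and $\io$ security finishes the job.
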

\begin{proof}
    Follows by PRF punctured key correctness and $\io$ security.
\end{proof}

\begin{lemma}
    $\hyb_6\approx \hyb_7$.
\end{lemma}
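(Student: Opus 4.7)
The plan is to reduce this hybrid step to the puncturing security of $\prf_2$ at the point $\alpha^*$. Inspecting $\hyb_6$, one sees that the unpunctured key $K_2$ now enters the experiment in essentially one place: the computation of $\beta^*$, which is derived from the single PRF value $\prf_2.\ceval(K_2, \alpha^*)$ and contributes to the target ciphertext $\alpha^* || \beta^*$ that steganographic encryption attempts to hit via the extractor. Every other occurrence of $K_2$ has been systematically replaced in $\hyb_4, \hyb_5, \hyb_6$ by the punctured key $K_2' \samp \prf_2.\punc(K_2, \{\alpha^*, r_1^*\})$, which is embedded into both $\obfd{P}_{enc}$ (inside $ek'$) and $\obfd{P}_{dec}$ (inside $dk'$). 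Thus the adversary's view depends on $K_2$ only through $K_2'$ together with the one value $\prf_2.\ceval(K_2, \alpha^*)$.

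I construct a reduction $\mathcal{B}$ to the puncturing security of $\prf_2$. Given the adversary's declarations $(C_1, C_2, (m_i)_i, \mathcal{D})$, the reduction samples $K_1$ and $seed$ itself, computes $\alpha^*$ using the recipe of $\hyb_3$ (using $K_1$ to avoid the image of $\prf_1.\ceval(K_1,\cdot)$), samples the decoy $r_1^* \samp \zo^{3\acemeslen}$, and submits the puncture set $S = \{\alpha^*, r_1^*\}$ to its $\prf_2$ challenger. The challenger returns a punctured key $K_2'$ together with a pair $(y_{\alpha^*}, y_{r_1^*})$ which is either the true evaluations $(\prf_2.\ceval(K_2, \alpha^*), \prf_2.\ceval(K_2, r_1^*))$ or two uniformly random strings in $\zo^{\acemeslen}$. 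The reduction embeds $K_2'$ into $\obfd{P}_{enc}$ and $\obfd{P}_{dec}$ exactly as prescribed by $\hyb_6$, computes $\beta^*$ from $y_{\alpha^*}$ in place of $\prf_2.\ceval(K_2, \alpha^*)$, discards $y_{r_1^*}$ (a decoy needed only in later hybrids), finishes the simulation, and returns the distinguisher's output.

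When the challenger returned genuine PRF evaluations, the simulation is distributed identically to $\hyb_6$. When it returned uniform strings, $y_{\alpha^*}$ is uniform on $\zo^{\acemeslen}$, and so is the derived $\beta^*$ (XOR of a uniform string with any fixed value is uniform), matching $\hyb_7$. Hence any distinguishing advantage between the two hybrids equals the advantage of $\mathcal{B}$ in the $\prf_2$ puncturing game, which is negligible by assumption. The only point requiring care is verifying that no other part of $\hyb_6$ depends on the unpunctured $K_2$, which is immediate by inspection of the hybrid definitions.
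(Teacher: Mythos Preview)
Your proposal is correct and follows the same approach as the paper, which simply records ``Follows by PRF punctured key security.'' Your reduction to the puncturing security of $\prf_2$ at the set $\{\alpha^*, r_1^*\}$ is exactly what is intended: by $\hyb_6$ both obfuscated programs use the punctured key $K_2'$, so the only remaining dependence on the full key $K_2$ is through $\prf_2.\ceval(K_2,\alpha^*)$ in the computation of $\beta^*$, and swapping that value for uniform yields $\hyb_7$.
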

\begin{proof}
    Follows by PRF punctured key security.
\end{proof}
\begin{lemma}
    $\hyb_7\approx_{2^{-2\acemeslen}} \hyb_8$.
\end{lemma}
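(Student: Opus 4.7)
The plan is a purely statistical argument; no computational reduction is needed since $\hyb_7$ and $\hyb_8$ differ only in how the single value $r_1^*$ is sampled, and $r_1^*$ is drawn independently of all other randomness in the experiment. Specifically, in $\hyb_7$ we have $r_1^* \samp \zo^{3\acemeslen}\setminus \mathsf{Img}(\prf_1.\ceval(K_1,\cdot))$, while in $\hyb_8$ we have $r_1^* \samp \zo^{3\acemeslen}$.

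The key observation is that $\prf_1$ has input space $\zo^{\acemeslen}$ and output space $\zo^{3\acemeslen}$, so $|\mathsf{Img}(\prf_1.\ceval(K_1,\cdot))| \leq 2^{\acemeslen}$. I would then invoke the following standard fact: for any $T \subseteq \zo^{3\acemeslen}$ with $|T| = s$, the total variation distance between the uniform distribution on $\zo^{3\acemeslen}$ and the uniform distribution on $\zo^{3\acemeslen}\setminus T$ equals
\begin{equation*}
\frac{1}{2}\left(s\cdot 2^{-3\acemeslen} + (2^{3\acemeslen}-s)\left(\frac{1}{2^{3\acemeslen}-s}-2^{-3\acemeslen}\right)\right) = s\cdot 2^{-3\acemeslen}.
\end{equation*}
Plugging in $s \leq 2^{\acemeslen}$ yields a bound of $2^{-2\acemeslen}$ on the distance between the two sampling distributions for $r_1^*$.

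Since the two hybrids are identical apart from this single independent sample, the data-processing inequality (equivalently, the fact that the statistical distance of the joint distributions is bounded by that of the varying marginals when everything downstream is a randomized function of it) gives $\hyb_7 \approx_{2^{-2\acemeslen}} \hyb_8$. There is no real obstacle here — this step is the statistical analogue of the earlier step $\hyb_2 \approx_{2^{-2\acemeslen}} \hyb_3$, which used the same image-size bound combined with PRF punctured key security rather than a direct counting argument.
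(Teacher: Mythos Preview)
Your proof is correct and is precisely the ``elementary probability calculation'' the paper invokes: the only difference between the hybrids is whether $r_1^*$ is drawn uniformly from $\zo^{3\acemeslen}$ or from $\zo^{3\acemeslen}\setminus\mathsf{Img}(\prf_1.\ceval(K_1,\cdot))$, and since the image has size at most $2^{\acemeslen}$ the statistical distance is at most $2^{-2\acemeslen}$. Your computation of the total variation distance and the appeal to data processing are exactly what is needed.
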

\begin{proof}
    Immediate by an elementary probability calculation.
\end{proof}

The above argument shows that in the experiment $\acesteg$, during the computation of $\ace.\stegenc$, we can replace the value $ict = \ace.\enc(ek, m)$ with a truly random value $r$. Then, we do the following argument. We replace $r$ with $\ext(seed, s^*)$ where $s^* \samp \mathcal{D}$, and these hybrids are indistinguishable by extractor security. Finally, we can replace the outcome of $\ace.\stegenc$ with a true sample from $s''$ from $\mathcal{D}$ by \cref{lem:trunc}.

\section{Acknowledgments}
Part of this work done was done while A\c{C} was an intern at NTT Research and part of this work was done while A\c{C} was being supported by the following grants of VG: NSF award 1916939, DARPA SIEVE program, a gift from Ripple, a DoE NETL award, a JP Morgan Faculty Fellowship, a PNC center for financial services innovation award, and a Cylab seed funding award.

\appendix
\bibliographystyle{alpha}
\bibliography{abbrev0,crypto,additional}

\newcommand{\etalchar}[1]{$^{#1}$}
\begin{thebibliography}{{\c C}GLZR24}

\bibitem[Aar09]{Aar09}
Scott Aaronson.
\newblock Quantum copy-protection and quantum money.
\newblock In {\em 2009 24th Annual IEEE Conference on Computational Complexity}, pages 229--242, 2009.

\bibitem[Aar16]{aarlemma}
Scott Aaronson.
\newblock The complexity of quantum states and transformations: From quantum money to black holes, 2016.

\bibitem[AB24]{C:AnaBeh24}
Prabhanjan Ananth and Amit Behera.
\newblock A modular approach to unclonable cryptography.
\newblock In Leonid Reyzin and Douglas Stebila, editors, {\em Advances in Cryptology -- {CRYPTO}~2024, Part~VII}, volume 14926 of {\em Lecture Notes in Computer Science}, pages 3--37, Santa Barbara, CA, USA, August~18--22, 2024. Springer, Cham, Switzerland.

\bibitem[AL21]{EC:AnaLaP21}
Prabhanjan Ananth and Rolando~L. {La Placa}.
\newblock Secure software leasing.
\newblock In Anne Canteaut and Fran\c{c}ois-Xavier Standaert, editors, {\em Advances in Cryptology -- {EUROCRYPT}~2021, Part~II}, volume 12697 of {\em Lecture Notes in Computer Science}, pages 501--530, Zagreb, Croatia, October~17--21, 2021. Springer, Cham, Switzerland.

\bibitem[ALL{\etalchar{+}}21]{C:ALLZZ21}
Scott Aaronson, Jiahui Liu, Qipeng Liu, Mark Zhandry, and Ruizhe Zhang.
\newblock New approaches for quantum copy-protection.
\newblock In Tal Malkin and Chris Peikert, editors, {\em Advances in Cryptology -- {CRYPTO}~2021, Part~I}, volume 12825 of {\em Lecture Notes in Computer Science}, pages 526--555, Virtual Event, August~16--20, 2021. Springer, Cham, Switzerland.

\bibitem[{\c C}G24a]{TCC:CakGoy24}
Alper {\c C}akan and Vipul Goyal.
\newblock Unclonable cryptography with unbounded collusions and impossibility of hyperefficient shadow tomography.
\newblock In Elette Boyle and Mohammad Mahmoody, editors, {\em TCC~2024: 22nd Theory of Cryptography Conference, Part~III}, volume 15366 of {\em Lecture Notes in Computer Science}, pages 225--256, Milan, Italy, December~2--6, 2024. Springer, Cham, Switzerland.

\bibitem[CG24b]{STOC:ColGun24}
Andrea Coladangelo and Sam Gunn.
\newblock How to use quantum indistinguishability obfuscation.
\newblock In Bojan Mohar, Igor Shinkar, and Ryan {O'Donnell}, editors, {\em 56th Annual {ACM} Symposium on Theory of Computing}, pages 1003--1008, Vancouver, BC, Canada, June~24--28, 2024. {ACM} Press.

\bibitem[{\c{C}}G25]{ccakan2025copy}
Alper {\c{C}}akan and Vipul Goyal.
\newblock How to copy-protect all puncturable functionalities without conjectures: A unified solution to quantum protection.
\newblock {\em Cryptology ePrint Archive}, 2025.

\bibitem[{\c C}GLZR24]{TCC:CGLR24}
Alper {\c C}akan, Vipul Goyal, Chen-Da Liu-Zhang, and Jo{\~a}o Ribeiro.
\newblock Unbounded leakage-resilience and intrusion-detection in a quantum world.
\newblock In Elette Boyle and Mohammad Mahmoody, editors, {\em TCC~2024: 22nd Theory of Cryptography Conference, Part~II}, volume 15365 of {\em Lecture Notes in Computer Science}, pages 159--191, Milan, Italy, December~2--6, 2024. Springer, Cham, Switzerland.

\bibitem[CHJV15]{STOC:CHJV15}
Ran Canetti, Justin Holmgren, Abhishek Jain, and Vinod Vaikuntanathan.
\newblock Succinct garbling and indistinguishability obfuscation for {RAM} programs.
\newblock In Rocco~A. Servedio and Ronitt Rubinfeld, editors, {\em 47th Annual {ACM} Symposium on Theory of Computing}, pages 429--437, Portland, OR, USA, June~14--17, 2015. {ACM} Press.

\bibitem[CLLZ21]{C:CLLZ21}
Andrea Coladangelo, Jiahui Liu, Qipeng Liu, and Mark Zhandry.
\newblock Hidden cosets and applications to unclonable cryptography.
\newblock In Tal Malkin and Chris Peikert, editors, {\em Advances in Cryptology -- {CRYPTO}~2021, Part~I}, volume 12825 of {\em Lecture Notes in Computer Science}, pages 556--584, Virtual Event, August~16--20, 2021. Springer, Cham, Switzerland.

\bibitem[DPVR12]{anindya}
Anindya De, Christopher Portmann, Thomas Vidick, and Renato Renner.
\newblock Trevisan's extractor in the presence of quantum side information.
\newblock {\em SIAM Journal on Computing}, 41(4):915--940, 2012.

\bibitem[NC10]{Nielsen_Chuang_2010}
Michael~A. Nielsen and Isaac~L. Chuang.
\newblock {\em Quantum Computation and Quantum Information: 10th Anniversary Edition}.
\newblock Cambridge University Press, 2010.

\bibitem[SW14]{STOC:SahWat14}
Amit Sahai and Brent Waters.
\newblock How to use indistinguishability obfuscation: deniable encryption, and more.
\newblock In David~B. Shmoys, editor, {\em 46th Annual {ACM} Symposium on Theory of Computing}, pages 475--484, New York, NY, USA, May~31~--~June~3, 2014. {ACM} Press.

\bibitem[Wie83]{Wie83}
Stephen Wiesner.
\newblock Conjugate coding.
\newblock {\em SIGACT News}, 15(1):78–88, jan 1983.

\bibitem[Zha12]{FOCS:Zhandry12}
Mark Zhandry.
\newblock How to construct quantum random functions.
\newblock In {\em 53rd Annual Symposium on Foundations of Computer Science}, pages 679--687, New Brunswick, NJ, USA, October~20--23, 2012. {IEEE} Computer Society Press.

\bibitem[Zha20]{TCC:Zhandry20}
Mark Zhandry.
\newblock {Schr{\"o}dinger}'s pirate: How to trace a quantum decoder.
\newblock In Rafael Pass and Krzysztof Pietrzak, editors, {\em TCC~2020: 18th Theory of Cryptography Conference, Part~III}, volume 12552 of {\em Lecture Notes in Computer Science}, pages 61--91, Durham, NC, USA, November~16--19, 2020. Springer, Cham, Switzerland.

\end{thebibliography}

\end{document}